\def\dOi{10(2:7)2014}
\definecolor{firebrick}{RGB}{175,25,25}
\theoremstyle{plain}
\newtheorem{theorem}[thm]{Theorem} 
\newtheorem{corollary}[thm]{Corollary}
\newtheorem{lemma}[thm]{Lemma}
\newtheorem{proposition}[thm]{Proposition}
\theoremstyle{definition}
\newtheorem{definition}[thm]{Definition}
\newtheorem{remark}[thm]{Remark}
\newtheorem{example}[thm]{Example}
\renewcommand{\paragraph}[1]{\textit{#1}}
\colorlet{dblue}{blue!75!black}
\colorlet{fblue}{blue!90}
\newcommand{\oblue}{.5}
\colorlet{dgreen}{green!50!black}
\colorlet{fgreen}{green!45}
\newcommand{\ogreen}{.7}
\def\moverlay{\mathpalette\mov@rlay}
\def\mov@rlay#1#2{\leavevmode\vtop{%
   \baselineskip\z@skip \lineskiplimit-\maxdimen
   \ialign{\hfil$#1##$\hfil\cr#2\crcr}}}
\theoremstyle{plain}
\theoremstyle{definition}
\newtheorem{question}[thm]{Question}
\newcommand{\SN}{\mrm{SN}}
\newcommand{\WN}{\mrm{WN}}
\newcommand{\UNred}[1]{\mrm{UN}_{#1}}
\newcommand{\UN}{\mrm{UN}}
\newcommand{\CR}{\mrm{CR}}
\newcommand{\CONF}{\mrm{CF}}
\newcommand{\NF}{\mrm{NF}}
\newcommand{\NFred}[1]{\mrm{NF}_{#1}}
\newcommand{\SNinf}{\SN^{\infinity}}
\newcommand{\WNinf}{\WN^{\infinity}}
\newcommand{\UNredinf}[1]{\UNred{#1}^{\infinity}}
\newcommand{\UNinf}{\UN^{\infinity}}
\newcommand{\CRinf}{\CR^{\infinity}}
\newcommand{\CONFinf}{\CONF^{\infinity}}
\newcommand{\NFinf}{\NF^{\infinity}}
\newcommand{\NFredinf}[1]{\NFred{#1}^{\infinity}}
\newcommand{\mit}{\mathit}
\newcommand{\mrm}{\mathrm}
\newcommand{\mcl}{\mathcal}
\newcommand{\msf}{\mathsf}
\newcommand{\mbb}{\mathbb}
\newcommand{\infinity}{\infty}
\newcommand{\thsp}{-1.87ex}
\newcommand{\threeheadrightarrow}{{\twoheadrightarrow\hspace*\thsp\twoheadrightarrow}}
\newcommand{\threeheadleftarrow}{{\twoheadleftarrow\hspace*\thsp\twoheadleftarrow}}
\numberwithin{equation}{section}
\newcommand{\nix}{}
\newcommand{\binap}[3]{#2\mathbin{#1}#3}
\newcommand{\funap}[2]{#1(#2)}
\newcommand{\fap}{\funap}
\newcommand{\super}[2]{#1^{#2}}
\newcommand{\bfunap}[3]{\funap{#1}{#2,#3}}
\newcommand{\tfunap}[4]{\bfunap{#1}{#2}{#3,#4}}
\newcommand{\relap}[3]{#2\mathrel{#1}#3}
\newcommand{\relcomp}[2]{#1\cdot#2}
\newcommand{\mylam}[2]{\lambda#1.#2}
\newcommand{\lsubst}[3]{#1[#2{:=}#3]}
\newcommand{\pairlft}{{\langle}}                
\newcommand{\pairrgt}{{\rangle}}                
\newcommand{\pairsep}{{,\,}}                    
\newcommand{\pairstr}[1]{\pairlft#1\pairrgt}    
\newcommand{\pair}[2]{\pairstr{#1\pairsep#2}}   
\newcommand{\sdom}{\mit{dom}}
\newcommand{\dom}{\funap{\sdom}}
\newcommand{\sred}{{\rightarrow}}
\newcommand{\red}{\mathrel{\sred}}
\newcommand{\sredi}{{\leftarrow}}
\newcommand{\redi}{\mathrel{\sredi}}
\newcommand{\smred}{{\twoheadrightarrow}}
\newcommand{\mred}{\mathrel{\smred}}
\newcommand{\sired}{\threeheadrightarrow}
\newcommand{\ired}{\mathrel{\sired}}
\newcommand{\infred}{\ired}
\newcommand{\siredi}{{\threeheadleftarrow}}
\newcommand{\iredi}{\mathrel{\siredi}}
\newcommand{\siconvR}[1]{{=^{\infinity}_{#1}}}
\newcommand{\iconvR}[1]{\mathrel{\siconvR{#1}}}
\newcommand{\spred}{\wash{\,\,\,\,\tikz\draw (0,.5ex) -- (0,-.6ex) (.3ex,.5ex) -- (.3ex,-.6ex);}\mathord{\longrightarrow}}
\newcommand{\pred}{\mathrel{\spred}}
\newcommand{\pstep}[3]{#1\funin #2\pred #3}
\newcommand{\aset}{A}
\newcommand{\setemp}{{\varnothing}}
\renewcommand{\emptyset}{\setemp}
\newcommand{\ssetin}{{\in}}
\newcommand{\setin}{\relap{\ssetin}}
\newcommand{\ssetle}{{\subseteq}}
\newcommand{\setle}{\relap{\ssetle}}
\newcommand{\setwhr}{\mathrel{|}}
\newcommand{\where}{\setwhr}
\newcommand{\sfunin}{{:}}
\newcommand{\funin}{\mathrel{\sfunin}}
\newcommand{\aseq}{\kappa}
\newcommand{\bseq}{\xi}
\newcommand{\seqpref}[2]{#1_{\le #2}}
\newcommand{\seqsuf}[2]{#1_{\ge #2}}
\newcommand{\wordemp}{\varepsilon}
\newcommand{\lstlength}[1]{|#1|}
\newcommand{\nat}{\mbb{N}}
\newcommand{\integers}{\mathbb{Z}}
\newcommand{\ints}{\integers}
\newcommand{\asig}{\Sigma}
\newcommand{\asetofrules}{R}
\newcommand{\avars}{\mathcal{X}}
\newcommand{\ster}{\mit{Ter}}
\newcommand{\ter}{\funap{\ster}}
\newcommand{\svar}{\mit{Var}}
\newcommand{\vars}{\funap{\svar}}
\newcommand{\atrs}{\mcl{R}}
\newcommand{\subtrmat}[2]{#1|_{#2}}
\newcommand{\repsubtrmat}[3]{#1[#2]_{#3}}
\newcommand{\posemp}{\epsilon}
\newcommand{\apos}{p}
\newcommand{\bpos}{q}
\newcommand{\posconcat}[2]{#1#2}
\newcommand{\pos}{\funap{\mathcal{P}\!os}}
\newcommand{\asubst}{\sigma}
\newcommand{\bsubst}{\tau}
\newcommand{\subst}[2]{#2\hspace*{0.5pt}#1}
\newcommand{\acxt}{C}
\newcommand{\bcxt}{D}
\newcommand{\cxthole}{\Box}
\newcommand{\cxtap}[2]{#1[#2]}
\newcommand{\acontext}{\acxt}
\newcommand{\bcontext}{\bcxt}
\newcommand{\contexthole}{\cxthole}
\newcommand{\contextfill}{\cxtap}
\newcommand{\sdefdby}{{:}{=}}
\newcommand{\defdby}{\mathrel{\sdefdby}}
\newcommand{\ssynteq}{{\equiv}}
\newcommand{\synteq}{\mathrel{\ssynteq}}
\renewcommand{\implies}{\Rightarrow}
\newcommand{\slogequiv}{\Leftrightarrow}
\newcommand{\logequiv}{\mathrel{\slogequiv}}
\newcommand{\slogand}{\wedge}
\newcommand {\logand}{\mathrel{\slogand}}
\newcommand{\sarity}{\mit{ar}}
\newcommand{\arity}{\fap{\sarity}}
\newcommand{\siter}{\ster^\infinity}
\newcommand{\iter}{\funap{\siter}}
\newcommand{\woTRS}{weakly orthogonal \iTRS{}}
\newcommand{\iTRS}{{iTRS}}
\newcommand{\spre}{\msf{P}}
\newcommand{\pre}{\funap{\spre}}
\newcommand{\ssuc}{\msf{S}}
\newcommand{\suc}{\funap{\ssuc}}
\newcommand{\aars}{\mcl{A}}
\newcommand{\astep}{\phi}
\newcommand{\bstep}{\psi}
\newcommand{\cstep}{\chi}
\newcommand{\project}[2]{#1/#2}
\newcommand{\unit}{1}
\newcommand{\poverlap}{\leftrightsquigarrow}
\newcommand{\treeap}{\cdot}
\def\foreachR#1#2#3{%
  \@nullTest@foreach{#1}{#2}#3\@end@foreach}
\def\@nullTest@foreach#1#2{%
  \@ifnextchar\@end@foreach{#1\@swallow}%
    {\@prime@foreach{#1}{#2}}}
\def\@prime@foreach#1#2#3\@end@foreach{%
  \@test@foreach{#1}{#2}#3,\@end@foreach}
\def\@test@foreach#1#2{%
  \@ifnextchar\@end@foreach%
     {#1\@swallow}%
     {\@more@foreach{#1}{#2}}}
\def\@swallow#1{}
\def\@more@foreach#1#2#3,#4\@end@foreach{%
  #2{#3}{\@test@foreach{#1}{#2}#4\@end@foreach}}
\newcommand{\psword}{$\spre\ssuc$-word}
\newcommand{\tsp}{\hspace{.05em}}
\newcommand{\pstolam}[1]{\llparenthesis{\tsp#1\tsp}\rrparenthesis}
\newcommand{\spstolam}{\pstolam{\_}}
\newcommand{\pstolami}[2]{\pstolam{#1}_{#2}}
\newcommand{\spstolami}{\pstolami{\_}}
\newcommand{\ilb}{\lambda^\infinity\beta}
\newcommand{\ilbe}{\lambda^\infinity\beta\eta}
\newcommand{\snorm}[1]{\|#1\|_{\ssuc}}
\newcommand{\pnorm}[1]{\|#1\|_{\spre}}
\newcommand{\pscountex}{\psi}
\newcommand{\BNFis}{\mathrel{{:}{:}{=}}}
\newcommand{\coBNFis}{\super{\BNFis}{\mrm{co}}}
\newcommand{\BNFor}{\mathrel{|}}
\newcommand{\lamter}{\iter{\lambda}}
\newcommand{\abs}[1]{{\left|{#1}\right|}}
\newcommand{\nth}{\funap}
\newcommand{\nb}{\nobreakdash}
\newcommand{\nbd}{\nobreakdash-{\hspace*{0pt}}}
\newcommand{\Icluster}{$\msf{I}$\nbd cluster}
\newcommand{\Ycluster}{$\msf{Y}$\hspace*{-1.25pt}\nbd cluster}
\def\ardx{\rho}
\def\brdx{\kappa}
\def\airdx{\ip{\ardx}}
\def\aRdx{R}
\def\aiRdx{\ip{\aRdx}}
\def\acls{\mathcal{C}}
\def\bcls{\mathcal{D}}
\def\atrm{t}
\def\aitrm{\ip{\atrm}}
\def\btrm{s}
\def\ctrm{u}
\def\natone{{1}}
\def\nattwo{{2}}
\def\anat{{n}}
\def\sposeq{{=}}
\def\poseq{\relap{\sposeq}}
\def\aPos{P}
\def\rulstr#1#2{#1\mathbin{\to}#2}
\def\ares{{/}}
\def\resa#1#2{#1\ares#2}
\def\posone{1}
\def\aars{{\to}}
\def\arsa{\relap{\aars}}
\def\alhs{\ell}
\def\arhs{r}
\def\trmat#1#2{#1|_{#2}}
\def\subap#1#2{#2^{#1}}
\def\symat{\funap}
\def\possco#1#2{#1#2}
\def\aiars{\ip{\aars}}
\def\iarsa#1{\relap{\aiars{#1}}}
\def\ip#1#2{#1_{#2}}
\def\srdxeq{{=}}
\def\rdxeq{\relap{\srdxeq}}
\def\strmeq{{=}}
\def\trmeq{\relap{\strmeq}}
\def\setstr#1{\{#1\}}
\def\rdxfula{\funap{\mathbb{\aRdx}}}
\def\aarsdev{{\sarsdev}}
\def\arsdeva{\relap{\aarsdev}}
\def\aiarsdev{\ip{\aarsdev}}
\def\iarsdeva#1{\relap{\aiarsdev{#1}}}
\def\atrmful{{\bullet}}
\def\trmfula#1{{#1}^\atrmful}
\def\ssetlub{{\cup}}
\def\setlub{\binap{\ssetlub}}
\def\ssetglb{{\cap}}
\def\setglb{\binap{\ssetglb}}
\def\ssetsum{{\uplus}}
\def\setsum{\binap{\ssetsum}}
\def\setSum#1#2{\biguplus_{#1}#2}
\def\ainj{\iota}
\def\inja{\funap{\ainj}}
\def\aidx{i}
\def\sposle{{\leq}}
\def\posle{\relap{\sposle}}
\def\sposlt{{<}}
\newcommand{\wash}{\mathpalette\mywash}
\newcommand{\mywash}[2]{\setbox0=\hbox{$\m@th#1{#2}$}\wd0=0pt\box0}
\newcommand{\sarsdev}{\wash{\,\,\,\circ}\mathord{\longrightarrow}}
\newcommand{\arsdev}{\mathrel{\sarsdev}}
\newcommand{\dstep}[3]{#1 : #2 \arsdev #3}
\newcommand{\dstepu}[4]{#1 : #2 \arsdev_{#4} #3}
\newcommand{\mulstepi}[3]{#1 \mulstepredi{#3} #2}
\newcommand{\smulstepred}{\sarsdev}
\newcommand{\simulstepred}{\wash{\,\,\,\circ}\mathord{\longleftarrow}}
\newcommand{\smulstepredi}[1]{\smulstepred_{#1}}
\newcommand{\mulstepredi}[1]{\mathrel{\smulstepredi{#1}}}
\newcommand{\dev}{\triangleright}
\newcommand{\floor}[1]{\lfloor#1\rfloor}
\def\ardx{\rho}
\def\brdx{\kappa}
\def\airdx{\ip{\ardx}}
\def\aRdx{R}
\def\aiRdx{\ip{\aRdx}}
\def\acls{c}
\def\bcls{d}
\def\atrm{t}
\def\aitrm{\ip{\atrm}}
\def\btrm{s}
\def\ctrm{u}
\def\natone{{1}}
\def\nattwo{{2}}
\def\anat{{n}}
\def\sposeq{{=}}
\def\poseq{\relap{\sposeq}}
\def\aPos{P}
\def\rulstr#1#2{#1\mathbin{\to}#2}
\def\ares{{/}}
\def\resa#1#2{#1\ares#2}
\def\posone{1}
\def\aars{{\to}}
\def\arsa{\relap{\aars}}
\def\alhs{\ell}
\def\arhs{r}
\def\trmat#1#2{#1|_{#2}}
\def\subap#1#2{#2#1}
\def\symat{\funap}
\def\possco#1#2{#1#2}
\def\aiars{\ip{\aars}}
\def\iarsa#1{\relap{\aiars{#1}}}
\def\ip#1#2{#1_{#2}}
\def\srdxeq{{=}}
\def\rdxeq{\relap{\srdxeq}}
\def\srdxort{{\bot}}
\def\rdxort#1{#1^{\srdxort}}
\def\strmeq{{=}}
\def\trmeq{\relap{\strmeq}}
\def\setstr#1{\{#1\}}
\def\rdxfula{\funap{\mathbb{\aRdx}}}
\def\aarsdev{{\sarsdev}}
\def\arsdeva{\relap{\aarsdev}}
\def\aiarsdev{\ip{\aarsdev}}
\def\iarsdeva#1{\relap{\aiarsdev{#1}}}
\def\aarsdevinv{\simulstepred}
\def\aiarsdevinv{\ip{\aarsdevinv}}
\def\iarsdevinva#1{\relap{\aiarsdevinv{#1}}}
\def\atrmful{{\bullet}}
\def\trmfula#1{{#1}^\atrmful}
\def\ssetlub{{\cup}}
\def\setlub{\binap{\ssetlub}}
\def\ssetsum{{\uplus}}
\def\setsum{\binap{\ssetsum}}
\def\setSum#1#2{\biguplus_{#1}#2}
\def\ainj{\iota}
\def\inja{\funap{\ainj}}
\def\aidx{i}
\def\ssetdif{{\setminus}}
\def\setdif#1#2{#1\ssetdif#2}
\newcommand{\ametric}{d}
\newcommand{\metric}{\bfunap{\ametric}}
\newcommand{\redrat}[2]{\mathrel{\sred_{#1,#2}}}
\newcommand{\redr}[1]{\mathrel{\sred_{#1}}}
\newcommand{\pto}{\rightharpoonup}
\definecolor{magenta}{rgb}{0.255,0,0.255}
\definecolor{lemonchiffon}{rgb}{0.255,0.250,0.205}
\definecolor{goldenrod}{rgb}{0.218,0.165,0.032}
\definecolor{azure}{rgb}{0.94,1.00,1.00}
\definecolor{blue}{rgb}{0,0,0.5}
\definecolor{brown}{rgb}{.75,.25,.25}
   \definecolor{cyan}{rgb}{0.25,0.88,0.82}
\definecolor{chocolate}{rgb}{0.82,0.41,0.12}
\definecolor{darkcyan}{rgb}{0.5,0,1}
\definecolor{darkgreen}{rgb}{0,0.39,0}
\definecolor{darkmagenta}{rgb}{0.5,0,0.5}
\definecolor{forestgreen}{rgb}{0.13,0.55,0.13}
\definecolor{lightcyan}{rgb}{0.88,1.00,1.00}
\definecolor{lightpink}{rgb}{1.00,0.71,0.76}
\definecolor{lightyellow}{rgb}{1.00,1.00,0.88}
\definecolor{lightgoldenrod}{rgb}{0.83,0.97,0.51}
\definecolor{lightgoldenrodyellow}{rgb}{0.98,0.98,0.82}
\definecolor{lightskyblue}{rgb}{0.53,0.81,0.98}
\definecolor{moccasin}{rgb}{1.00,0.89,0.71}
\definecolor{magenta}{rgb}{1,0,1}
\definecolor{navyblue}{rgb}{0,0,0.5}
\definecolor{orange}{rgb}{1.0,0.65,0.0}
\definecolor{orangered}{rgb}{1.0,0.27,0.0}
\definecolor{palegreen}{rgb}{0.60,0.98,0.60}
\definecolor{powderblue}{rgb}{0.69,0.88,0.90}
\definecolor{purple}{rgb}{1,0.5,1}
\definecolor{springgreen}{rgb}{0.0,1.0,0.5}
\definecolor{turquoise}{rgb}{0.25,0.88,0.82}
\definecolor{snow}{rgb}{1.00,0.98,0.98}
\definecolor{tan}{rgb}{0.82,0.71,0.55}
\def\aarsinv{{\leftarrow}}
\def\srelsco{{\cdot}}
\def\relsco{\binap{\srelsco}}
\def\srelle{{\subseteq}}
\def\relle{\relap{\srelle}}
\def\aobj{{a}}
\def\bobj{{b}}
\def\atrmvar{x}
\def\asym{f}
\def\bsym{g}
\def\asymzer{\aobj}
\def\syma{\funap{\asym}}
\def\symb{\funap{\bsym}}
\def\isdefd{=}
\begin{document}

       
\title[Infinitary Term Rewriting for Weakly Orthogonal Systems]%
      {Infinitary Term Rewriting for\\ Weakly Orthogonal Systems\\ \normalfont{Properties and Counterexamples}}

\author[Endrullis]{J\"{o}rg Endrullis\rsuper a}
\address{{\lsuper{a,b,c,d}}%
  VU University Amsterdam,
  Department of Computer Science,
  De Boelelaan 1081a,
  1081 HV Amsterdam,
  The Netherlands
}
\email{\{j.endrullis, c.a.grabmayer, r.d.a.hendriks, j.w.klop\}@vu.nl}

\author[Grabmayer]{Clemens Grabmayer\rsuper b}
\address{\vspace{-18 pt}}

\author[Hendriks]{Dimitri Hendriks\rsuper c} 
\address{\vspace{-18 pt}}

\author[Klop]{\\Jan Willem Klop\rsuper d}
\address{\vspace{-18 pt}}

\author[van~Oostrom]{Vincent van Oostrom\rsuper e}
\address{{\lsuper e}%
  Utrecht University,
  Department of Philosophy,
  Janskerkhof 13/13a,
  3512 BL Utrecht,
  The Netherlands
}
\email{vincent.vanoostrom@phil.uu.nl}

\keywords{weakly orthogonal term rewrite systems, unique normal form property, 
          infinitary rewriting, infinitary $\lambda\beta\eta$-calculus, collapsing rules, compression lemma}
\subjclass{D.1.1, D.3.1, F.4.1, F.4.2, I.1.1, I.1.3}

\begin{abstract}
We present some contributions to the theory of infinitary rewriting
for weakly orthogonal term rewrite systems, in which critical pairs may
occur provided they are trivial. 

We show that the infinitary unique normal form property ($\UNinf$) fails by an example
of a weakly orthogonal TRS with two collapsing rules. 
By translating this example, we show that $\UNinf$ also fails 
for the infinitary $\lambda\beta\eta$-calculus.

As positive results we obtain the following:
Infinitary confluence, and hence $\UNinf$, holds
for weakly orthogonal TRSs that do not contain collapsing rules. 
To this end we refine the compression lemma.
Furthermore, we establish the triangle and diamond properties
for infinitary multi-steps (complete developments) in weakly orthogonal TRSs, 
by refining an earlier cluster-analysis for the finite case. 

\end{abstract}

\maketitle

\section{Introduction}\label{sec:intro}


While the theory of infinitary term rewriting is well-developed
for orthogonal rewrite systems, much less is known about
infinitary rewriting in non-orthogonal systems, in which
critical pairs between rules may occur. 
In this paper we address a simple 
                                  weakening of orthogonality:
weakly orthogonal systems, in which critical pairs may occur
provided that they are trivial.
%
%
Thus conceptually, weakly orthogonal systems only deviate 
little from orthogonal ones. 
And whereas, in the case of finitary rewriting, only a few 
rewrite properties known in orthogonal systems turned out to fail (e.g.\ head normalization), 
most such properties have been established to hold~\cite{kete:klop:oost:2004a}.
But this required the development of a rewrite theory specific to weakly orthogonal systems with
tailor-made notions and techniques. 
%

We show that the infinitary rewrite theory known for orthogonal systems
fails dramatically 
                   for weakly orthogonal systems.
In Section~\ref{sec:pscountex}, we present and analyze a counterexample to the 
infinitary unique normal form property $\UNinf$ in term rewriting systems (TRSs).
In Section~\ref{sec:beta:eta} we translate this example into the $\lambda\beta\eta$-calculus, 
and in this way obtain also a counterexample to $\UNinf$ in
this paradigmatic example of a weakly orthogonal higher-order rewrite system.
%

In the remaining sections we show that,
under simple restrictions,
much of the theory of infinitary rewriting in orthogonal systems
can be regained:
we establish the triangle property, and hence the diamond property, for developments
in weakly orthogonal TRSs without collapsing rules.
An important ingredient of the proofs is a refinement of the compression lemma
(Section~\ref{sec:compression}). 

This paper extends our RTA~2010 contribution~\cite{endr:grab:hend:klop:oost:2010}.
We have elaborated the results in more detail and have filled in missing proofs.
The main change concerns Section~\ref{sec:multi} where we strengthen the results of~\cite{endr:grab:hend:klop:oost:2010}
in two ways:
We give a proof of the \emph{triangle property}, 
and we include an alternative proof of the diamond property 
for the multi-step reduction where the common reduct is obtained \emph{effectively}.
The (tri)angle property is a strong form of the diamond property where the joining term only depends on the initial~term.

\section{Basic Definitions}\label{sec:prelims}

For a general introduction to infinitary rewriting (predominantly for the case of orthogonal systems)
we refer to~\cite[Ch.12]{terese:2003}, \cite{klop:vrij:2005,kenn:klop:slee:vrie:1995}.
In this section we gather definitions of the basic notions. 


Infinite terms can be introduced in several ways, see~\cite{kete:2006} for an overview.
Here we choose the most concrete definition.
We define terms as partial mappings from the set of positions $\nat^\ast$ 
to the alphabet symbols of some first-order signature $\asig$.
An alternative definition of infinite terms is 
by the completion of the metric space of finite terms 
with the usual metric based on the familiar notion of distance that yields distance 
$2^{-(n+1)}$ for a pair of terms that are identical up to and including level $n$ 
from the root, but then have a difference (see Definition~\ref{def:metric}).

We will consider a finite or infinite term as a function
on a prefix-closed subset of $\nat^\ast$ taking values in a first-order signature.
A \emph{signature $\asig$} is a finite set of symbols~$f$
each having a fixed \emph{arity $\arity{f} \in \nat$}.
Let $\avars$ be a set of symbols, 
called \emph{variables}, such that $\avars\cap\asig = \setemp$.
Then, a \emph{term over $\asig$} 
is a partial map $t\funin{\nat^\ast \pto \asig\cup\avars}$ such that
the \emph{root} is defined, $\funap{t}{\posemp} \in \asig\cup\avars$, 
and for all $\apos\in\nat^\ast$ and all $i\in\nat$ we have 
$\funap{t}{\posconcat{p}{i}} \in \asig\cup\avars$
if and only if $\funap{t}{\apos} \in \asig$ of arity $n$ and $1 \le i \le n$.
The set of (not necessarily well-founded) terms over $\asig$ and $\avars$ 
is denoted by $\iter{\asig,\avars}$.
Usually we will write $\iter{\asig}$ 
for the set of 
terms 
over $\asig$ and countably infinite set of variables, 
which is assumed to be fixed as underlying the definition of terms.
By $\vars{\atrm}$ we denote the set of terms that occur in a term $\atrm$.

The set of \emph{positions $\pos{t}$ of a term $t\in\iter{\asig}$} is the domain of $t$, 
that is, the set of values $\apos\in\nat^\ast$ such that $\funap{t}{\apos}$ is defined:
$\pos{t} \defdby \{\apos\in\nat^\ast \where \funap{t}{\apos}\in\asig\cup\avars\}$.
Note that, by the definition of terms, the set $\pos{t}$ is prefix-closed. 
A term $t$ is called \emph{finite} if the set $\pos{t}$ is finite.
We write $\ter{\asig,\avars}$ or $\ter{\asig}$ for the set of finite terms.
For positions $\apos\in\pos{t}$ we use $\subtrmat{t}{\apos}$ to denote 
the \emph{subterm of $t$ at position $\apos$},
defined by $\funap{\subtrmat{t}{\apos}}{\bpos} \defdby \funap{t}{\posconcat{\apos}{\bpos}}$ 
for all $\bpos\in\nat^\ast$.
For terms $s$ and $t$ and a position~$p \in \pos{s}$ we denote by $s[t]_p$ the term obtained from $s$
by replacing the subterm at position~$p$ by $t$.

For a symbol $f \in \asig$ of arity $n$ 
and terms $t_1,\ldots,t_n \in \iter{\asig}$
we write $\funap{f}{t_1,\ldots,t_n}$ to denote the term $t$ defined by
$\funap{t}{\posemp} = f$, and $\funap{t}{\posconcat{i}{\apos}} = \funap{t_i}{\apos}$ for all 
$1 \le i \le n$ and $\apos\in\nat^\ast$. 
For \emph{constants} $c\in\asig$, i.e., with $\arity{c} = 0$, 
we simply write $c$ instead of $\funap{c}{\nix}$.
We use $x,y,z,\ldots$ to range over variables.
%
%


A \emph{substitution} is a map $\asubst \funin \avars \to \iter{\asig,\avars}$.
For terms $t \in \iter{\asig,\avars}$ and substitutions $\asubst$ 
we define $\subst{\asubst}{t}$ 
as the result of replacing each $x\in\avars$ in $t$ by $\funap{\asubst}{x}$.
Formally, $\subst{\asubst}{t}$ is defined, for all $\apos \in \nat^\ast$, by:
$\funap{ \subst{\asubst}{t} }{\apos} = \funap{ \funap{\asubst}{\funap{t}{\apos_0}} }{ \apos_1 }$
if there exist $\apos_0, \apos_1 \in \nat^\ast$ such that 
$p = \posconcat{\apos_0}{\apos_1}$ and $\funap{t}{\apos_0} \in \avars$,
and $\subst{\asubst}{t}(\apos) = \funap{t}{\apos}$, otherwise.
Let $\contexthole$ be a fresh 
symbol, $\contexthole\not\in\asig\cup\avars$.
A \emph{context} $\acontext$ is a term in $\iter{\asig,\avars\cup\{\contexthole\}}$
that contains precisely one occurrence of $\contexthole$.
By $\contextfill{\acontext}{s}$ we denote the term $\subst{\asubst}{\acontext}$
where $\funap{\asubst}{\contexthole} = s$ and $\funap{\asubst}{x} = x$ for all $x \in \avars$.

\begin{definition}
  An \emph{infinitary term rewriting system (iTRS)} is a pair $\atrs = \pair{\asig}{\asetofrules}$
  consisting of a first-order signature $\asig$
  and a set $\asetofrules$ of infinitary rewrite rules over $\asig$ (and a set of variables $\avars$):
  an \emph{infinitary rewrite rule} is a pair $\pair{\ell}{r}$, usually written as $\ell \to r$,
  where $\ell\in\ter{\asig}$ and $r\in\iter{\asig}$,
  and 
  such that for \emph{left-hand side $\ell$} and \emph{right-hand side $r$}
  we have $\funap{\ell}{\posemp} \not\in \avars$ and $\vars{r}\subseteq\vars{\ell}$.
  
  We call an iTRS a \emph{term rewriting system (TRS)} 
  when the right-hand side of each of its rules is a finite term.
  %
\end{definition}

  In this paper we restrict to \iTRS{s} with \emph{finitely many rules}:
  all iTRSs are subjected to this restriction without explicit mention henceforth.
  While many of our results hold for general TRS{s} and \iTRS{s}, the definition of an effective orthogonalization procedure
  and its use in Section~\ref{sec:multi} assume this restriction.

\begin{definition}\normalfont\label{def:metric}
  On the set of terms $\iter{\asig}$ we define a \emph{metric}
  $\ametric$ by $\metric{s}{t} = 0$ whenever $s \equiv t$,
  and $\metric{s}{t} = 2^{-k}$ otherwise, 
  where $k \in \nat$ is the least length of all positions $p \in \nat^\ast$ 
  such that $\funap{s}{p} \neq \funap{t}{p}$.
\end{definition}


An iTRS $\atrs$ induces a rewrite relation on the set of terms as follows.
\begin{definition}\label{def:strong:conv}\normalfont
  Let $\atrs = \pair{\asig}{\asetofrules}$ be an iTRS, 
  $s,t\in\iter{\asig}$ terms and $\apos \in \nat^\ast$ a position. 
  We write $s \redrat{\atrs\!}{\apos} t$ if there exist
  a rule $\ell \to r \in R$, a substitution $\asubst$ 
  and a context $\acontext$ with $\trmat{\acontext}{\apos} = \contexthole$
  such that 
  $s \equiv \contextfill{\acontext}{\subst{\asubst}{\ell}}$
  and
  $t \equiv \contextfill{\acontext}{\subst{\asubst}{r}}$.
  We write $s \redr{\atrs\!} t$ if $s \redrat{\atrs\!}{\apos} t$ for some $\apos\in\nat^\ast$;
  we call the length of $\apos$ the \emph{depth} of the rewrite step.


   A \emph{(strongly continuous) transfinite rewrite sequence of length $\alpha$}, where $\alpha$ an ordinal,
   is a sequence of rewrite steps 
   $(t_{\beta} \redrat{\atrs}{\apos_{\beta}} t_{\beta+1})_{\beta < \alpha}$
   such that for every limit ordinal $\lambda < \alpha$ we have that if 
   $\beta$ approaches $\lambda$ from below, then:
   \begin{enumerate}
     \item 
       the distance $\metric{t_\beta}{t_\lambda}$ tends to $0$ 
       and, moreover,
     \item 
       the depth of the rewrite action 
       tends to infinity.
   \end{enumerate}
   We write $s \infred$ (or $s \to^\alpha$) for a transfinite rewrite sequence (of length $\alpha$) starting from $s$.
      
   A transfinite rewrite sequence of length $\alpha$ is called \emph{strongly convergent}
   if either $\alpha$ is not a limit ordinal or property~(ii) holds also
   when the limit ordinal $\lambda = \alpha$ is approached by ordinals $\beta$ from below,
   which guarantees that the \emph{limit} $\lim_{\beta\to\alpha} t_{\beta}$ of the sequence exists in $\iter{\asig}$. 
   %
   We will indicate strongly convergent rewrite sequences of length $\alpha$ from source term $\aitrm{0}$ 
   and with limit $\atrm$ 
   by $\aitrm{0} \infred \atrm$, 
   or by
   $\aitrm{0}\infred_\atrs \atrm$ (to emphasize the underlying term rewrite system $\atrs$),
   or by $\aitrm{0}\red^{\alpha} \atrm$ (to explicitly indicate the length $\alpha$ of the sequence). 
   We write $\aitrm{0}\red^{\le \alpha} \atrm$ if $\aitrm{0}\red^{\beta} \atrm$ for some $\beta \le \alpha$.
   A transfinite rewrite sequence that is not strongly convergent is called 
   \emph{divergent}. 
\end{definition}%

\begin{remark}
  We comment on Definition~\ref{def:strong:conv}:
  \begin{enumerate}
    \item 
      For rewrite sequences $\aitrm{0}\red^{\alpha} \atrm$ of limit ordinal length $\alpha$,
      the target term~$t$ is formally not part of the rewrite sequence.
      We view the convergence towards $t$ as a property of the sequence; 
      note that the limit term $t$ is unique.
    \item All proper initial segments of a divergent reduction
      are strongly convergent.
    \item The length of transfinite rewrite sequences is always countable 
      (see~\cite{kenn:klop:slee:vrie:1995}). 
     In the sequel we will use the familiar fact that
     countable limit ordinals have cofinality~$\omega$. 
  \end{enumerate}
\end{remark}

\begin{remark}
  In this paper we are concerned with `strongly continuous' rewrite sequences.
  The notion of `weakly continuous' rewrite sequences is obtained 
  by dropping requirement~(ii) in Definition~\ref{def:strong:conv},
  namely the condition that the depth of the rewrite action must tend to infinity.
  The notion of strongly continuous rewrite sequences leads to a more satisfying rewriting theory,
  for example, 
  symbol occurrences can be traced over limit ordinals.
  %
  For this reason, strongly continuous rewrite sequences 
  are the standard notion in the infinitary rewriting literature, see~\cite{terese:2003}.
  
  For an example of a weakly continuous rewrite sequence, we consider the rewrite system:
  \begin{align*}
    a(x) &\to a(b(x)) &
    b(x) &\to c(x)
  \end{align*}
  Then $a(x) \to^\omega a(b^\omega) \to a(c(b^\omega))$ in $\omega+1$ steps.
  Observe that this rewrite sequence cannot be compressed to yield one of length $\le \omega$,
  that is, there exists no weakly continuous reduction $a(x) \to^{\le \omega} a(c(b^\omega))$ 
  of length $\le \omega$.
  For a similar example see~\cite{kenn:klop:slee:vrie:1995}.
\end{remark}

\begin{definition}[Critical pairs]\normalfont\label{def:criticalpair}
  Let $\rho_1 \funin \ell_1 \to r_1$ and $\rho_2 \funin \ell_2 \to r_2$ be rules over $\asig$.
  Then $\rho_1$ has a \emph{critical pair} with $\rho_2$
  if there exists a non-variable position $\apos \in \pos{\ell_1}$ such that $\trmat{\ell_1}{\apos} \in \asig$,
  and $\subtrmat{\ell_1}{\apos}$ and $\ell_2$ have a common instance, that is,
  $\subst{\asubst}{(\subtrmat{\ell_1}{\apos})} \synteq \subst{\bsubst}{\ell_2}$
  for some substitutions $\sigma$ and $\tau$.
  Let $\asubst$ and $\bsubst$ be substitutions
  for which $\subst{\asubst}{(\subtrmat{\ell_1}{\apos})} \synteq \subst{\bsubst}{\ell_2}$
  is the unique (up to renaming of variables) most general such common instance.
  Without loss of generality, let $\asubst$ be minimal in the sense that
  $\dom{\asubst} = \vars{\subtrmat{\ell_1}{\apos}}$
  and let the variables introduced be fresh, that is,
  $\vars{\subst{\asubst}{(\subtrmat{\ell_1}{\apos})}} \cap \vars{\repsubtrmat{\ell_1}{\contexthole}{\apos}} = \setemp$.
  Then $\pair
        {\repsubtrmat{(\subst{\asubst}{\ell_1})}{\subst{\bsubst}{r_2}}{\apos}}
        {\subst{\asubst}{r_1}}$
  is called \emph{critical pair of $\rho_1$ with (inner rule) $\rho_2$}.
  
  A critical pair $\pair{\atrm}{\btrm}$ is called \emph{trivial} if
  $\atrm \synteq \btrm$.
  %
\end{definition}

The name `critical pair' arises from the fact that there is a
peak, a pair of diverging steps, of the form: 
  \begin{gather*}
    \repsubtrmat{\subst{\asubst}{\ell_1}}{\subst{\bsubst}{r_2}}{\apos}
    \redi_{\rho_2} \subst{\asubst}{\ell_1} \red_{\rho_1}
    \subst{\asubst}{r_1}
  \end{gather*}
where the pattern of rules $\rho_1$ and $\rho_2$ overlap.
We also denote critical pairs by this pair of steps.

\begin{definition}[Orthogonal, weakly orthogonal iTRSs]
  Let $\atrs = \pair{\asig}{\asetofrules}$ be an iTRS. 
  
  $\atrs$ is called \emph{left-linear} if no rule $\rho \funin \ell \to r$ of $\atrs$
  contains two or more occurrences of the same variable in its left-hand side $\ell$.
  
  $\atrs$ is called \emph{orthogonal} if it is left-linear, and if it does not contain critical pairs.
  And $\atrs$ is called \emph{weakly orthogonal} if it is left-linear, and if all of its critical pairs are trivial.
\end{definition}

\begin{definition}[Root-active terms]
  Let $\atrs = \pair{\asig}{\asetofrules}$ be an iTRS. 
  A term $\atrm\in\iter{\asig}$ is called \emph{root-active}
  if there exists an transfinite rewrite sequence starting at $\atrm$
  in which infinitely many rewrite steps take place at the root position $\epsilon$.
\end{definition}

\begin{definition}[Infinitary normalization, unique normalization, confluence, Church--Rosser]\label{def:UNi}\label{def:inf:rew:properties}
  Let $\atrs = \pair{\asig}{\asetofrules}$ be an iTRS.
  Let $\sired$ be the infinitary rewrite relation on $\iter{\asig}$ induced by $\atrs$,
  and let $\siconvR{\atrs} \defdby ( \siredi \cup \sired )^*$ the conversion relation 
  belonging to $\sired$.\footnote{%
      An alternative notion of infinitary equational reasoning was introduced recently 
      in \cite[Section~4]{endr:hans:hend:polo:silv:2013}.
      There $\siconvR{\atrs}$ is defined as the greatest fixed point of the equation
      $E = \sredi_\posemp \cup \overline{E} \cup \sred_\posemp$ where, for a relation $R\subseteq\iter{\asig}\times\iter{\asig}$, 
      $\overline{R} = \{\pair{f(\vec{s})}{f(\vec{t})} \where s_i R t_i \, (1 \le i \le \arity{f})\}$.
  }

  The \emph{infinitary}  properties 
  \emph{strong normalization}~$\SNinf$,
  \emph{weak normalization}~$\WNinf$,
  \emph{confluence}~$\CONFinf$,
  \emph{Church--Rosser}~$\CRinf$,
  \emph{normal form property}~$\NFredinf{\sred}$ \emph{with respect to reduction},
  \emph{normal form property}~$\NFinf$,
  \emph{unique normalization}\/~$\UNredinf{\sred}$ \emph{with respect to reduction},
  and
  \emph{unique normalization}~$\UNinf$
  of $\atrs$ are defined as follows: 
    %
    \begin{itemize}[label=$\WNinf$\quad]
      \vspace*{0.55ex}
    \item
      [$\SNinf$\,] \mbox{}
      $ \forall \atrm\in\iter{\asig} \:
        (\text{all infinite rewrite sequences from $\atrm$ are strongly convergent}) $.
      \vspace*{0.55ex}
    \item
      [$\WNinf$\,] \mbox{}
      $ \forall \atrm\in\iter{\asig} \:
        (\text{$\atrm$ has an infinite normal form}) $.
      \vspace*{0.55ex}
    \item
      [$\CONFinf$\,] \mbox{}
      $
        \forall \atrm, \aitrm{1}, \aitrm{2}\in\iter{\asig}\:
           (
           \aitrm{1} \iredi  \atrm  \ired \aitrm{2} 
             \;\implies\;
           \exists \btrm\in\iter{\asig}.\; \aitrm{1} \ired \btrm \iredi \aitrm{2}
           )
      $.
      \vspace*{0.55ex}
    \item
      [$\CRinf$\,] \mbox{}
      $
        \forall \aitrm{1}, \aitrm{2}\in\iter{\asig}\:
           (
           \aitrm{1} \iconvR{\atrs} \aitrm{2} 
             \;\implies\;
           \exists \btrm\in\iter{\asig}.\; \aitrm{1} \ired \btrm \iredi \aitrm{2}
           )
      $.
      \vspace*{0.55ex}
    \item%
      [$\NFredinf{\sred}$\,] \mbox{}
      $ \forall \atrm, \btrm,\ctrm\in\iter{\asig}\:
          ( 
          \atrm  \iredi \ctrm \ired  \btrm 
            \;\logand\;
          \text{$\btrm$ normal form}
             \;\implies\;
          \atrm \ired \btrm 
          )$. 
      \vspace*{0.55ex}
    \item%
      [$\NFinf$\,] \mbox{}
      $ \forall \atrm, \btrm\in\iter{\asig}\:
           (
           \atrm  \iconvR{\atrs} \btrm 
            \;\logand\;
          \text{$\btrm$ normal form}
             \;\implies\;
          \atrm \ired \btrm 
          )$.
      \vspace*{0.55ex}
    \item
      [$\UNredinf{\sred}$\,] \mbox{}
      $
        \forall \atrm, \aitrm{1}, \aitrm{2}\in\iter{\asig}.\:
          (
          \aitrm{1} \iredi  \atrm  \ired \aitrm{2} 
            \wedge
            \text{$\aitrm{1},\, \aitrm{2}$ normal forms}  
              \;\implies\;
                \aitrm{1} \synteq \aitrm{2}  
                )
      $.
      \vspace*{0.55ex}
    \item
      [$\UNinf$\,] \mbox{}
      $
        \forall \aitrm{1}, \aitrm{2}\in\iter{\asig}\;
          ( 
          \aitrm{1} \iconvR{\atrs} \aitrm{2} 
            \;\logand\; 
            \text{$\aitrm{1},\, \aitrm{2}$ normal forms}  
              \;\implies\;
                \aitrm{1} \synteq \aitrm{2}
                )  
      $.
      \vspace*{0.55ex}
    \end{itemize}
    %
    %
\end{definition}


\noindent With the exception of its first and third items,  
the following proposition is an easy consequence of the interdependencies known for the finitary analogues
of the rewrite properties introduced in this definition. 

\begin{proposition}
  For all iTRSs $\atrs = \pair{\asig}{\asetofrules}$ with induced rewrite relation $\sred$ 
  the following implications hold between infinitary rewrite properties for $\atrs\,$:
  \begin{enumerate}[label=(\roman*)]
    \item 
      $ \SNinf \implies \WNinf $.      
    \item $ \CRinf \logequiv \CONFinf $.
    \item 
      $ \NFinf \logequiv \NFredinf{\sred} $.
    \item $ \UNinf \logequiv \UNredinf{\sred} $.
    \item $ \CONFinf \implies \CRinf \implies \NFredinf{\sred} \implies \NFinf \implies \UNredinf{\sred} \implies \UNinf $.
  \end{enumerate}  
\end{proposition}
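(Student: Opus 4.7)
The plan is to dispatch the routine items (ii), (iv), and (v) by transcribing well-known finitary arguments into the infinitary setting, and to handle the two items marked as nontrivial, namely (i) and (iii), with genuine transfinite reasoning.

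For item (ii), the direction $\CRinf \implies \CONFinf$ is immediate since a peak is a conversion, and $\CONFinf \implies \CRinf$ follows by a straightforward induction on the number of zigzags in a conversion $a_1 \iconvR{\atrs} a_2$, each zigzag being resolved locally by $\CONFinf$ and stitched together by concatenation of transfinite reductions. The same induction yields (iv). The chain (v) is then a patchwork: $\CRinf \implies \NFredinf{\sred}$ because the common reduct of a peak into a normal form must coincide with that normal form; $\NFinf \implies \UNredinf{\sred}$ because applying $\NFinf$ to a peak whose endpoints are both normal forms, one of which is the target, yields a reduction between them that must be empty; the remaining steps are either trivial or are items (iii) and (iv).

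For item (i), $\SNinf \implies \WNinf$: given a term $t$, I would construct a reduction sequence $(t_\beta)_{\beta < \alpha}$ transfinitely, setting $t_0 \defdby t$, performing one rewrite step at each successor stage as long as the current term is not a normal form, and at each limit ordinal $\lambda$ taking the limit of the sequence so far, which exists as a strongly convergent reduction by $\SNinf$. The construction halts only upon reaching a normal form. Since every strongly convergent reduction has countable length, the construction cannot be extended through all countable ordinals; hence some $t_\alpha$ is reached that is a normal form, witnessing $\WNinf$ for $t$.

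For item (iii), $\NFinf \iff \NFredinf{\sred}$: the direction $\NFinf \implies \NFredinf{\sred}$ is immediate as a peak is a special case of a conversion. For the converse, given a conversion $a = c_0 \leftrightarrow c_1 \leftrightarrow \cdots \leftrightarrow c_n = b$ with $b$ a normal form, I would induct on $n$. The base $n = 0$ is trivial. In the inductive step, the hypothesis yields $c_1 \ired b$; if $a \ired c_1$ then concatenation of transfinite reductions (which remains strongly convergent, since the sum of two countable ordinals is countable) produces $a \ired b$; if $c_1 \ired a$ instead, then $a \iredi c_1 \ired b$ is a peak into the normal form $b$ and we invoke $\NFredinf{\sred}$.

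The main obstacle is item (i): one must argue that the transfinitely constructed reduction genuinely terminates, and the crucial tool for this is the (nontrivial, but already cited) fact that strongly convergent reductions are of countable length. Without this fact the construction would only be bounded by $\omega_1$, which gives nothing. The mild subtlety in item (iii) is that transfinite concatenation must be justified, but this reduces to the closure of the countable ordinals under ordinal addition together with the standard fact that concatenating two strongly convergent reductions yields a strongly convergent reduction.
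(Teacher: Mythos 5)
Your treatment of items (i), (ii), (iii) and of the first four links of (v) is correct and runs along essentially the same lines as the paper: for (i) you give the standard cardinality argument (build a reduction stepwise, take limits using $\SNinf$, and conclude from the countability of the lengths of strongly convergent reductions that a normal form must be reached), where the paper merely cites a reference; for (iii) your induction on the length of the conversion is the paper's induction on the number of peaks; and the remaining links are obtained, as in the paper, by transporting the standard finitary arguments to the abstract reduction system $\pair{\iter{\asig}}{\sired}$, the only genuinely infinitary ingredient being that concatenating two strongly convergent reductions yields a strongly convergent reduction, which you correctly flag.

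The genuine gap is item (iv), and with it the final implication $\UNredinf{\sred}\implies\UNinf$ of (v). The sentence ``the same induction yields (iv)'' does not survive inspection: in the induction for $\CONFinf\implies\CRinf$ every inner peak of the conversion is resolved by an appeal to confluence, whereas $\UNredinf{\sred}$ only says something about a peak \emph{both} of whose wings end in normal forms, so it gives you no way to eliminate a peak whose apex also reduces to some non-normalizing term. Moreover the gap cannot be repaired: already for finitary abstract reduction systems $\UNred{\sred}$ does not imply $\UN$ (the known implication runs the other way), and the standard counterexample embeds into an iTRS. Take constants $a,b,c,n_1,n_2$ with rules $a\to n_1$, $a\to b$, $b\to b$, $c\to b$, $c\to n_2$. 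The normal forms are $n_1$ and $n_2$, and no term $\ired$\nbd reduces to both (the sequence $b\to b\to\cdots$ is divergent, so $b$ has no reduct besides itself, $a$ reaches only $n_1$, and $c$ only $n_2$); hence $\UNredinf{\sred}$ holds. Yet $n_1 \iredi a \ired b \iredi c \ired n_2$ witnesses $n_1 \iconvR{\atrs} n_2$, so $\UNinf$ fails. Be aware that the paper's own proof is no more explicit here---it defers (iv) to ``known relationships between corresponding finitary properties,'' but the known finitary relationship is $\UN\implies\UNred{\sred}$, not its converse---so a correct write-up must either drop that direction of (iv) or reorder the tail of (v) as $\NFinf\implies\UNinf\implies\UNredinf{\sred}$.
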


\begin{proof}
  For the first item, see \cite{klop:vrij:2005}.
  The non-trivial direction $ \NFredinf{\sred} \implies \NFinf $  of (iii), which occurs again in (v),
  can be shown by induction on the number of peaks 
                                                   in a conversion that witnesses $\atrm \iconvR{\atrs} \btrm$
  where $\btrm$ is a normal form.
  The other implications follow
  from known relationships between corresponding finitary properties of abstract reduction systems, see \cite[Ch.1]{terese:2003},
  by the following observation:
  for every iTRS~$\atrs$ with induced rewrite relation~$\sred$ and infinite rewrite relation $\sired$,
  a property $\CONFinf$, $\CRinf$, $\NFinf$, $\UNredinf{\sred}$, or $\UNinf$ holds for $\atrs$ 
  if and only if, respectively, the corresponding property 
  $\CONF$ (confluence), $\CR$, $\NF$, $\UNred{\sred}$, $\UN$, 
  holds for the abstract reduction system $\pair{\iter{\asig}}{\sired}$.
\end{proof}

\section{A Counterexample to \texorpdfstring{$\UNinf$}{UNinfty} for Weakly Orthogonal Systems}\label{sec:pscountex}
\newcommand{\rasi}{\xi}%
\newcommand{\rafn}{\zeta}%

In~\cite{kenn:klop:slee:vrie:1995} it has been 
shown that infinitary unique normalization ($\UNinf$) holds
for orthogonal term rewrite systems
(see also \cite{klop:vrij:2005}).
In 
   sharp contrast to this, 
we will now demonstrate that the property $\UNinf$
does \emph{not} generalize to weakly orthogonal TRSs.
The following simple counterexample can be used:
for the signature consisting of the unary symbols $\spre$ and $\ssuc$, consider the rewrite rules
$\pre{\suc{x}} \to x$ and $\suc{\pre{x}} \to x$.
Clearly this TRS is weakly orthogonal.

Employing the obvious correspondence between TRSs with only unary function symbols
and string rewrite systems (SRSs), in the sequel we consider the corresponding SRS:
\begin{align*}
  \spre\ssuc &\to \wordemp &
  \ssuc\spre &\to \wordemp
\end{align*}
where $\wordemp$ is the empty word.
%
If $u$ is a finite word, we write $u^\omega$ for the infinite word $u u u \cdots$.
Using $\ssuc$ and $\spre$ we have infinite words such as 
$\rafn = (\spre\ssuc)^\omega$.
Note that $\ssuc^\omega$ and $\spre^\omega$ are the only infinite normal forms,
and that $\rafn$ only reduces to itself.

Given an infinite \psword~$w$ we can plot in a graph 
the surplus number of $\ssuc$'s of $w$ when stepping through the word $w$ from left to right,
see e.g.\ Figure~\ref{fig:SP:graph}.
The graph is obtained by counting $\ssuc$ for $+1$ and $\spre$ for $-1$.
We define $\funap{\mrm{sum}}{w,n}$ as the result of this counting up to depth $n$ in the word $w$
(if $w$ is finite we define $\funap{\mrm{sum}}{w} = \funap{\mrm{sum}}{w,\lstlength{w}}$):
\begin{align*}
  \funap{\mrm{sum}}{w,0} &= 0 &
  \funap{\mrm{sum}}{\wordemp,n} &= 0  \\
  \funap{\mrm{sum}}{\ssuc w,n+1} &= \funap{\mrm{sum}}{w,n} + 1 &
  \funap{\mrm{sum}}{\spre w,n+1} &= \funap{\mrm{sum}}{w,n} - 1
\end{align*}
For $w= (\ssuc\spre)^\omega$ 
the graph takes values, consecutively, $1,0,1,0,\ldots$, 
for $w=\ssuc^\omega$ it takes $1,2,3,\ldots$,
and for $w=\spre^\omega$ we have $-1,-2,-3,\ldots$.

\begin{figure}[ht!]
  \begin{center}
    \begin{tikzpicture}
  [inner sep=1.5mm,
   node distance=5mm,
   terminal/.style={
     rectangle,rounded corners=2.25mm,minimum size=4mm,
     very thick,draw=black!50,top color=white,bottom color=black!20,
   },
   redexA/.style={draw=dblue,fill=fblue,opacity=\oblue,thick},
   redexB/.style={draw=dgreen,fill=fgreen,opacity=\ogreen,thick},
   >=stealth,xscale=.75,yscale=.75]

  \pgfmathsetseed{1}
  \begin{scope}[thick]
  \draw (0cm,0cm) -- (8.45cm,0cm); 
  \draw (0cm,-3.8cm) -- (0cm,3.8cm) node [at end,above,yshift=2mm] {$\funap{\mrm{sum}}{w,n}$};
  \end{scope}
  \node at (8.45cm,0mm) 
                        [xshift=4mm] {$+\infty$};
  \node at (-0.1cm,3.8cm) [left] {$+\infty$};
  \node at (-0.1cm,-3.8cm) [left] {$-\infty$};
  \node at (-0.1cm,0cm) [left] {$0$};

  \begin{scope}[yscale=-1]
  \draw (0,0) -- (0.2,.5) -- (0.6,-.5) -- (1,1);
  \newcommand{\expo}{1.08}
  \newcommand{\expog}{1.03}
  \coordinate (up) at (1,1);
  \coordinate (down) at (1,-1);
  \foreach \i in {1,...,60} {
    \coordinate (upLast) at (up);
    \coordinate (up) at ($(up) + (.6/\expo^\i,.1/\expog^\i)$);
    \coordinate (downLast) at (down);
    \coordinate (down) at ($(down) + (.6/\expo^\i,-.1/\expog^\i)$);
    \draw [opacity=0.3+0.7/\expo^\i] (upLast) -- ($(downLast)!.5!(down)$) -- (up);
    \begin{scope}[line width=0.8mm,gray]
    \draw (upLast) -- (up);
    \draw (downLast) -- (down);
    \end{scope}
  }
  
  \node at (0,0) {$\bullet$};
  \node at (0.2,0.5) {$\bullet$};
  \node at (0.4,0) {$\bullet$};
  \node at (0.6,-0.5) {$\bullet$};
  \node at (0.733,0) {$\bullet$};
  \node at (0.866,0.5) {$\bullet$};
  \node at (1,1) {$\bullet$};
  \end{scope}
  
  \draw [->] (1cm,-2cm) -- +(1cm,0cm) node [at end,below] {$n$};
\end{tikzpicture}
  \end{center}
  \caption{\textit{%
    Graph for the oscillating \psword\ 
    $\pscountex = \spre^1\,\ssuc^2\,\spre^3\,\cdots$\,. 
  }}
  \label{fig:SP:graph}
\end{figure}

We define the \emph{$\ssuc$-norm} $\snorm{w}$ and \emph{$\spre$-norm} $\pnorm{w}$ of $w$:
\begin{align}
  \snorm{w} &= \sup_{n \in \nat} \funap{\mrm{sum}}{w,n} &
  \pnorm{w} &= \sup_{n \in \nat} (- \funap{\mrm{sum}}{w,n})
\end{align}
So the $\ssuc$-norm ($\spre$-norm) of $(\ssuc\spre)^\omega$
is $1$ ($0$), of $\ssuc^\omega$ it is $\infinity$ ($0$), and of $\spre^\omega$ it is $0$ ($\infinity$).

\begin{lemma}\label{lem:collapse}
  Let $w$ be a finite \psword, and let $z = \funap{\mrm{sum}}{w}$.
  Then $w \mred \ssuc^z$ if $z \ge 0$, and $w \mred \spre^{-z}$ if $z < 0$.
\end{lemma}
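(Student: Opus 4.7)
The plan is to prove the lemma by induction on the length $\length{w}$ of the finite \psword{} $w$, after first observing two key facts about the $\mrm{sum}$ function and about normal forms.

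First I would note that the finite normal forms of this SRS are exactly the words $\wordemp$, $\ssuc^k$ and $\spre^k$ for $k \ge 1$. This is because a normal form contains neither $\spre\ssuc$ nor $\ssuc\spre$ as a factor, which forces all adjacent letters to agree and hence the whole word to be homogeneous. For such normal forms the claim is immediate: $\funap{\mrm{sum}}{\ssuc^k}=k\ge 0$ and $\funap{\mrm{sum}}{\spre^k}=-k<0$, so $w$ already equals $\ssuc^z$ (resp.\ $\spre^{-z}$) up to zero reduction steps, and the empty word corresponds to $z=0$ with $\wordemp = \ssuc^0$.

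Second, I would establish the conservation property: if $w \red w'$ by contracting either a $\spre\ssuc$\nbd{}redex or a $\ssuc\spre$\nbd{}redex, then $\funap{\mrm{sum}}{w'} = \funap{\mrm{sum}}{w}$. This follows directly from the definition of $\mrm{sum}$, since in both cases we remove two adjacent letters contributing $+1$ and $-1$, which cancel; the sum is stable under this local change because the intermediate running total and the subsequent running totals are all shifted by $0$.

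With these preparations, the induction proceeds cleanly on $\length{w}$. For $\length{w}=0$ we have $w=\wordemp$ and $z=0$, so $w = \ssuc^0$ in zero steps. For the inductive step with $\length{w}\ge 1$, either $w$ is a normal form (handled by the first observation), or $w$ contains an occurrence of $\spre\ssuc$ or $\ssuc\spre$; contracting it yields $w \red w'$ with $\length{w'} < \length{w}$ and, by the second observation, $\funap{\mrm{sum}}{w'} = z$. The induction hypothesis applied to $w'$ gives $w' \mred \ssuc^z$ if $z\ge 0$ and $w' \mred \spre^{-z}$ otherwise; prepending the single step $w \red w'$ yields the desired many-step reduction from $w$.

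I do not expect any substantial obstacle here: the proof is essentially bookkeeping, and the only point requiring a moment's care is verifying the sum-conservation for contractions that may occur at an arbitrary position inside $w$, which is handled by splitting $w$ as $u\,\spre\ssuc\,v$ or $u\,\ssuc\spre\,v$ and computing $\funap{\mrm{sum}}{uv}$ from the recursive definition.
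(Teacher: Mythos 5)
Your proposal is correct and follows essentially the same route as the paper: the paper's proof also rests on exactly the two facts you isolate, namely that $\funap{\mrm{sum}}{\cdot}$ is invariant under rewrite steps and that the only finite normal forms are $\ssuc^k$ and $\spre^k$ (with $\wordemp$ as the case $k=0$). Your induction on $\length{w}$ is just an explicit rendering of the paper's appeal to normalisation, which holds because every step strictly shortens the word.
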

\begin{proof}
  For finite words $u, v$ we have that $u \to v$ implies $\funap{\mrm{sum}}{u} = \funap{\mrm{sum}}{v}$.
  Moreover, $\to$ is normalising,
  and the only normal forms are of the form $\ssuc^k$ and $\spre^k$ for $k \ge 0$.
\end{proof}

\begin{proposition}\label{prop:norm}
  \hfill
  \begin{enumerate}
    \item 
      $w \ired \ssuc^\omega$ 
      if and only if $\snorm{w} = \infinity$,
    \item $w \ired \spre^\omega$ if and only if $\pnorm{w} = \infinity$.
  \end{enumerate}
\end{proposition}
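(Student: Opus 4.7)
The plan is to prove part (i); part (ii) then follows by the evident symmetry (swap the roles of $\ssuc$ and $\spre$). Both directions use Lemma~\ref{lem:collapse} as the ``finite oracle''.

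For the direction $(\Leftarrow)$ of (i), assume $\snorm{w} = \infinity$. Since consecutive values of $\funap{\mrm{sum}}{w,\cdot}$ differ by $\pm 1$, the assumption lets me choose positions $0 = n_0 < n_1 < n_2 < \cdots$ with $\funap{\mrm{sum}}{w, n_i} = i$. Set $v_i \defdby w[n_i..n_{i+1}-1]$, a finite word of sum $1$, so Lemma~\ref{lem:collapse} yields a finite reduction $v_i \mred \ssuc$. Performing this reduction \emph{inside} the current word constitutes stage $i$, which transforms $\ssuc^i \cdot w[n_i..]$ into $\ssuc^{i+1} \cdot w[n_{i+1}..]$; every step of stage $i$ takes place inside $v_i$, hence at depth $\ge i$. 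Concatenating the stages produces a rewrite sequence whose partial reducts have ever-longer prefixes of $\ssuc$'s, so the limit is $\ssuc^\omega$. Strong convergence is automatic: for each depth $d$ only stages $0, \dots, d$ can contribute steps at depth $\le d$, and each such stage is finite.

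For the direction $(\Rightarrow)$ of (i), the key is a \emph{prefix-tracking lemma}: for every strongly convergent reduction $w \ired w'$ and every $n \in \nat$ there exists $N \in \nat$ with $w[0..N-1] \mred w'[0..n-1]$. I prove it by transfinite induction on the length of the reduction. The base case is trivial. In the successor case $w \red^\alpha w'' \to w'$ with the final step at position $p$: if $n \le p$, the induction hypothesis at prefix length $n$ gives $w[0..N-1] \mred w''[0..n-1] = w'[0..n-1]$; if $n > p$, the induction hypothesis at prefix length $n+2$ gives $w[0..N-1] \mred w''[0..n+1]$, and one further application of the step at $p$ (now entirely inside the finite prefix) yields $w'[0..n-1]$. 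In the limit case $w \red^\lambda w'$, strong convergence provides an ordinal $\beta_n < \lambda$ with $w_{\beta_n}[0..n-1] = w'[0..n-1]$, so the induction hypothesis applied to the shorter strongly convergent reduction $w \red^{\beta_n} w_{\beta_n}$ finishes the argument. With this lemma in hand, each prefix $\ssuc^k$ of $\ssuc^\omega$ arises as $w[0..N-1] \mred \ssuc^k$ for some $N$; since every rewrite step removes one $\ssuc$ and one $\spre$, we conclude $\funap{\mrm{sum}}{w,N} = k$. Hence $\snorm{w} \ge k$ for all $k$, i.e., $\snorm{w} = \infinity$.

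The delicate step is the limit case of the prefix-tracking lemma. A priori, transfinitely many steps at depth $\ge n$ may shift characters into the boundary region around position $n$ and repeatedly create or destroy redexes there, so extracting a finite witness for $w[0..N-1] \mred w'[0..n-1]$ by directly permuting steps across a limit ordinal would be awkward. Strong convergence circumvents this by supplying an earlier ordinal $\beta_n$ at which the first $n$ characters of the reduct have already stabilized, reducing the problem to the induction hypothesis on a strictly shorter, strongly convergent initial reduction.
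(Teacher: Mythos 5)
Your proof is correct. The direction $(\Leftarrow)$ is essentially the paper's argument: both decompose $w$ into finite blocks of sum $1$, collapse each to $\ssuc$ via Lemma~\ref{lem:collapse}, and observe that the concatenated stages are strongly convergent with limit $\ssuc^\omega$. For $(\Rightarrow)$ you take a genuinely different route. The paper first invokes the compression lemma to replace the given reduction by one of length $\omega$, and then uses a monotonicity observation — a single step can never increase $\snorm{\cdot}$ — to pull the unbounded $\ssuc$-prefixes of the approximants back to $w_0$. You instead prove a prefix-tracking lemma by transfinite induction (every length-$n$ prefix of the target is reached by a finite reduction from some finite prefix of the source), and then conclude from sum-preservation of finite reductions that $\funap{\mrm{sum}}{w,N} = k$ for suitable $N$, whence $\snorm{w} = \infty$. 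Your version is more self-contained: it does not rely on compression (which the paper imports for left-linear iTRSs and only refines later in Section~\ref{sec:compression}), and your limit-case handling via strong convergence is exactly the right way to avoid permuting steps across limit ordinals. What the paper's route buys is brevity — once the sequence has length $\omega$, the monotone chain $\snorm{w_0} \ge \snorm{w_1} \ge \cdots$ together with $\snorm{w_i} \ge n$ for some $i$ finishes the argument in two lines — at the cost of citing an external theorem. Your prefix-tracking lemma is in effect a hand-rolled, specialized compression argument for this length-decreasing string rewriting system, and it would remain valid verbatim if one did not wish to assume compression. The only cosmetic caveat is that your lemma as stated presupposes that $w'[0..n-1]$ exists, which holds in the intended application to $w' = \ssuc^\omega$.
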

\begin{proof}
  We consider only (i) as case (ii) can be treated analogously.

  We start with the direction `$\Leftarrow$'
  From $\snorm{w} = \infinity$ it follows that $w = w_1 w_2 \cdots$ 
  with finite words $w_1$, $w_2$, \ldots{} such that $\funap{\mrm{sum}}{w_i} = 1$ for all $i \in \nat$.
  Then $w_i \mred \ssuc$ for all $i \in \nat$ by Lemma~\ref{lem:collapse} and hence $w \ired \ssuc^\omega$.

  For `$\Rightarrow$' we argue as follows.
  By compression there is a rewrite sequence $w = w_0 \to w_1 \to w_2 \to \cdots$
  of length $\omega$ with limit $\ssuc^\omega$.
  Consequently, for every $n \in \nat$ there exists $i\in \nat$ 
  such that $\ssuc^n$ is a prefix of $w_i$, and hence, $\snorm{w_i} \ge n$.  
  Moreover, we have $\snorm{w_0} \ge \snorm{w_1} \ge \snorm{w_2} \ge \cdots$
  since removing $\ssuc\spre$ or $\spre\ssuc$ cannot increase the norm \mbox{$\snorm{{\cdot}}$}
  (it either stays constant or decreases by $1$).
  As a consequence we obtain that $\snorm{w_0} \ge n$ holds for every $n\in\nat$.
  It follows that $\snorm{w} = \snorm{w_0} = \infinity$.
\end{proof}

Note that in Proposition~\ref{prop:norm}
$w \ired \ssuc^\omega$ 
can always be achieved using the rule $\spre\ssuc \to \wordemp$ only.
And likewise the rule $\ssuc\spre \to \wordemp$ for $w \ired \spre^\omega$.

Now let us take a word $\pscountex$ with $\snorm{\pscountex} = \infinity$ 
\emph{and} $\pnorm{\pscountex} = \infinity$\,!
Then by the previous proposition $\pscountex$ reduces to 
both $\ssuc^\omega$ and $\spre^\omega$, both normal forms.
Hence $\UNinf$ fails.
Indeed, such a term $\pscountex$ can be found:
\begin{align*}
  \pscountex & =
  \spre \, \ssuc\ssuc \, \spre\spre\spre \, \ssuc\ssuc\ssuc\ssuc \, 
  \spre\spre\spre\spre\spre \, \ssuc\ssuc\ssuc\ssuc\ssuc\ssuc \, \cdots
\end{align*}
The graph for this word is displayed in Figure~\ref{fig:SP:graph}.
If we only apply rule $\spre\ssuc \to \wordemp$
the $\spre$-blocks are absorbed by the larger $\ssuc$-blocks to their right,
leaving the normal form $\ssuc^\omega$.
Likewise, applying only $\ssuc\spre \to \wordemp$ yields $\spre^\omega$.

We find that $\pscountex \ired w$ for every infinite \psword{} $w$,
and furthermore, the following generalisation.

\begin{proposition}
  Every infinite \psword{} that reduces to both $\ssuc^\omega$ and $\spre^\omega$
  reduces to any infinite \psword{}.
\end{proposition}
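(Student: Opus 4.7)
I would prove the proposition by explicitly constructing, for any target infinite $\spre\ssuc$\nbd word $v = c_1 c_2 c_3 \cdots$, a partition of $w$ into finite blocks whose sums match the symbols of $v$, and then collapsing each block to a single symbol along a strongly convergent reduction of length $\omega$.

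\textbf{Setup.} By Proposition~\ref{prop:norm}, the hypothesis that $w$ reduces to both $\ssuc^\omega$ and $\spre^\omega$ is equivalent to $\snorm{w} = \pnorm{w} = \infinity$, i.e.\ the partial sums $S_n = \funap{\mrm{sum}}{w,n}$ are unbounded both above and below. The first key observation is that this property is invariant under removing a finite prefix: if $w = u w'$ with $u$ finite, then $\funap{\mrm{sum}}{w',n} = \funap{\mrm{sum}}{w,\length{u}+n} - \funap{\mrm{sum}}{u}$, so $\snorm{w'} = \pnorm{w'} = \infinity$ as well.

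\textbf{Block construction.} Starting from $w^{(1)} = w$, I build a sequence of finite nonempty prefixes $w_i$ and suffixes $w^{(i+1)}$ with $w^{(i)} = w_i w^{(i+1)}$, as follows. Given $w^{(i)}$ with both norms infinite, and target symbol $c_i$, define $w_i$ as the shortest nonempty prefix $u$ of $w^{(i)}$ with $\funap{\mrm{sum}}{u} = +1$ if $c_i = \ssuc$, and with $\funap{\mrm{sum}}{u} = -1$ if $c_i = \spre$. Such a prefix exists because the partial sums of $w^{(i)}$ start at $0$, change by $\pm 1$ at each step, and by unboundedness eventually reach (say) $+2$ or $-2$; by the intermediate-value property of walks with unit jumps, they must first hit $\pm 1$. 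By Lemma~\ref{lem:collapse}, $w_i \mred c_i$ in finitely many steps. By the observation above, $w^{(i+1)}$ again has infinite $\ssuc$- and $\spre$-norm, so the construction proceeds by induction on $i \in \nat$, and $w = w_1 w_2 w_3 \cdots$.

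\textbf{Strong convergence.} Concatenating the finite reductions $w_i \mred c_i$ performed in situ yields the reduction
\begin{equation*}
  w = w_1 w_2 w_3 \cdots \;\mred\; c_1 w_2 w_3 \cdots \;\mred\; c_1 c_2 w_3 \cdots \;\mred\; \cdots
\end{equation*}
Since $\length{w_i} \ge 1$, every step in the $i$\nbd th block takes place at depth at least $i-1$, so the depths of the steps tend to infinity; moreover, after the $i$\nbd th block is collapsed, the current word agrees with $v$ on its first $i$ symbols, so the distance to $v$ is at most $2^{-i}$. This is a strongly convergent reduction of length at most $\omega$ with limit $v$, hence $w \ired v$.

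\textbf{Expected obstacle.} No step is deep; the only point requiring care is verifying that the block-search never stalls, i.e.\ that the invariant $\snorm{w^{(i)}} = \pnorm{w^{(i)}} = \infinity$ is indeed preserved, and that a prefix with sum exactly $\pm 1$ always exists. Both follow from the integer-step nature of the sum and the unboundedness on both sides; stating this cleanly is the main bookkeeping task of the proof.
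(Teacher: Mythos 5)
Your proof is correct and follows essentially the same route as the paper: both use Proposition~\ref{prop:norm} to get $\snorm{w}=\pnorm{w}=\infinity$, partition $w$ into finite blocks with sum $+1$ or $-1$ matching the target word, collapse each block via Lemma~\ref{lem:collapse}, and observe that the concatenated in-situ reductions are strongly convergent. The only difference is that you spell out the existence of the partition (invariance of the norms under removing a finite prefix, plus the intermediate-value argument for unit-step walks), which the paper leaves implicit.
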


\begin{proof}
  Suppose that $w$ is an infinite \psword{} with 
  $\spre^\omega \iredi w \ired \ssuc^\omega$,
  and let $u$ be the infinite \psword{} we want to obtain. 
  By applying Proposition~\ref{prop:norm} to $w$,
  we find: $\pnorm{w} = \snorm{w} = \infinity$. 
  This allows us to choose a partition $w = w_1 w_2 \cdots$ of $w$
  into finite words $w_1, w_2, \ldots$ such that, for all $i\in\nat$, it holds that
  $\funap{\mrm{sum}}{w_i} = 1$ if $\nth{u}{i} = \ssuc$, and 
  $\funap{\mrm{sum}}{w_i} = -1$ if $\nth{u}{i} = \spre$.
  From this, by Lemma~\ref{lem:collapse} we obtain rewrite sequences $w_i \mred \nth{u}{i}$, for all $i\in\nat$.
  By consecutively performing the corresponding finite rewrite sequences on the subwords $w_i$ of $w$, 
  we obtain a strongly convergent rewrite sequence that witnesses $w \ired u$.
\end{proof}

Hence, not only is $\pscountex$ a counterexample to $\UNinf$ for weakly orthogonal rewrite systems, 
but also, $\pscountex$ rewrites to $(\spre\ssuc)^\omega$, a word which has no normal form.
Thus, in contrast to orthogonal systems~\cite{terese:2003,klop:vrij:2005}, 
for individual terms the property $\WNinf$ of infinitary weak normalization
is not preserved under infinite rewriting.


\begin{figure}[ht!]
  \begin{minipage}{.50\textwidth}
  \begin{center}
    \begin{tikzpicture}[thick]
  \draw [gray,line width=.5mm] (1.25cm,2.75cm) circle (1cm);
  \node [anchor=north] at (1.25cm,3.75cm) {RA};
  \begin{scope}
  \clip (0cm,0cm) -- (2cm,3cm) -- (-2cm,3cm) -- cycle;
  \fill [gray!50,xshift=2.5cm] (0cm,0cm) -- (2cm,3cm) -- (-2cm,3cm) -- cycle;
  \end{scope}
  \begin{scope}
  \fill [pattern=dots] (0cm,0cm) -- (.6666cm,1cm) -- (-.6666cm,1cm) -- cycle;
  \draw (0cm,0cm) -- (2cm,3cm) -- (-2cm,3cm) -- cycle;
  \node (Sw) at (0cm,0cm) {$\bullet$} node [below] {$\ssuc^\omega$};
  \end{scope}
  \begin{scope}[xshift=2.5cm]
  \fill [pattern=dots] (0cm,0cm) -- (.6666cm,1cm) -- (-.6666cm,1cm) -- cycle;
  \draw (0cm,0cm) -- (2cm,3cm) -- (-2cm,3cm) -- cycle;
  \node at (0cm,0cm) {$\bullet$} node [below] {$\spre^\omega$};
  \end{scope}
  \draw [gray,line width=.5mm] (-2cm,1cm) -- (4.5cm,1cm) node [at end,below,black] {$\SNinf$};
  
  \node (snorm) [anchor=east] at (0.2cm,2.7cm) {$\snorm{w} = \infty$}; 
  \node (pnorm) [anchor=west] at (2.3cm,2.7cm) {$\pnorm{w} = \infty$}; 
  
  \node (rasi) at (0.7cm,2.2cm) {$\bullet_{\rasi}$};
  \node (rafn) at (1.8cm,3.2cm) {$\bullet_{\rafn}$};
  \node at ($(rasi)!.5!(rafn)$) {$\bullet_{\pscountex}$};
  \node at ($(rasi)!-0.5!(rafn)$) {$\bullet_{\rasi'}$};
\end{tikzpicture}
  \end{center}
  \end{minipage}
  \begin{minipage}{.40\textwidth}
  \begin{center}
  \begin{align*}
   \pscountex & = \spre^1\,\ssuc^2\,\spre^3\,\cdots \\
   \rafn      & = \spre\,\ssuc\,\spre\,\ssuc\,\spre\,\ssuc\,\cdots \\
   \rasi      & = \ssuc^1\,\spre^1\,\ssuc^2\,\spre^2\,\cdots \\
   \rasi'     & = \ssuc\,\rasi 
  \end{align*}
  \end{center}
  \end{minipage}
  \caption{\textit{Venn diagram of infinite \psword{s}.}}
  \label{fig:SP:Venn}
\end{figure}
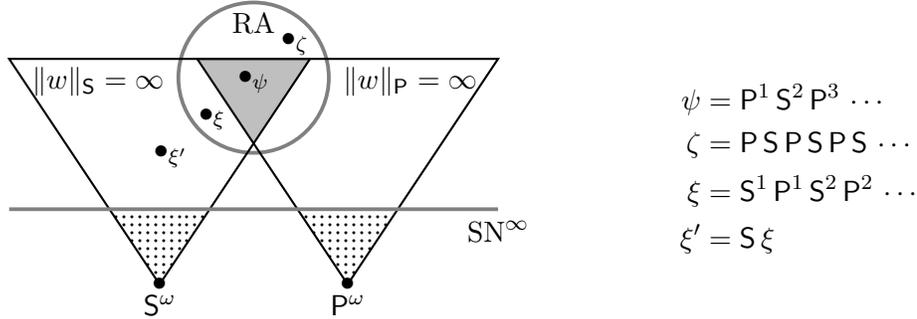
Figure~\ref{fig:SP:Venn} shows a more detailed analysis of various classes of \psword{s}.
By Proposition~\ref{prop:norm} an infinite word $w$ reduces to $\ssuc^\omega$ iff $\snorm{w} = \infty$,
and to $\spre^\omega$ iff $\pnorm{w} = \infty$.
The shaded non-empty intersection 
($\snorm{w} = \pnorm{w} = \infty$)
contains the counterexample word~$\pscountex$ mentioned above.
All words in this intersection are root-active (RA),
that is, every \mbox{$\mred$-reduct} can be reduced to a redex (at the root).
However, there are also other root-active words.
For example $\rasi = \ssuc \, \spre \, \ssuc^2 \, \spre^2 \, \ssuc^3 \, \spre^3 \cdots$
is a root-active word
which reduces to $\ssuc^\omega$ but not to $\spre^\omega$ (i.e., $\pnorm{\rasi} = 0 < \infty$ and $\snorm{\rasi} = \infty$).
The word $\rasi' = \ssuc\,\rasi$ (a reduct of $\rasi$) 
is not root-active but still not $\SNinf$, yet it reduces to $\ssuc^\omega$.
An example of a root-active term
which reduces only to itself (implying that $\snorm{\rasi}$ and $\pnorm{\rasi}$ are finite)
is $\rafn = (\spre \, \ssuc)^\omega$.
The dotted part consists of words with the property of infinitary strong normalization, 
normalizing to $\ssuc^\omega$, or $\spre^\omega$, respectively.
For instance $(\ssuc\,\ssuc\,\spre)^\omega$ is in the left dotted triangle.

The root-active words can be characterized as follows.
\begin{proposition}\label{prop:root-active:SPterms}
  An infinite \psword~$w$ is root-active if and only if $w$ is the concatenation
  of infinitely many finite `zero-words' $w_1,w_2,w_3,\ldots$, that is, 
  words $w_i$ with $\funap{\mrm{sum}}{w_i} = 0$. 
\end{proposition}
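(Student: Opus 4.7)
Both directions are proved by tracking the original positions (in $w$) of characters through reductions. For the direction $(\Leftarrow)$, suppose $w = u_1 u_2 u_3 \cdots$ with each $u_k$ a finite zero-word. By Lemma~\ref{lem:collapse} with $z = 0$, each $u_k$ admits a finite reduction $u_k \mred \wordemp$. I will carry out these reductions phase by phase: in phase $k$, reduce the leftmost block $u_k$ to $\wordemp$ inside the current term $u_k u_{k+1} u_{k+2} \cdots$. Just before the final step of phase $k$, the current term begins with $\spre\ssuc$ or $\ssuc\spre$, so this step is forced to take place at the root. Concatenating the $\omega$ many finite phases yields a rewrite sequence of length $\omega$ which is automatically a transfinite rewrite sequence (no limit ordinal lies strictly below $\omega$) and contains infinitely many root steps, witnessing root-activity of $w$.

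For the direction $(\Rightarrow)$, fix a transfinite rewrite sequence $\phi$ from $w$ containing infinitely many root steps. Label each character of the current term by its original position in $w$. Each rewrite step removes the two labels of its redex, and pairing these up yields a matching on the set of labels removed during $\phi$, in which each pair consists of one $\spre$-labelled and one $\ssuc$-labelled position. The key observation is: two labels $i < j$ can form a redex at some moment only when every label strictly between them has already been removed, since the redex characters must be adjacent in the current term. Let $N_k$ be the label of the second character of the redex at the $k$-th root step; then $N_1 < N_2 < \cdots$ tends to infinity. I will partition $w$ as $v_1 v_2 v_3 \cdots$ with $v_k \defdby \nth{w}{N_{k-1}+1}\cdots\nth{w}{N_k}$ (using $N_0 \defdby -1$), and show each $v_k$ is a zero-word.

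The crux, and what I expect to be the main obstacle, is to show that up to and including the $k$-th root step, no pair $(i,j)$ with $i \le N_k < j$ has been removed. If such a pair were removed strictly before the $k$-th root step, then $N_k$ would still be present at that moment and lie strictly between $i$ and $j$, contradicting the key observation; the $k$-th root step itself removes a pair with both entries in $[0,N_k]$, so it does not straddle $N_k$ either. Hence, at time of the $k$-th root step, the fully consumed label set $\{0, 1, \ldots, N_k\}$ is matched entirely among itself. Since every matched pair consists of one $\spre$- and one $\ssuc$-position, the prefix $\nth{w}{0}\cdots\nth{w}{N_k}$ has equal $\spre$- and $\ssuc$-counts, so $\funap{\mrm{sum}}{\nth{w}{0}\cdots\nth{w}{N_k}} = 0$ for all $k \ge 1$. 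Telescoping yields $\funap{\mrm{sum}}{v_k} = 0$, completing the decomposition.
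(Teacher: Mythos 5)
Your proof is correct and follows essentially the same route as the paper: label the letters of $w$ by their original positions, observe that at each root step the entire prefix up to the contracted redex has been consumed, conclude that these prefixes are zero-words, and take consecutive differences. The only cosmetic difference is that where the paper justifies ``prefix consumed $\Rightarrow$ zero-word'' by noting the prefix has been rewritten to $\wordemp$ and $\mrm{sum}$ is invariant under reduction (as in Lemma~\ref{lem:collapse}), you re-derive this via the explicit matching of deleted $\spre$/$\ssuc$ pairs and the non-straddling observation.
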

\begin{proof}
  The direction `$\Leftarrow$' is obvious.
  For `$\Rightarrow$' assume that $w$ is root active.
  Then $w$ admits a rewrite sequence containing infinitely many root steps.
  We label all $\ssuc$'s and $\spre$'s in $w$ by numbering them
  from left to right, so e.g.\ the labelled $w$ could be:
  $\ssuc_0 \, \ssuc_1 \, \spre_2 \, \ssuc_3 \, \spre_4 \, \spre_5 \, \cdots$
  Let $w_i$ be the prefix of $w$ of length $i$.
  For every $i \in \nat$, if there is a rewrite sequence $w \to^* \ssuc_iw'$ or $w \to^* \spre_iw'$ 
  for some $w'$, then $w_i$ must be a zero-word (as $w_i$ has been rewritten to $\wordemp$).
  Since $w$ is root-active, there are infinitely many $i \in \nat$
  such that $w \to^* \ssuc_i\cdots$ or $w \to^* \spre_i\cdots$.
  Thus there are indices $0 = i_0 < i_1 < i_2 < \cdots$ such that $w_{i_k}$ is a zero-word for every $k \in \nat$.
  For every $i \in \nat$, there exists a word $v_i$ such that $w_{i+1} = w_i v_i$,
  and since $w_{i}$ and $w_{i+1}$ are zero-words it follows that $v_i$ is a zero-word.
  The claim follows since $w = v_0 v_1 v_2\cdots$.
\end{proof}
As a consequence of this proposition,
an infinite \psword~$w$  is root-active 
if and only if
$\funap{\mrm{sum}}{w,n} = 0$ for infinitely many $n$,
and hence, if 
$ ((\liminf)_{n\to\infty} \abs{\funap{\mrm{sum}}{w,n}}) = 0 $.
%

\begin{corollary}
  For an infinite \psword~$w$ we have $\funap{\SNinf}{w}$ if and only if each value $\funap{\mrm{sum}}{w,n}$
  for $n=0,1,\dots$ occurs only finitely often. 
\end{corollary}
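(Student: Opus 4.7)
The plan is to prove both directions by contrapositive, linking violations of strong convergence in reductions from $w$ to values attained infinitely often by $\funap{\mrm{sum}}{w,\cdot}$.

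For the direction $\funap{\SNinf}{w} \implies$ each value occurs only finitely often, I would argue contrapositively: assume some integer $c$ is attained by $\funap{\mrm{sum}}{w,\cdot}$ at infinitely many positions $n_0 < n_1 < n_2 < \cdots$. Let $w^{(0)}$ be the prefix of $w$ of length $n_0$, and for $i \ge 1$ let $w^{(i)}$ be the finite subword of $w$ from position $n_{i-1}+1$ through $n_i$. Then $w = w^{(0)} w^{(1)} w^{(2)} \cdots$ and $\funap{\mrm{sum}}{w^{(i)}} = 0$ for each $i \ge 1$, so by Lemma~\ref{lem:collapse} each $w^{(i)}$ reduces in finitely many steps to $\wordemp$. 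Performing these finite reductions in turn yields an infinite rewrite sequence from $w$; for each $i \ge 1$ the final step in the segment eliminating $w^{(i)}$ rewrites the last two remaining symbols of $w^{(i)}$, which sit at depths $n_0$ and $n_0+1$ in the current word, so the step occurs at depth exactly $n_0$. Thus infinitely many steps take place at the bounded depth $n_0$ and the sequence is not strongly convergent, so $\funap{\SNinf}{w}$ fails.

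For the converse, suppose contrapositively that some infinite rewrite sequence $w \equiv t_0 \to t_1 \to t_2 \to \cdots$ is not strongly convergent, and fix $D \in \nat$ such that infinitely many steps take place at depth $\le D$. I would track each surviving symbol by its original position in $w$. A straightforward induction on the step count shows that the set of original positions deleted after any finite prefix of the reduction carries a non-crossing matching whose pairs each comprise one $\spre$ and one $\ssuc$: a step can fire only when the two symbols to be deleted have become adjacent in the current word, forcing every position strictly between them to have been deleted earlier. For each step at depth $d_k \le D$ that removes the pair at original positions $a_k < b_k$, no pair of the matching can straddle $a_k$ (that would require $a_k$ to be already deleted), so the previously deleted positions in $[1, a_k-1]$ are paired among themselves and contribute zero to $\funap{\mrm{sum}}{w, a_k - 1}$. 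Hence $\funap{\mrm{sum}}{w, a_k - 1}$ equals the sum of the $d_k \le D$ currently-surviving symbols in $[1, a_k - 1]$, and therefore lies in $\{-D, \ldots, D\}$. Since the $a_k$ are pairwise distinct across the infinitely many shallow steps, the pigeonhole principle delivers some value in $\{-D, \ldots, D\}$ attained by $\funap{\mrm{sum}}{w,\cdot}$ infinitely often.

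The main obstacle is the second direction, specifically the inductive bookkeeping showing that the deleted original positions always form a non-crossing zero-sum matching and that no pair can straddle a not-yet-deleted position. The underlying intuition — that this SRS only deletes, and that adjacency at firing time forces nesting — is clear, but stating the invariant sharply enough for the pigeonhole step to go through is where care is required.
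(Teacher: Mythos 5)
Your proof is correct, and for the harder direction it takes a genuinely different route from the paper. For the ``only if'' direction your construction is essentially the paper's: the paper also cuts $w$ at the positions where the repeated value is attained, observes that the resulting factors are zero-words, and concludes via its Proposition~3.5 that the suffix is root-active (the underlying divergent reduction being exactly the one you build explicitly, segment by segment, with the last step of each segment landing at depth $n_0$). For the converse, however, the paper argues structurally: it takes the minimal depth $d$ at which infinitely many steps occur, passes to a finite reduct $v$ of $w$ having a root-active suffix, invokes Proposition~3.5 to decompose that suffix into zero-words, and then lifts all but finitely many of these zero-words back to $w$ to read off equal sums at their boundary depths. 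You instead work directly on the divergent reduction via a residual-tracking argument: the deleted original positions form a non-crossing matching of $\spre$/$\ssuc$ pairs (adjacency at firing time forces nesting), so for each shallow step the deleted symbols left of its redex cancel in pairs and $\funap{\mrm{sum}}{w,a_k-1}$ is bounded by the depth $d_k\le D$, whence pigeonhole. Your invariant is correct and the induction goes through exactly as you sketch; the payoff is a self-contained argument that bypasses Proposition~3.5 and the somewhat terse ``lift the zero-words back through the finite prefix'' step of the paper, at the cost of the bookkeeping you identify. One small point to make explicit: failure of $\SNinf$ a priori gives a divergent \emph{transfinite} sequence, so to write it as $t_0\to t_1\to\cdots$ with infinitely many steps at depth $\le D$ you should appeal either to Corollary~5.2 (compression of divergent sequences) or to the fact, used by the paper and credited to Kennaway et al., that some depth is hit infinitely often; this is routine but should be said.
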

\begin{proof}
  For the `only if' direction, assume we have 
  $\funap{\mrm{sum}}{w,n_i} = k$ for $i = 1,2,\ldots$ such that $n_i < n_{i+1}$.
  For $i \ge 1$, let $w(i)$ denote the $i$-th letter of $w$.
  For every $i \in \nat$, we define $v_i = w(n_i + 1)w(n_i + 2) \cdots w(n_{i+1})$,
  that is, $v_i$ is the factor of $w$ starting at the $(n_i+1)$-th letter
  and ending at (including) the $n_{i+1}$-st letter.
  Then, for every $i\in\nat$, $v_i$ is a zero-word, i.e., $\funap{\mrm{sum}}{v_i} = 0$. 
  By Proposition~\ref{prop:root-active:SPterms} 
  we obtain that $v_1 v_2 \cdots$ is a root-active word.
  Hence $w$ is not $\SNinf$. 
  
  For the other direction, assume that $w$ is not $\SNinf$. 
  Then there exists $d\in\nat$ such that $w$ admits a rewrite sequence with
  infinitely many rewrite steps at depth $d$, see~\cite{kenn:klop:slee:vrie:1995}.
  It follows (take $d$ minimal) that there is a word $v$ with $w \mred v$ such that $v$ has a root-active suffix $v'$.
  By Proposition~\ref{prop:root-active:SPterms} the word $v'$ is again a concatenation of zero-words.
  Since $w \mred v$ is finite, apart from a finite prefix these zero-words are already present in $w$:
  $w = w' v_m v_{m+1} \cdots$ for some prefix $w'$ and $m \geq 1$. 
  For $i \ge m$ let $\apos_i$ be the depth of the displayed $v_i$ in $w$.
  Then $\funap{\mrm{sum}}{w,\apos_m} = \funap{\mrm{sum}}{w,\apos_{m+1}} = \cdots$. 
\end{proof}
It follows that
$\funap{\SNinf}{w}$ holds if and only if
$ ((\liminf)_{n\to\infty} \abs{\funap{\mrm{sum}}{w,n}}) = \infty $,
and hence,
if 
$\lim_{n \to \infty} \funap{\mrm{sum}}{w,n} 
  \in \{ \infty, -\infty \}$.
The `only if'-part follows since 
if $\funap{\SNinf}{w}$ holds,
then each value $\funap{\mrm{sum}}{w,n}$ for $n=0,1,\dots$ occurs only finitely often,
and hence for every $m\in\mathbb{Z}$, 
either $\funap{\mrm{sum}}{w,n}$ will eventually (for large enough $n$) stay above $m\in\nat$ or eventually stay below $m\in\nat$.

\section{A Counterexample to \texorpdfstring{$\UNinf$}{UNinfty} of the
  Infinitary \texorpdfstring{$\lambda\beta\eta$}{lambda-beta-eta}-Calculus}\label{sec:beta:eta}

We give a translation of the word 
$\pscountex = \spre^1\,\ssuc^2\,\spre^3\,\cdots$ 
from the previous section
into an infinite $\lambda$-term which then forms a counterexample to
the infinitary unique normal form property $\UNinf$ 
for $\ilbe$, the infinitary $\lambda\beta\eta$-calculus.
The infinitary $\lambda\beta\eta$-calculus~\cite{seve:vrie:2002,seve:vrie:2005}
is a well-known example 
of a weakly orthogonal higher-order term rewrite system.

The set $\lamter$ of (potentially) infinite $\lambda$-terms
is coinductively defined by:
\begin{align}
  M \coBNFis x \BNFor M M \BNFor \mylam{x}{M}  
  \tag{$\lamter$}
\end{align}
Here $\coBNFis$ is used to indicate that the grammar has to be interpreted coinductively, 
that is, instead of the least we take the greatest fixed point of the underlying functor.
Alternatively, infinite \mbox{$\lambda$-terms} can be defined in a similar vein 
as we defined infinite first-order terms in Section~\ref{sec:prelims}.
We let the set of variables be an uncountably infinite set.
This guarantees that no term can contain all variables.
The reason is that $\beta$-reduction may require $\alpha$-conversion and fresh names for binders.

The rewrite rules of $\ilbe$ are:
\begin{align}
  (\mylam{x}{M})N &\to \lsubst{M}{x}{N} \tag{$\beta$}\\
  \mylam{x}{M x} &\to M &&\text{if $x$ is not free in $M$} \tag{$\eta$}
\end{align}
where $\lsubst{M}{x}{N}$ denotes the result of substituting $N$ 
for all free occurrences of $x$ in $M$.
The $\ilbe$-calculus allows for two critical pairs\footnotemark[1]\footnotetext[1]{%
  We use the notation of infinitary $\lambda$-calculus,
  but we view the rule schemes ($\beta$) and ($\eta$) as rules of a second-order HRS,
  thereby obtaining a formal notion of critical pairs (\cite[Def.~11.6.10]{terese:2003}).
  Likewise, CRSs can be viewed as second-order HRSs.
}:
\begin{align*}
  M x \stackrel{\beta}{\redi} (\mylam{x}{M x} ) x \stackrel{\eta}{\to} M x
  &&
  \mylam{x}{\lsubst{M}{y}{x}} \stackrel{\beta}{\redi} \mylam{x}{(\mylam{y}{M}) x} \stackrel{\eta}{\to} \mylam{y}{M} 
\end{align*}
As we have that $\mylam{x}{\lsubst{M}{y}{x}}$ and $\mylam{y}{M}$ 
are equal modulo renaming of bound variables, 
both of these critical pairs are trivial. Hence $\ilbe$ is weakly orthogonal.

We translate infinite \psword{s} to $\lambda$-terms.
\begin{definition}\label{def:pslam}
  We define $\spstolam \funin {\{\spre,\ssuc\}}^\omega \to \lamter$
  by $\pstolam{w} = \pstolami{w}{0}$, for all $w\in{\{\spre,\ssuc\}}^\omega$,
  where $\pstolami{w}{i}$ is defined coinductively, for all $i \in \ints$, 
  as follows:
  \begin{align*}
  \pstolami{\spre w}{i}
  & = \pstolami{w}{i-1} \, x_{i}
  &
  \pstolami{\ssuc w}{i}
  & = \mylam{x_{i+1}}{\pstolami{w}{i+1}}
  \end{align*}
\end{definition}
\noindent
The translation of $\pscountex$ is the $\lambda$-term $\pstolam{\pscountex}$, 
displayed in the middle of Figure~\ref{fig:betaeta}.
\newcommand{\tarrow}[5]{
  \begin{scope}[>=angle 90,->]
  \draw ($(#1)!#3!(#2)$) -- ($(#1)!#4!(#2)$);
  \draw [shorten >= 1.5mm] ($(#1)!#3!(#2)$) -- ($(#1)!#4!(#2)$);
  \draw [shorten >= 3mm] ($(#1)!#3!(#2)$) -- ($(#1)!#4!(#2)$) node [midway,above,inner sep=2mm] {#5};
  \end{scope}
}%
\begin{figure}[h!]
  \begin{center}
  \begin{tikzpicture}[level distance=6mm,inner sep=1mm,sibling distance=10mm,
                      app/.style={grow=-90,level distance=3.5mm},
                      abs/.style={grow=-90,level distance=6mm},
                      inner sep=0.2mm,
                      c0/.style={blue},
                      c1/.style={blue!70!red},
                      c2/.style={blue!50!red},
                      c3/.style={blue!30!red},
                      c4/.style={blue!00!red}]
    \node [c0] (root) {$\treeap$} [app,c0]
      child [c2] { node [c2] {$\lambda x_0$} [abs,c0]
        child [c4] { node [c4] {$\lambda x_1$} [abs,c2]
          child [c4] { node [c4] {$\treeap$} [app,c4]
            child [c2] { node [c2] {$\treeap$} [app,c2]
              child [c0] { node [c0] {$\treeap$} [app,c0]
                child [c1] { node [c1] {$\lambda x_{-1}$} [abs,c0]
                  child [c2] { node [c2] {$\lambda x_0$} [abs,c1]
                    child [c3] { node [c3] {$\lambda x_1$} [abs,c2]
                      child [c4] { node [c4] {$\lambda x_2$} [abs,c3]
                        child [c4] { node [c4] {$\treeap$} [app,c4]
                          child [c3] { node [c3] {$\treeap$} [app,c3]
                            child [c0] { node [c0] {$\treeap$} [app,c2]
                              child [c0] { node [c0] {$\treeap$} [app,c1]
                                child [c0] { node [c0] {$\treeap$} [app,c0]
                                  child [c0] { node [c0] {$\vdots$} }
                                  child [c0] { node [c0,inner sep=0.5mm] {$x_{-2}$} }
                                }
                                child [c1] { node [c1,inner sep=0.5mm] {$x_{-1}$} }
                              }
                              child [c2] { node [c2] {$x_{0}$} }
                            }
                            child [c3] { node [c3] {$x_1$} }
                          }
                          child [c4] { node [c4] {$x_2$} }
                        }
                      }
                    }
                  }
                }
                child [c0] { node [c0,inner sep=0.5mm] {$x_{-1}$} }
              }
              child [c2] { node [c2] {$x_0$} }
            }
            child [c4] { node [c4] {$x_1$} }
          }
        }
      }
      child [c0] { node [c0] {$x_0$} }
      ;
     
    \begin{scope}
    \node [c4] (left) [left of=root,node distance=40mm,yshift=-1cm] {$\lambda x_1$} 
      child [c4] { node [c4] {$\lambda x_2$} } [abs] 
      child [c4,level distance=7mm] { node [c4,inner sep=-1mm] {$\vdots$} }
      ;
    \end{scope}

    \node [c0] (right) [right of=root,node distance=40mm,yshift=-1cm] {$\treeap$} [app,c0]
      child [c0] { node [c0] {$\treeap$} [app,c0]
        child [c0] { node [c0] {$\treeap$} [app,c0]
          child { node [c0] {$\vdots$} }
          child [c0] { node [c0,inner sep=0.5mm] {$x_{-2}$} }
        }
        child [c0] { node [c0,inner sep=0.5mm] {$x_{-1}$ } }
      }
      child [c0] { node [c0] {$x_{0}$} }
      ;
    
    \begin{scope}[blue]
    \tarrow{root}{left}{.3}{0.85}{$\beta$}
    \end{scope}
    \begin{scope}[red]
    \tarrow{root}{right}{.3}{0.85}{$\eta$}
    \end{scope}
  \end{tikzpicture}
  \caption{\textit{Counterexample to unique normal forms in $\ilbe$.}}
  \label{fig:betaeta}
  \end{center}
\end{figure}%
This term has two normal forms (corresponding to $\ssuc^\omega$ and $\spre^\omega$),
as indicated in the figure.
\begin{remark}
  Note that the $\lambda$-term $\pstolam{\pscountex}$ 
  as well as one of its normal forms
  contains infinitely many bound variables.
  We remark that it is possible to define a translation
  from infinite \psword{s} to $\lambda$\nb-terms 
  such that $\pstolam{\pscountex}$ has as normal forms 
  $A = \mylam{x_1}{A}$ (using only one index for the abstractions) 
  and $B = B \, x_0$ (using only one free variable).
\end{remark}

While $\pstolam{\pscountex}$ cannot be generated from a finite $\lambda$\nb-term
(it has infinitely many free variables), 
the finite term $W W I$
where $W = \mylam{w f}{f (w w (\mylam{a b c}{f (a b c)}) x_0)}$ and $I = \mylam{a}{a}$
exhibits a similar behaviour, 
reducing both to $A = \mylam{x}{A}$ and $B = B x_0$.
This can be seen as follows:
Let $V_n = \mylam{v_1\ldots v_n}{(v_1 \ldots v_n)}$.
First note that
$W W I \red_\beta^{2}  I (W W (\mylam{a b c}{I (a b c)}) x_0) \red_\beta^{2}  W W V_3 x_0$.
Then we get:
\begin{align*}
    W W V_3 x_0 
    & \red_\beta^{2}
    V_3 (W W (\mylam{a b c}{V_3 (a b c)}) x_0) x_0
    \red_\beta^3
    \mylam{v_3}{W W V_5 x_0 x_0 v_3} \\
    & \red_\beta^6
    \mylam{v_3 v_5}{W W V_7 x_0 x_0 x_0 v_3 v_5} 
    \ired_\beta 
    \mylam{v_3 v_5 v_7 \ldots}
    =_\alpha A
    \\
    W W V_3 x_0 & \red_\eta^{2}  (W W I) x_0 \ired_{\beta\eta} B
\end{align*}
Note that the number of bound variables needed along
the reduction from $W W (\mylam{a}{a})$ to $A$ is unbounded, 
but that $A$ can be written using only a single one. 
We conjecture that it holds for every counterexample to $\UNinf$
in the infinitary $\lambda\beta\eta$\nb-cal\-cu\-lus
that during the rewrite process to one of the normal forms unboundedly many variables are needed.

The translation given in Definition~\ref{def:pslam}
lifts $\spre\ssuc \to \wordemp$ to $\beta$, 
and $\ssuc\spre \to \wordemp$ to $\eta$. 
\begin{lemma}\label{lem:pstolam}
  An application of the rule $\spre\ssuc \to \wordemp$ at depth $k$ 
  in an infinite \psword{} $w$ corresponds to a $\beta$-step 
  in $\ilbe$ at depth $k$ in $\pstolami{w}{i}$.
  Similarly so for the rule $\ssuc\spre \to \wordemp$ and the $\eta$-rule.
  These correspondences are indicated in the following diagrams: 
  \begin{center}
    \begin{tikzpicture}[inner sep=1mm,>=stealth]

  \begin{scope}[node distance=15mm]
  \node (a) {$\spre\ssuc w$};
  \node (b) [right of=a,node distance=35mm] {$(\mylam{x_i}{\pstolami{w}{i}}) \, x_i$};
  \node (c) [below of=a] {$w$};
  \node (d) [below of=b] {$\pstolami{w}{i}$};
  \end{scope}

  \draw [->] (a) -- (b) node [midway,above] {$\spstolami{i}$};
  \draw [->] (a) -- (c) node [midway,left] {$\spre\ssuc$};
  \draw [->] (c) -- (d) node [midway,above] {$\spstolami{i}$};
  \draw [->] (b) -- (d) node [midway,left] {$\beta$};

\end{tikzpicture}
\quad 
\begin{tikzpicture}[inner sep=1mm,>=stealth]

  \begin{scope}[node distance=15mm]
  \node (a) {$\ssuc\spre w$};
  \node (b) [right of=a,node distance=35mm] {$\mylam{x_{i+1}}{\pstolami{w}{i}\,x_{i+1}}$};
  \node (c) [below of=a] {$w$};
  \node (d) [below of=b] {$\pstolami{w}{i}$};
  \end{scope}

  \draw [->] (a) -- (b) node [midway,above] {$\spstolami{i}$};
  \draw [->] (a) -- (c) node [midway,left] {$\ssuc\spre$};
  \draw [->] (c) -- (d) node [midway,above] {$\spstolami{i}$};
  \draw [->] (b) -- (d) node [midway,left] {$\eta$};

\end{tikzpicture}
  \end{center}
\end{lemma}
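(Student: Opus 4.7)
The plan is to proceed by induction on the depth $k$ of the rewrite step in $w$, with base case $k=0$ (step at the root) and inductive case delegating the step to a suffix of $w$.

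For the base case, I would verify both diagrams by unfolding the translation twice using Definition~\ref{def:pslam}. For the $\beta$-diagram,
\[
  \pstolami{\spre\ssuc w}{i}
  \;=\; \pstolami{\ssuc w}{i-1}\,x_i
  \;=\; (\mylam{x_i}{\pstolami{w}{i}})\,x_i,
\]
so a $\beta$-step at the root yields $\lsubst{\pstolami{w}{i}}{x_i}{x_i} \synteq \pstolami{w}{i}$. For the $\eta$-diagram,
\[
  \pstolami{\ssuc\spre w}{i}
  \;=\; \mylam{x_{i+1}}{\pstolami{\spre w}{i+1}}
  \;=\; \mylam{x_{i+1}}{\pstolami{w}{i}\,x_{i+1}},
\]
and an $\eta$-step at the root yields $\pstolami{w}{i}$, conditional on $x_{i+1}$ not being free in $\pstolami{w}{i}$.

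The main, but modest, obstacle is verifying this freshness side condition for the $\eta$-step. I would establish it as an auxiliary lemma: for every infinite \psword{} $v$ and every index $j\in\ints$, every free variable of $\pstolami{v}{j}$ is of the form $x_m$ with $m\leq j$. Consider the running index sequence $i_0 = j,\, i_1,\, i_2,\ldots$ during traversal of $v$, which increments on $\ssuc$ and decrements on $\spre$. Any free occurrence of $x_m$ in $\pstolami{v}{j}$ must be introduced by a $\spre$ at some position $k'$ with $i_{k'} = m$, and must not lie inside any enclosing binder $\lambda x_m$ arising from an earlier $\ssuc$ at a position $k<k'$ with $i_{k+1} = m$. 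If $m>j$, the sequence starts at $j$, changes by $\pm 1$, and attains $m$ at $k'$; hence the first position $k_1\leq k'$ with $i_{k_1}=m$ is necessarily reached from below, so $v_{k_1-1} = \ssuc$ and $i_{k_1} = m$, and we may take $k_0 = k_1-1<k'$ to produce the required enclosing binder, contradicting freeness. In particular $x_{i+1}$ is not free in $\pstolami{w}{i}$, justifying the $\eta$-step.

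For the inductive step at depth $k+1>0$, split on the leading letter of $w$. If $w = \spre w'$, the rewrite occurs at depth $k$ in $w'$; by the induction hypothesis applied at index $i-1$, it corresponds to a $\beta$- or $\eta$-step at depth $k$ in $\pstolami{w'}{i-1}$, with result $\pstolami{u'}{i-1}$ where $w'\to u'$ is the contraction. Since $\pstolami{w}{i} = \pstolami{w'}{i-1}\,x_i$, this step lifts through the application node to one at depth $k+1$ in $\pstolami{w}{i}$, producing $\pstolami{u'}{i-1}\,x_i = \pstolami{\spre u'}{i}$, as required. The case $w = \ssuc w'$ is entirely symmetric, the step lifting through a $\lambda$-abstraction rather than an application, using $\pstolami{\ssuc w'}{i} = \mylam{x_{i+1}}{\pstolami{w'}{i+1}}$.
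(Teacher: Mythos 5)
Your proof is correct; the paper in fact states Lemma~\ref{lem:pstolam} without proof, and your unfolding of Definition~\ref{def:pslam} at the root together with the lifting of deeper steps through the application/abstraction spine is exactly the intended verification, with matching depths since each letter of $w$ contributes one constructor to the spine. The only point requiring real work --- the side condition that $x_{i+1}$ is not free in $\pstolami{w}{i}$ for the $\eta$-step --- you identify and settle correctly: the running index, starting at $j$ and changing by $\pm 1$, can first attain a value $m>j$ only by an increment, i.e.\ via a $\ssuc$ that installs a binder $\lambda x_m$ enclosing every later occurrence of $x_m$.
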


The counterexample to the infinitary unique normal form property
$\UNinf$ for infinitary $\lambda\beta\eta$-calculus ($\ilbe$) establishes a striking contrast
to the situation for infinitary $\lambda\beta$-calculus ($\ilb$). 
In the latter, infinitary confluence breaks down, but infinitary normal forms stay unique. 
However, when the $\eta$-rule is added, and the infinitary perspective is 
maintained, then `everything' breaks down dramatically: not only infinitary 
confluence, but also unique infinitary normal forms.

The counterexample displays an interesting phenomenon from the point of 
view of the three main semantics of $\lambda\beta$-calculus, to wit, 
the B\"{o}hm Tree (BT), L\'{e}vy--Longo Tree (LLT), and Berarducci Tree (BeT) semantics. 
The middle term in Figure~\ref{fig:betaeta} 
has an infinite `spine', that is, 
a branch consisting of steps down or to the left. 
Such branches signal a term without head normal form in the BT semantics.
To the left and to the right, the terms are infinite weak head normal forms;
such terms are meaningful in the LLT and BeT semantics. 
Thus the counterexample also shows that L\'{e}vy--Longo trees and Berarducci trees for $\ilbe$ are not unique.
By contrast, B\"{o}hm trees for $\ilbe$ are unique,
and B\"ohm reduction can be employed to restore infinitary confluence and unique normal forms,
see~\cite{seve:vrie:2002};
there it has also been observed that 
L\'{e}vy--Longo trees and Berarducci trees are not unique for $\ilbe$.

From the perspective of combinatory reduction systems (CRSs, see~\cite{terese:2003})
the $\eta$-rule has many undesirable properties:
(i)~it is undecidable whether an infinite term is an $\eta$-redex,
since it is undecidable whether an infinite term contains a variable freely;
(ii)
single-step $\eta$-reduction is not lower semi-continuous: 
if $t$ $\eta$-reduces to $u$, then for a given $\epsilon > 0$ 
we cannot always find a $\delta > 0$ such that anything 
within $\delta$-distance of $t$ $\eta$-reduces to something
within $\epsilon$-distance of $u$;
(iii)~the $\eta$-rule is not fully-extended, 
and various existing results for orthogonal infinite CRSs require fully-extendedness, 
see~\cite{kete:simo:2009}.

\section{A Refinement of the Compression Lemma}\label{sec:compression}

As a preparation for Section~\ref{sec:confluence} we will prove the following lemma,
which is a refined version of the compression lemma in left-linear \iTRS{s}.
In its original formulation 
(e.g.\ see \cite[Theorem~12.7.1,\hspace*{2.5pt}p.\hspace*{1.5pt}689]{terese:2003}) 
the compression lemma states that every strongly convergent rewrite sequence from $s$ to $t$ in left-linear \iTRS{}s
can be compressed to a strongly convergent rewrite sequence from $s$ to $t$ of length 
at most $\omega$.
To see why left-linearity is a necessary condition, 
consider the following example of a non-left-linear TRS from~\cite{ders:kapl:plai:1991}:
\begin{align*}
  a \to g(a) && b \to g(b) && f(x,x) \to c
\end{align*}
In this TRS every rewrite sequence of the form $f(a,b) \ired f(g^\omega,g^\omega) \red c$ of length $\ge \omega + 1$
cannot be compressed to one of length $\omega$.

%

The refined version adds that compression can be carried out in such
a way that the minimal depth of steps stays the same.
This version can then be applied to show that also rewrite sequences 
that are not strongly convergent can be compressed.
We recall that a rewrite sequence of ordinal length $\alpha$ 
is strongly convergent if for each limit ordinal $\lambda \leq \alpha$ 
the depth of the contracted redexes tends to infinity.
As a consequence, a strongly convergent reduction can only contain
finitely many rewrite steps at every depth $d\in\nat$~\cite{kenn:klop:slee:vrie:1995}.

\begin{theorem}[Refined Compression Lemma]
  \label{thm:compression}
  Let $\atrs$ be a left-linear \iTRS. 
  Let $\aseq \funin s \to^\alpha_R t$ be a rewrite sequence, 
  $d$ the minimal depth of a step in $\aseq$, 
  and $n$ the number of steps at depth $d$ in $\aseq$.
  Then there exists a rewrite sequence $\aseq' \funin s \to^{\le \omega}_R t$
  in which all steps take place at depth $\ge d$, and where precisely
  $n$ steps contract redexes at depth~$d$.
\end{theorem}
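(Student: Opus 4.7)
The plan is to proceed by transfinite induction on the length $\alpha$ of the sequence $\aseq$, refining the proof of the standard compression lemma (\cite[Theorem~12.7.1]{terese:2003}) to track depth information. I would strengthen the statement to the following induction hypothesis: if $\aseq \funin s \to^\alpha t$ has all steps at depth $\ge d$ and contains exactly $n$ steps at depth $d$, then there exists a compressed rewrite sequence $\aseq' \funin s \to^{\le \omega} t$ enjoying the same two properties. The base cases of finite $\alpha$ and of $\alpha = \omega$ are immediate, taking $\aseq' = \aseq$ (the statement then only asserts that the minimal depth and the count at that depth are preserved, which is trivial).

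For the successor case $\alpha = \beta + 1$, I would first apply the induction hypothesis to the $\beta$-prefix $\aseq_\beta$, obtaining a compression $s \to^{\le \omega} s_\beta$ that preserves depth data, and then append the final step $\phi \funin s_\beta \to t$ at some depth $k \ge d$. If the compressed prefix has finite length, concatenation already gives length $\le \omega$; otherwise the length is exactly $\omega + 1$ and a postponement step is needed. Since the $\omega$-prefix is strongly convergent, only finitely many of its steps occur at depth $\le k$, so I can factor out a finite initial segment containing all of them, leaving a tail of length $\omega$ whose steps all lie strictly below $\phi$'s pattern from some point on. I would then project $\phi$ through this tail, yielding a single depth-$k$ step followed by the residuals of the tail. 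Left-linearity guarantees that this projection introduces no new step at depth $< k$, so the depth bound $\ge d$ is preserved, and the count of depth-$d$ steps matches the one in $\aseq$ (it is incremented by one only if $k = d$, which compensates exactly for removing $\phi$ from the end).

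For a limit ordinal $\alpha$, which by countability has cofinality $\omega$, I would fix a strictly increasing cofinal sequence $\alpha_0 < \alpha_1 < \cdots$ with supremum $\alpha$, apply the induction hypothesis to each prefix $s \to^{\alpha_i} s_{\alpha_i}$, and diagonally merge the resulting compressions of length $\le \omega$ into a single sequence of length $\le \omega$ converging to $t$. The strong convergence of $\aseq$ forces, for every depth bound, only finitely many steps across the whole sequence, which both bounds the contribution of each piece and ensures strong convergence of the merged result; the depth-$d$ count is the sum over pieces and stabilises at $n$.

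The main obstacle will be the projection argument in the successor case: verifying that projecting a depth-$k$ step through a strongly convergent $\omega$-sequence of deeper steps yields a single depth-$k$ step followed by the projected residuals, without introducing any step of depth $< k$. This relies on left-linearity of $\atrs$ (each variable of a left-hand side occurs only once, so altering a variable's substitution instance cannot destroy the surrounding pattern), together with the fact that, because the pattern of $\phi$'s rule has finite depth and the $\omega$-tail is strongly convergent, only finitely many of its steps intrude into that pattern; once these are performed, the remainder of the tail lies strictly beneath the pattern and commutes freely with $\phi$.
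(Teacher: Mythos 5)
Your successor case is essentially the paper's argument: apply the induction hypothesis to the $\beta$-prefix, and, when the compressed prefix has length $\omega$, use strong convergence to find a finite index after which all remaining steps lie strictly below the pattern of the final step, anticipate that step there (left-linearity guaranteeing that the full pattern is already present at that earlier term), and then perform the residuals of the tail. One detail you gloss over: the theorem is stated for iTRSs, so the right-hand side $r$ may be infinite; a single tail step inside a substituted subterm can then have $\omega$-many residuals, and performing the residuals one tail step at a time would give order type $\omega\cdot\omega$. The paper resolves this by dovetailing the (pairwise disjoint) residual sequences --- one for the context and one for each copied subterm $\funap{\bsubst}{x}$ --- into a single sequence of length $\le\omega$; this is routine but has to be said, and one must also check (as you do) that the anticipated application of $\ell\to r$ lifts depths by at most the pattern depth, so no steps shallower than $d$ appear.

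The genuine gap is in the limit case. Compressing each prefix $s \to^{\alpha_i} s_{\alpha_i}$ independently by the induction hypothesis yields sequences $s \to^{\le\omega} s_{\alpha_i}$ that are \emph{not} extensions of one another: each is produced by a fresh application of the induction hypothesis and need not even pass through $s_{\alpha_j}$ for $j<i$, so there is no single rewrite sequence of which they are all prefixes, and a ``diagonal merge'' of them is not a rewrite sequence at all. What makes the limit case work in the paper is a recursion producing genuinely nested \emph{finite} initial segments: since $\aseq$ is strongly convergent, only finitely many steps occur at the minimal depth $d$; let $\beta$ index the last of them, compress $s\to^\beta s_\beta$ by the induction hypothesis, and observe that in the compressed prefix the last depth-$d$ step has some finite index $n$, so $s \to^* s_n$ is finite and the remaining sequence $s_n \to^{\le\omega} s_\beta \to^{\le\alpha} t$ has minimal depth $\ge d+1$. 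One then recurses on that remainder --- starting from $s_n$, not from $s$ --- with the increased depth bound. The finite segments so produced concatenate into a single sequence of length $\le\omega$ that is strongly convergent and converges to $t$ because its steps have unboundedly increasing depth. Your observation that strong convergence forces only finitely many steps at each depth is exactly the right ingredient, but it must be fed into this kind of nested recursion on successive remainders rather than into independent compressions of the $\alpha_i$-prefixes of the original sequence.
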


\begin{proof}
  We proceed by transfinite induction on the ordinal length $\alpha$
  of rewrite sequences $\aseq \funin s \to^\alpha_R t$
  with $d$ the minimal depth of a step, 
  and $n$ the number of steps at depth $d$, in $\aseq$.
  
  In case that $\alpha = 0$ 
  nothing needs to be shown. 
  
  Suppose $\alpha$ is a successor ordinal.
  Then $\alpha = \beta + 1$ for some ordinal $\beta$, and
  $\aseq$ is of the form $s \to^{\beta} s' \to t$.
  Applying the induction hypothesis to $s \to^{\beta} s'$ 
  yields a rewrite sequence $s \to^{\gamma} s'$ of length $\gamma\le\omega$
  that contains the same number of steps at depth $d$, and no steps
  at depth less than $d$.
  
  If $\gamma < \omega$, then $s \to^\gamma s' \to t$ is a rewrite sequence of
  length $\gamma + 1 < \omega$, in which all steps take place at depth
  $\ge d$ and precisely $n$ steps 
  at depth $d$.
  
  If $\gamma = \omega$, we obtain a rewrite sequence of the form 
  $s = s_0 \to s_1 \to \cdots \to^\omega s_\omega \to t$.
  Let $\ell \to r \in R$ be the rule applied in the final step $s_\omega \to t$,
  that is, $s_\omega = \contextfill{\acontext}{\subst{\asubst}{\ell}} \to \contextfill{\acontext}{\subst{\asubst}{r}} = t$ 
  for some context $\acontext$ and substitution $\asubst$.
  Moreover, let $d_h$ be the depth of the hole in $\acontext$, and $d_p$ the depth of the pattern of $\ell$.
  Since the reduction $s_0 \to^\omega s_\omega$ is strongly convergent, 
  there exists $n \in \nat$ such that all rewrite steps in $\xi : s_n \to^\omega s_\omega$ have depth $> d_h + d_p $,
  and hence are below the pattern of the redex contracted in the last step
  $s_\omega \to t$.
  As a consequence of this fact and left-linearity, there exists a context $\bcontext$ and a substitution $\bsubst$
  such that $s_n = \contextfill{\bcontext}{\subst{\bsubst}{\ell}}$.
  Since the rewrite sequence 
  $\xi : s_n = \contextfill{\bcontext}{\subst{\bsubst}{\ell}} 
     \to^\omega \contextfill{\acontext}{\subst{\asubst}{\ell}} = s_\omega$
  consists only of steps at depth $> d_h + d_p$,
  it follows that:
  \begin{itemize}
  \item there exists a rewrite sequence
        $\vartheta : \contextfill{\bcontext}{\contexthole} \to^{\le \omega} \contextfill{\acontext}{\contexthole}$
        at depth $> d_h + d_p$, and
  \item there exist rewrite sequences
        $\vartheta_x : \funap{\bsubst}{x} \to^{\le \omega} \funap{\asubst}{x}$ for all $x \in \vars{\ell}$.
  \end{itemize}
  We now prepend the final step $s_\omega \to t$ to $s_n$, that is:
  $s_n = \contextfill{\bcontext}{\subst{\bsubst}{\ell}} \to \contextfill{\bcontext}{\subst{\bsubst}{r}}$.
  Even if the term $r$ is infinite, this creates at most $\omega$-many copies of subterms $\funap{\bsubst}{x}$ 
  with reduction sequences $\vartheta_x : \funap{\bsubst}{x} \to^{\le \omega} \funap{\asubst}{x}$ of length $\le \omega$.
  Since the rewrite sequences $\vartheta$ and $\vartheta_x$ for $x \in \vars{\ell}$ are in disjoint (parallel) subterms,
  there exists an interleaving 
  $\contextfill{\bcontext}{\subst{\bsubst}{r}} \to^{\le \omega} \contextfill{\acontext}{\subst{\asubst}{r}}$
  of length at most $\omega$ (the idea is similar to establishing countability of $\omega^2$ by dovetailing).
  We obtain a rewrite sequence
  $\aseq' \funin s \to^{\le \omega} t$,
  since
  $s \to^n s_n = \contextfill{\bcontext}{\subst{\bsubst}{\ell}}
   \to \contextfill{\bcontext}{\subst{\bsubst}{r}} \to^{\le \omega} \contextfill{\acontext}{\subst{\asubst}{r}} = t$.

  It remains to be shown that $\aseq'$ contains only steps at depth $\ge d$, 
  and that it has the same number of steps as the original sequence $\aseq$ 
  at depth $d$.
  This follows from the induction hypothesis and
  the fact that all steps in $s_n \to^\omega s_\omega$ 
  have depth $> d_h + d_p$ and thus also all steps of the interleaving
  $\contextfill{\bcontext}{\subst{\bsubst}{r}} \to^{\le \omega} \contextfill{\acontext}{\subst{\asubst}{r}}$
  have depth $> d_h + d_p - d_p = d_h \ge d$ 
  (the application of $\ell \to r$ can lift steps 
   by at most the pattern depth $d_p$ of $\ell$).
  
  \begin{figure}[hpt!]
\begin{center}
\begin{tikzpicture}
  [inner sep=1mm,
   node distance=45mm,
   terminal/.style={
   rectangle,rounded corners=2.25mm,minimum size=4mm,
   very thick,draw=black!50,top color=white,bottom color=black!20,
   },
   >=stealth,
   red/.style={thick,->>},
   infred/.style={
   thick,
   shorten >= 1mm,
   decoration={
     markings,
     mark=at position -3mm with {\arrow{stealth}},
     mark=at position -1.5mm with {\arrow{stealth}},
     mark=at position -0.001 with {\arrow{stealth}}
   },
   postaction={decorate}
   }]

  \node (s) {$s$};
  \node (s') [right of=s] {};
  \node (sb) [node distance=1mm,below of=s] {\hphantom{$s$}};
  \node (s'b) [node distance=1mm,below of=s'] {};
  \node (t) [right of=s'] {$t$};

  \draw [infred] (s) -- node [at end,above] {$\alpha$} (t);
  \draw [thick] ($(s') + (0,.5ex)$) -- ($(s'b) + (0,-.5ex)$);
  \draw [thick,dotted] ($(s') + (0,2.5ex)$) -- 
    node [midway,yshift=5.2ex] {last step of depth $d$} 
    node [midway,yshift=2.7ex] {$\beta < \alpha$}
    ($(s'b) + (0,-.5ex)$);
  \draw [infred] (s) to [bend right=45] 
    node [at end,below,xshift=-.5ex,yshift=-1.4ex] {$\le \omega$}
    node [midway] (d) {}
    (s');

  \node at ($(s)!.5!(s')$) [anchor=south] {$\ge d$};
  \node at ($(s')!.5!(t)$) [anchor=south] {$> d$};

  \draw [thick] ($(d) + (0,.5ex)$) -- ($(d) + (0,-.5ex)$);
  \draw [thick,dotted] ($(d) + (0,-2.5ex)$) -- 
    node [midway,yshift=-4.7ex] {last step of depth $d$}
    node [midway,yshift=-2.7ex] {$n < \omega$}
    ($(d) + (0,.5ex)$);

  \draw [infred] (d) to [bend right=35] node [at end,below,xshift=-.5ex,yshift=-1.4ex] {$\le \omega$} (t);
  \node at ($(s)!.5!(s')$) [yshift=-3ex] {IH};
  \node at ($(s')!.5!(t)$) [xshift=-4ex,yshift=-3.8ex] {continue};
  \node at ($(s')!.5!(t)$) [xshift=-4ex,yshift=-6ex] {with $d+1$, \ldots};
\end{tikzpicture}
\end{center}\vspace{-3ex}
\caption{\textit{Compression Lemma, in case $\alpha$ is a limit ordinal.}}
\label{fig:compression}
\end{figure}

  
  Finally, suppose that $\alpha$ is a limit ordinal $> \omega$.
  We refer to Figure~\ref{fig:compression} for a sketch of the proof.
  Since $\aseq$ is strongly convergent, only a finite number of steps
  take place at depth $d$. Hence there exists $\beta < \alpha$ such that
  $s_{\beta}$ is the target of the last step at depth $d$ in $\aseq$.
  We have $s \to^\beta s_\beta \to^{\le \alpha} t$ 
  and all rewrite steps in $s_\beta \to^{\le \alpha} t$ are at depth $> d$.
  By induction hypothesis there exists 
  a rewrite sequence $\bseq \funin s \to^{\le \omega} s_\beta$
  containing an equal amount of steps at depth $d$ as $s \to^\beta s_\beta$.
  Consider the last step of depth $d$ in $\bseq$\,. 
  This step has a finite index $n < \omega$.
  Thus we have $s \to^* s_{n} \to^{\le \alpha} t$, 
  and all steps in $s_n \to^{\le \alpha} t$ are at depth $> d$. 
  By successively applying this argument to $s_n \to^{\le \alpha} t$ 
  we construct finite initial segments $s \to^* s_n$ 
  with strictly increasing minimal rewrite depth $d$. 
  Concatenating these finite initial segments 
  yields a reduction $s \to^{\le \omega} t$
  containing as many steps at depth $d$ as the original sequence.
\end{proof}

With this refined compression lemma at hand, we now prove that
also divergent rewrite sequences can be compressed to length less than or equal
to $\omega$. 

\begin{corollary}
  \label{cor:comp:div:seqs}
  Let $\atrs$ be a left-linear \iTRS. 
  For every divergent rewrite sequence $\aseq \funin s \to^\alpha_R$ of length $\alpha$
  there exists a divergent rewrite sequence
  $\aseq' \funin s \to^{\le\omega}_R$ of length at most $\omega$.
\end{corollary}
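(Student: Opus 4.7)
The plan is to construct a divergent rewrite sequence $\aseq' \funin s \to^{\le\omega}_R$ by iteratively peeling off finite prefixes of compressions of ever-longer initial segments of $\aseq$. First I observe that since $\aseq$ is divergent of length $\alpha$, the ordinal $\alpha$ must be a limit and, by the negation of the depth-tending-to-infinity condition at $\alpha$, there exists some $d \in \nat$ such that the positions of steps of $\aseq$ at depth $\le d$ are cofinal in $\alpha$. Any length-$\omega$ rewrite sequence containing infinitely many steps at depth $\le d$ is automatically divergent, so this is what I aim to build.

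The construction proceeds by induction on $n \in \nat$. I maintain a finite rewrite sequence $\aseq'_n \funin s \to^{k_n} t_n$ together with an ordinal $\gamma_n < \alpha$ and a strongly convergent reduction $\psi_n \funin t_n \to^{\le\omega} s_{\gamma_n}$, where $s_{\gamma_n}$ is the term at position $\gamma_n$ of $\aseq$; the invariants are that $\aseq'_n$ is a prefix of $\aseq'_{n+1}$ and contains at least $n$ steps at depth $\le d$. The base case takes $\aseq'_0$ empty, $t_0 = s$, $\gamma_0 = 0$, and $\psi_0$ trivial. For the inductive step, by cofinality I pick $\gamma_{n+1}' > \gamma_n$ with $\gamma_{n+1}' < \alpha$ such that $\aseq$ performs a depth-$\le d$ step at position $\gamma_{n+1}'$; the sub-segment $s_{\gamma_n} \to^{\gamma_{n+1}' + 1 - \gamma_n} s_{\gamma_{n+1}' + 1}$ is a proper sub-segment of $\aseq$, hence strongly convergent, and contains that step. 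Concatenating $\psi_n$ with this sub-segment and applying Theorem~\ref{thm:compression} to the result yields a strongly convergent reduction $\chi_n \funin t_n \to^{\le\omega} s_{\gamma_{n+1}' + 1}$.

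The main obstacle is the bookkeeping at this last step: a naive concatenation of individually compressed sub-segments would have length $\omega \cdot \omega = \omega^2$, not $\le\omega$. The crucial trick is that, since the refined compression preserves the minimum depth $d^* \le d$ and its count, $\chi_n$ contains at least one step at depth $\le d$; moreover, being strongly convergent of length $\le\omega$, this first such step occurs at some finite position $m$. I then extend $\aseq'_n$ with the length-$(m+1)$ prefix of $\chi_n$ ending just after this step to obtain $\aseq'_{n+1}$; the remainder of $\chi_n$ becomes the new bridge $\psi_{n+1} \funin t_{n+1} \to^{\le\omega} s_{\gamma_{n+1}' + 1}$, and I set $\gamma_{n+1} = \gamma_{n+1}' + 1$. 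This finite-prefix extraction makes strictly positive but finite progress in $\aseq'$ while carrying forward a bridge of length $\le\omega$ to the next checkpoint on $\aseq$.

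Taking $\aseq' = \bigcup_n \aseq'_n$ yields a rewrite sequence from $s$ whose length is $\sup_n k_n = \omega$ (since $k_n$ strictly increases) and which contains at least $n$ steps at depth $\le d$ for every $n$. These infinitely many low-depth steps violate strong convergence of $\aseq'$ at $\omega$, so $\aseq'$ is divergent, as required.
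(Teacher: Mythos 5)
Your argument is correct and is essentially the paper's own proof: both iterate the refined compression lemma (Theorem~\ref{thm:compression}) on segments ending at the next step of bounded depth, peel off the finite prefix through that step, and carry the compressed remainder of length $\le\omega$ forward as a bridge to the tail of the original sequence, so that infinitely many bounded-depth steps accumulate in the limit sequence of length $\omega$. The only cosmetic difference is that the paper first disposes of the finitely many steps strictly above the minimal divergence depth $d$ in a separate initial phase and then counts steps at depth exactly $d$, whereas you fold this into a single uniform induction counting steps at depth $\le d$.
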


\begin{proof}
  Let $\aseq \funin s \to^\alpha_R$ be a divergent rewrite sequence.
  Then there exist $k\in\nat$ such that infinitely many steps in $\aseq$
  take place at depth $k$.
  Let $d$ be the minimum of all numbers $k$ with that property.
  Let $\beta$ be the index of the last step above depth $d$ in 
  $\aseq$.
  Then  $\aseq$ can be written as
  $\aseq \funin s \to^\beta s_\beta \to^{\le \alpha}$,
  where
  $s_\beta \to^{\le \alpha}$ consists only of steps at depth $\ge d$,
  among which there are infinitely many steps at depth $d$.
  Now by Theorem~\ref{thm:compression} the rewrite sequence
  $s \to^\beta s_\beta$ can be compressed to a rewrite sequence
  $s \to^{\le \omega} s_\beta$.
  Let $n$ be the index of the last step of depth $\le d$ in the rewrite sequence $s \to^{\le \omega} s_\beta$.
  Then $s \to^* s_n \to^{\le \omega} s_\beta \to^{\le \alpha}$,
  and $s_n \to^{\le \omega} s_\beta \to^{\le \alpha}$ contains only steps at depth $\ge d$.
  Thus all steps with depth less than $d$ take place in the finite prefix
  $s \to^* s_n$.

  Now consider the rewrite sequence 
  $\aseq_1 \funin s_n \relcomp{\to^{\le \omega}}{\to^{\le \alpha}}$,
  say $\aseq_1 \funin s_n \to^\gamma$ for short,
  containing infinitely many steps at depth $d$.
  Let $\gamma'$ be the index of the first step at depth $d$ in $\aseq_1$.
  Then $\aseq_1 \funin s_n \to^{\gamma'} u \to^{\le \gamma}$ for some term $u$
  and $s_n \to^{\gamma'} u$ can be compressed to $s_n \to^{\le \omega} u$ 
  containing exactly one step at depth $d$.
  Now let $m$ be the index of this step, then $s_n \to^m u' \to^{\le \omega} u \to^{\le \gamma}$
  where $s_n \to^m u'$ contains one step at depth $d$.
  Repeatedly applying this construction to $u' \to^{\le \omega} u \to^{\le \gamma}$
  we obtain a rewrite sequence 
  $\aseq' \funin s \to^* s_n \to^* u' \to^* u'' \to \cdots$ 
  that contains infinitely many steps at depth $d$,
  and hence is divergent.
\end{proof}

\begin{remark}
  A slightly weaker version of Theorem~\ref{thm:compression} and 
  Corollary~\ref{cor:comp:div:seqs}, due to the second author,
  can be found in \cite{zant:2008} (see Lemma~3 and Theorem~4 there).   
  The weaker version of Theorem~\ref{thm:compression} 
  states the following:
  Every strongly convergent rewrite sequence $\aseq \funin s \to^\alpha_R t$
  with $d$ the minimal depth of its steps can be compressed 
  into a rewrite sequence $\aseq' \funin s \to^{\le\omega}_R t$ of length less or equal to $\omega$ 
  with \emph{at least as many} 
  (instead of \emph{precisely as many as} in Theorem~\ref{thm:compression})  
  steps as $\aseq$ at (minimal) depth $d$.

  We note that very closely related statements 
  have been formulated for infinitary combinatory reduction systems 
  in~\cite{kete:2008}, see Theorem~2.7 and Lemma 5.2 ibid.
\end{remark}

\section{Infinitary Confluence}\label{sec:confluence}

In Section~\ref{sec:pscountex} we have seen that the property $\UNinf$ 
fails for weakly orthogonal \iTRS{s} when collapsing rules are present,
and hence also $\CRinf$. 
Now we show that \woTRS{s} without collapsing 
rules are infinitary confluent ($\CRinf$),
and as a consequence also have the property $\UNinf$.

We adapt the projection of parallel steps in weakly orthogonal TRS{s}
from~\cite[Section~8.8.4.]{terese:2003} to infinite terms.
The basic idea is to orthogonalize the parallel steps,
and then project the orthogonalized steps.
The orthogonalization uses that overlapping redexes 
have the same effect and hence can be replaced by each other.
In case of overlaps we replace the outermost redex by the innermost one.
This is possible since
the maximal nesting depth of the union of two infinite parallel steps is at most 2,
that is, there can not be infinite chains of overlapping nested redexes
in such a union (see Example~\ref{ex:chain}).
For a treatment of infinitary multi-steps where such chains can occur,
we refer to Section~\ref{sec:multi}.
See further~\cite[Proposition~8.8.23]{terese:2003} for orthogonalization 
in the finitary case.

The ordinary notion of a redex, that is, an instance of a left-hand side
of a rule, does not suffice for the analysis of non-orthogonal
TRS{s}~\cite[Chapter~8]{terese:2003}. To see this, suppose we are given 
rules having left-hand sides $f(a,x)$, $f(x,a)$, and $a$.
What is the result of contracting the redex $f(a,a)$?
Since the term $f(a,a)$ is an instance of both the first
and the second left-hand side, that result will in general
depend on which of the two rules is applied.
Although in weakly orthogonal systems that result is unique,
the notion of overlap of redexes is problematic.
Are the three redexes in $f(a,a)$ non-overlapping?
Each \emph{pair} of redexes can be considered to be
non-overlapping, e.g., $\underline{f}(a,a)$ and $f(a,\underline{a})$
are non-overlapping redexes when the latter is seen as an
instance of the first left-hand side.
However, it is not possible to have $3$ non-overlapping redex occurrences in the term $f(a,a)$.
These observations motivate the following refined notion of redex.

\begin{definition}
  Let $\atrs$ be an iTRS, and $\atrm \in \iter{\asig}$ a term.

  A \emph{redex} in $\atrm$ is
  a pair consisting of a position $\apos$ and a
  rule $\rulstr{\alhs}{\arhs}$, such that $\trmat{\atrm}{\apos} = \subap{\sigma}{\alhs}$
  for some substitution $\sigma$.
  We call $p$ and $\rulstr{\alhs}{\arhs}$ the \emph{root} and \emph{rule} of the redex, respectively.
  The pattern of a redex $\pair{\apos}{\rulstr{\alhs}{\arhs}}$ is the set of 
  all positions $\possco{\apos}{\bpos}$ such that $\symat{\alhs}{\bpos}$ is a function symbol.

  Two sets of positions are \emph{overlapping} if they have a non-empty intersection.
  For redexes $u$ and $v$ in $t$ we say that \emph{$u$ and $v$ overlap}, denoted by $u \poverlap v$,
  if the patterns of $u$ and $v$ overlap.
  A set $U$ of redexes is called \emph{non-overlapping} if, for all $u,v\in U$ with $u\neq v$,
  $u$ does not overlap with $v$.  

  A \emph{multi-redex} in a term $t \in \iter{\asig}$ is a set of non-overlapping redexes in $t$.
\end{definition}

For a thorough study of developments we refer to \cite[Sec.~4.5.2]{terese:2003} and \cite{oost:1997}.
Here, we introduce developments in weakly orthogonal systems via labelling (underlining):

\begin{definition}\label{def:dev}
  Let $\atrs = \pair{\asig}{\asetofrules}$ be a weakly orthogonal iTRS.
  For symbols $f \in \asig$ and $\rho \in R$ we write $f^{\rho}$ for $f$ labelled with $\rho$.
  For labelled terms $t$, we write $\floor{t}$ to denote the term obtained from $t$ by dropping all labels.

  We define the \iTRS{} $\atrs^{\dev} = \pair{\asig^{\dev}}{\asetofrules^{\dev}}$ 
  where $\asig^{\dev} = \asig \cup \{f^\rho \where f \in \asig , \rho \in \asetofrules \}$
  and $\asetofrules^{\dev}$ consists of all rules 
  $\ell^{\rho} \to r$ for $\rho : \ell \to r \in R$
  where $\ell^{\rho}$ is the term obtained from $\ell$ by labelling the root-symbol of $\ell$ with $\rho$.

  Let $t, t' \in \iter{\asig}$ be terms, and $U$ a multi-redex in $t$.
  Let $t^U$ be the term obtained from $t$ by labelling
  for each redex $\pair{\apos}{\rho} \in U$ the symbol at position $\apos$ in $t$ with $\rho$.
  A \emph{development of $U$} in $t$ is a rewrite sequence 
  $ t \ired_{R} t'$ (in $\atrs$) that can be lifted to a reduction 
  $ t^U \ired_{R^{\dev}} t'' $ (in $\atrs^{\dev}$)
  such that $\floor{t''} = t'$, that is, $t'$ arises from $t''$ by dropping all labels.
  The development is called \emph{complete} if $t' \equiv t''$.
  A \emph{multi-step with respect to $U$}
  is a step ${t}\arsdev_U{t'}$ such that there exists a reduction $t^U \ired_{\atrs^{\dev}} t'$.
\end{definition}


\begin{remark}
  Let $t$ be a term, $U$ a multi-redex in $t$ and $t^U$ as in Definition~\ref{def:dev}.
  Observe that every term $s$ with $t^U \mred_{\atrs^{\dev}} s$
  has the property that every symbol occurrence labelled with a rule $\rho: \ell \to r \in R$ in $s$
  is a redex occurrence with respect to $\ell^{\rho} \to r \in \atrs^{\dev}$.
  The reason is that $U$ is a set of non-overlapping redexes,
  and $\atrs^{\dev}$ is an orthogonal iTRS.
  Therefore redex occurrences stay redex occurrences until they are contracted.
\end{remark}

Every complete development of a multi-redex $U$ ends in the same term, see~\cite{terese:2003}.
In non-collapsing, weakly orthogonal \iTRS{s},
every multi-redex $U$ has a complete development.
Multi-steps arise from complete developments, and are
uniquely determined by their starting term and a selection of redex occurrences.

\begin{definition}
  Let $\atrs$ be an iTRS, $t \in \iter{\asig}$ a term, and
  let $U$ and $V$ be sets of redexes in $t$.
  We call $U$ and $V$ \emph{orthogonal (to each other)}
  if $U \cup V$ is a multi-redex.
\end{definition}

\begin{definition}\label{def:oproj}
  Let $\atrs$ be a non-collapsing, \woTRS, and let $U$ and $V$ be orthogonal sets of redexes in a term $t$.
  For multi-steps $\dstepu{\astep}{t}{t'}{U}$ and $\dstepu{\bstep}{t}{t''}{V}$
  with respect to $U$ and $V$
  we define the projection $\project{\astep}{\bstep}$ as the 
  multi-step ${t''}\arsdev_{U'}{s}$ with respect to the set of residuals $U' = \project{U}{\bstep}$
  as defined in~\cite{terese:2003}.%
    \footnote{We refer to Def.$\,$12.5.3 in \cite{terese:2003}, 
              and note that the definition not only applies in orthogonal \iTRS{s}, 
              but also to every non-overlapping set $U$ of redexes versus a multistep $\phi$ 
              with respect to a redex set $V$ that is orthogonal to~$U$.}
  In the sequel, we sometimes write $\arsdev$ for the multi-step relation, 
  suppressing the set of redexes $U$ that induces the multi-step $\arsdev_{U}$.
\end{definition}

\begin{definition}\label{def:orthogonalization}
  An \emph{orthogonalization} 
  of a pair $\pair{\astep}{\bstep}$ of multi-steps
  $\dstepu{\astep}{s}{t_1}{U}$ and $\dstepu{\bstep}{s}{t_2}{V}$
  with respect to sets $U$ and $V$ of redexes in $s$
  is a pair $\pair{\astep'}{\bstep'}$ 
  of multi-steps $\dstepu{\astep'}{s}{t_1}{U'}$ and $\dstepu{\bstep'}{s}{t_2}{V'}$
  with respect to orthogonal sets $U'$ and $V'$ of redexes in $s$.
\end{definition}

A parallel step $\pstep{\astep}{s}{t}$ is a multi-step $\dstepu{\astep}{s}{t}{U}$ with respect to a set $U$ of parallel redexes,
that is, redexes at pairwise disjoint positions.

\begin{proposition}\label{prop:portho}
  Let $\pstep{\astep}{s}{t_1}$ and $\pstep{\bstep}{s}{t_2}$ be parallel steps in a \woTRS{}.
  Then there exists an orthogonalization $\pair{\astep'}{\bstep'}$ of $\astep$ and $\bstep$
  with the special property that $\pstep{\astep'}{s}{t_1}$ and $\pstep{\bstep'}{s}{t_2}$.
\end{proposition}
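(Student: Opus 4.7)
The plan is to orthogonalize $U$ and $V$ by replacing every outermost overlapping redex by an innermost one, exploiting that in a weakly orthogonal system overlapping redexes produce identical reducts. Since both $U$ and $V$ are parallel (no two of their own redexes are nested), every overlap $u \poverlap v$ with $u \in U$, $v \in V$ must fall into one of three mutually exclusive kinds: (A)~$u$ strictly contains $v$ in its pattern; (B)~symmetrically, $v$ strictly contains $u$; or (C)~$u$ and $v$ sit at the same position but use different rules. Parallelism of $V$ moreover forbids any single $u \in U$ from participating in two kinds of overlap simultaneously, since this would force two $V$-redexes to be nested.

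First, I would partition $U = U_c \cup U_o \cup U_i \cup U_s$, where $U_c$ collects the redexes not overlapping with any element of $V$ and $U_o$, $U_i$, $U_s$ collect the redexes engaging in overlaps of kind (A), (B), (C) respectively, and partition $V = V_c \cup V_o \cup V_i \cup V_s$ analogously. I would then fix choice functions $\alpha : U_o \to V_i$ and $\beta : V_o \to U_i$ assigning to each outer redex a single strict inner witness of the overlap, and observe that kind-(C) overlaps induce a natural bijection between $U_s$ and $V_s$. Set
\[
  U' \defd U_c \cup U_i \cup \alpha(U_o) \cup V_s, \qquad V' \defd V_c \cup V_i \cup \beta(V_o) \cup V_s,
\]
and let $\astep' \funin s \arsdev_{U'} t_1$ and $\bstep' \funin s \arsdev_{V'} t_2$ be the corresponding multi-steps.

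Next I would verify the three required properties in turn. \emph{Parallelism} of $U'$ follows because every element of $\alpha(U_o) \cup V_s$ resides in the subtree of a unique redex of $U_o \cup U_s$, and $U$-positions are pairwise disjoint; the argument for $V'$ is symmetric. \emph{Non-overlap of $U' \cup V'$} reduces to a case analysis: using $U_i \cap U_o = \emptyset = V_i \cap V_o$ (both consequences of parallelism), any two distinct elements of $U' \cup V'$ lie in disjoint subtrees indexed by $U$- or $V$-positions, and the original culprits of overlap ($U_o$ and $V_o$) have been removed. \emph{Equality of targets} is verified locally at each position: for $u \in U_c \cup U_i$ both $\astep$ and $\astep'$ perform the same contraction; for $u \in U_o$ the effect of contracting $u$ with rule $\ell \to r$, yielding $\subst{\asubst}{r}$, coincides with that of contracting $\alpha(u)$ at an inner position of $u$'s pattern, by the trivial critical pair between the two rules involved; for $u \in U_s$ the trivial same-position critical pair equates the reducts produced by the two rules. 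The argument for $\bstep'$ is dual.

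The main obstacle is the scenario in which a single $u \in U_o$ strictly contains \emph{several} $V$-redexes at disjoint inner positions. Weak orthogonality delivers only pairwise equations: each individual inner contraction alone reproduces $\subst{\asubst}{r}$, but contracting several of them simultaneously in place of $u$ need not. The remedy is precisely that $\alpha$ commits to a \emph{single} inner representative per $u$; the remaining inner $V$-redexes stay in $V_i \subseteq V'$ but are not promoted into $U'$. Because $u$ itself has been removed from $U'$, the presence of those extra inner $V$-redexes in $V'$ creates no overlap with any element of $U'$, and symmetric care applies on the $\beta$-side.
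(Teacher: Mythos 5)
Your proof is correct and takes essentially the same route as the paper's: in every overlap the outer redex is replaced by a single chosen inner one (with same-position overlaps resolved by picking one of the two), and weak orthogonality of the resulting trivial critical pairs guarantees the targets $t_1$ and $t_2$ are preserved. The paper's own proof is only a two-sentence description of this replacement (with a concrete tie-break, ``left-most among the top-most'', where you use an arbitrary choice function $\alpha$), so your partition into $U_c, U_o, U_i, U_s$ and the explicit checks of parallelism, non-overlap and target equality are just a more detailed elaboration of the same idea.
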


\begin{proof}
  Let $U$ be the set of parallel redex occurrences contracted in $\pstep{\astep}{s}{t_1}$,
  and $V$ the set of parallel redex occurrences contracted in $\pstep{\astep}{s}{t_1}$.
  In case of overlaps between $U$ and $V$, 
  then for every overlap we replace the outermost redex by the innermost one
  (if there are multiple inner redexes overlapping, 
  then we choose the left-most among the top-most redexes).
  If there are two redexes at the same position but with respect to different rules,
  then we replace the redex in $V$ with the one in $U$.
  See also Figure~\ref{fig:orthogonalization:parallel}.
  
\begin{figure}[hpt!]
  \begin{center}
    \scalebox{.7}{\begin{tikzpicture}
  [inner sep=1.5mm,
   node distance=5mm,
   terminal/.style={
     rectangle,rounded corners=2.25mm,minimum size=4mm,
     very thick,draw=black!50,top color=white,bottom color=black!20,
   },
   redexA/.style={draw=dblue,fill=fblue,opacity=\oblue,thick},
   redexB/.style={draw=dgreen,fill=fgreen,opacity=\ogreen,thick},
   >=stealth,xscale=.75,yscale=.75]

  \begin{scope}[xscale=-1]
    \draw (2,5) -- (-2.5,0) -- (6.5,0) -- cycle;

    \begin{scope}[xshift=-.8cm]
      \fill[redexA,xshift=3.5cm,yshift=2.1cm] (1,.7) -- (.2,0) -- (1.8,0) -- cycle;
      \fill[redexB,xshift=1.5cm,yshift=.5cm] (2.6,1.8) -- (1.9,0.3) -- (2.9,0.5) -- cycle;
      \fill[redexB,xshift=2.3cm,yshift=0cm] (2.6,1.8) -- (2.3,1.1) -- (3.3,0.3) -- cycle;
      \draw [very thick,->] (4.5,2.5) .. controls (4.9,2.7) and (4.9,1.6) .. (4.1,1.5);
    \end{scope}

    \begin{scope}[xshift=-.8cm,yshift=-.3cm]
      \fill[redexB] (2,3.8) -- (1,1.7) -- (2.65,2.5) -- cycle;
      \fill[redexA] (1.4,2.2) -- (.7,.9) -- (2.1,.9) -- cycle;
      \fill[redexA] (2.4,2.8) -- (2.2,1.3) -- (3.25,1.1) -- cycle;
    \draw [very thick,->] (2,2.8) .. controls (1.5,2.6) and (2.0,1.7) .. (2.6,1.6);
    \end{scope}

    \fill[redexA,xshift=-2.5cm,yshift=-1.4cm] (2.3,3.8) -- (.5,1.7) -- (2.15,2.5) -- cycle;

  \end{scope}

\end{tikzpicture}}
  \end{center}\vspace{-2ex}
  \caption{\textit{Orthogonalization of parallel steps; the arrow indicates replacement.}}
  \label{fig:orthogonalization:parallel}
  \end{figure}
\end{proof}

\begin{definition}\label{def:pproject}
  Let $\pstep{\astep}{s}{t_1}$, $\pstep{\bstep}{s}{t_2}$ be parallel steps in a \woTRS{}.
  The \emph{weakly orthogonal projection $\project{\astep}{\bstep}$ of $\astep$ over $\bstep$}
  is defined as the orthogonal projection $\project{\astep'}{\bstep'}$
  where $\pair{\astep'}{\bstep'}$ is the orthogonalization of $\astep$ and $\bstep$ 
  given in the proof of Proposition~\ref{prop:portho}. 
\end{definition}

\begin{remark}
  The weakly orthogonal projection does not give rise to a residual system
  in the sense of~\cite{terese:2003}.
  The projection fulfils the three identities
  $\project{\astep}{\astep} \approx \unit$,
  $\project{\astep}{\unit} \approx \astep$, and
  $\project{\unit}{\astep} \approx \unit$,
  but not the \emph{cube identity}
  $\project{(\project{\astep}{\bstep})}{(\project{\cstep}{\bstep})} \approx
    \project{(\project{\astep}{\cstep})}{(\project{\bstep}{\cstep})}$.
\end{remark}

\begin{lemma}\label{lem:proj:lift}
  Let $\pstep{\astep}{s}{t_1}$, $\pstep{\bstep}{s}{t_2}$ be parallel steps in a \woTRS{} $\atrs$.
  Let $d_\astep$ and $d_\bstep$ be the minimal depth of a step in $\astep$ and $\bstep$, respectively.
  Then the minimal depth of the weakly orthogonal projections
  $\project{\astep}{\bstep}$ and $\project{\bstep}{\astep}$
  is greater or equal $\bfunap{\min}{d_\astep}{d_\bstep}$.
  If $\atrs$ contains no collapsing rules
  then the minimal depth of $\project{\astep}{\bstep}$ and $\project{\bstep}{\astep}$ 
  is greater or equal $\bfunap{\min}{d_\astep}{d_\bstep+1}$
  and $\bfunap{\min}{d_\bstep}{d_\astep+1}$, respectively.
\end{lemma}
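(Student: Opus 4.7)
The plan is to track depths through the two transformations that make up the weakly orthogonal projection: first the orthogonalization of Proposition~\ref{prop:portho}, and then the standard residual projection across orthogonal parallel steps. I would invoke Proposition~\ref{prop:portho} to obtain parallel multi-steps $\astep' \funin s \arsdev_{U'} t_1$ and $\bstep' \funin s \arsdev_{V'} t_2$ with $U' \cup V'$ non-overlapping, so that by Definition~\ref{def:pproject} the projection $\project{\astep}{\bstep}$ equals the orthogonal projection $\project{\astep'}{\bstep'}$. The key point to extract from the construction given in the proof of Proposition~\ref{prop:portho} is that it only ever replaces an outer redex by a strictly deeper inner one, or replaces a redex of $V$ by one from $U$ at the same position; in particular the minimal depths of $U'$ and $V'$ remain at least $d_\astep$ and $d_\bstep$, respectively.

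Next I would analyse the depth of each residual of an arbitrary $u \in U'$ at position $p_u$ after contracting $V'$. Since $U' \cup V'$ is non-overlapping and the positions in $V'$ are pairwise disjoint, exactly one of two cases arises. In case (a), $u$ is not strictly below any $v \in V'$; then $u$ is either disjoint from all of $V'$ or it contains some $v \in V'$ in one of its own variable positions. In either situation the contractions performed by $\bstep'$ do not move $u$'s root, so the residual sits at depth $|p_u| \geq d_\astep$. In case (b), there is a unique $v \in V'$ at position $p_v$ such that $u$ lies strictly inside a variable substitution of $v$, so $p_u = \posconcat{p_v}{\posconcat{q_1}{q_2}}$ where $q_1$ is the position in the left-hand side of $v$'s rule of some variable $x$, and $q_2$ is the offset of $u$ inside $\asubst(x)$. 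Contracting $v$ creates one residual of $u$ at $\posconcat{p_v}{\posconcat{q_1'}{q_2}}$ for each occurrence $q_1'$ of $x$ in the right-hand side of $v$'s rule, so each such residual has depth $|p_v| + |q_1'| + |q_2| \geq d_\bstep + |q_1'|$.

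Finally I would read off both inequalities from this case split. In general $|q_1'| \geq 0$, so case (b) yields depth $\geq d_\bstep$; combining with case (a) gives the lower bound $\min(d_\astep, d_\bstep)$. When $\atrs$ contains no collapsing rules, the right-hand side of $v$'s rule is not a variable, so every occurrence of $x$ in it satisfies $|q_1'| \geq 1$; case (b) then sharpens to depth $\geq d_\bstep + 1$, giving $\min(d_\astep, d_\bstep + 1)$ for $\project{\astep}{\bstep}$, and by the symmetric argument $\min(d_\bstep, d_\astep + 1)$ for $\project{\bstep}{\astep}$. I expect the main subtlety to lie in case (b): one must verify carefully that the depth preservation under orthogonalization justifies applying the bounds $d_\astep$ and $d_\bstep$ to the orthogonalized sets $U'$ and $V'$, and confirm that in the non-collapsing setting no duplicated copy of $u$ can be lifted to a depth smaller than $d_\bstep + 1$.
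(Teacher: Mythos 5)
Your proposal is correct and follows essentially the same route as the paper, which dispatches this lemma in one line by appealing to exactly the two facts you verify in detail: the orthogonalization replaces outer redexes only by (weakly) deeper ones so the bounds $d_\astep$ and $d_\bstep$ carry over to $U'$ and $V'$, and in the orthogonal projection a non-collapsing rule contracted at depth $d$ can lift a nested redex to depth at most $d+1$ but not above. Your case split on whether a residual's root is moved, and the observation that non-collapsingness forces $|q_1'|\geq 1$, is precisely the justification the paper leaves implicit.
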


\begin{proof}
  Immediate from the definition of the orthogonalization (for overlaps the innermost redex is chosen)
  and the fact that in the orthogonal projection
  a non-collapsing rule applied at depth $d$
  can lift nested redexes at most to depth $d+1$ (but not above).
\end{proof}

\begin{lemma}\label{lem:pdiamond}
  Parallel steps in a \woTRS{} have the diamond property.
\end{lemma}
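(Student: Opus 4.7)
The plan is to reduce the diamond property for parallel steps in a \woTRS{} to the known orthogonal case by first orthogonalizing the two parallel steps. Given parallel steps $\pstep{\astep}{s}{t_1}$ and $\pstep{\bstep}{s}{t_2}$ with parallel redex sets $U$ and $V$, I would first apply Proposition~\ref{prop:portho} to obtain parallel steps $\pstep{\astep'}{s}{t_1}$ and $\pstep{\bstep'}{s}{t_2}$ with parallel redex sets $U'$ and $V'$ such that $U' \cup V'$ is a multi-redex in $s$ (i.e., $\astep'$ and $\bstep'$ are orthogonal to each other). Since Proposition~\ref{prop:portho} preserves both endpoints $t_1$ and $t_2$, it suffices to close the diamond for the orthogonalized pair.

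Second, I would apply the orthogonal projection of Definition~\ref{def:oproj} to $(\astep', \bstep')$, obtaining multi-steps $\project{\astep'}{\bstep'}\funin t_2 \arsdev u$ and $\project{\bstep'}{\astep'}\funin t_1 \arsdev u$ meeting in a common reduct $u$ by the standard orthogonal diamond property for multi-steps. It then remains to verify that these projected multi-steps are themselves \emph{parallel} steps, i.e.\ that their residual redex sets live at pairwise disjoint positions. This holds because each redex of $U'$ either survives unchanged in the context of $t_2$, or is copied/erased inside the right-hand side substitution introduced by an enclosing $V'$-redex; in either case distinct residuals of $U'$-redexes sit in disjoint sub-contexts or in disjoint substitution copies, hence at pairwise disjoint positions. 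The symmetric argument applies to the residuals of $V'$.

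The main technical subtlety is the infinitary bookkeeping. Parallel steps over infinite terms may involve infinitely many redexes, and the complete developments defining the multi-steps must be strongly convergent. For a parallel redex set this is automatic, since the finite signature forces any infinite antichain of positions in a term to have depths tending to infinity. Nevertheless, to close the argument I would invoke Lemma~\ref{lem:proj:lift} to propagate the minimal-depth bounds through both the orthogonalization (which selects innermost over outermost at overlaps and hence does not decrease depths) and the subsequent orthogonal projection, thereby guaranteeing that the residual redex sets still satisfy the antichain/depth property required for well-defined parallel steps. With this verified, the diamond closes and the lemma follows.
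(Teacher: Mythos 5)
Your proof is correct and follows essentially the same route as the paper, whose own proof is a one-line appeal to Proposition~\ref{prop:portho} (orthogonalization preserving both targets) followed by the usual orthogonal projection of Definition~\ref{def:pproject}. The additional details you supply --- that residuals of pairwise disjoint redexes remain pairwise disjoint, and that strong convergence of the projected parallel steps is guaranteed because an infinite antichain of positions has depths tending to infinity --- are exactly the points the paper leaves implicit, and they are verified correctly.
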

\begin{proof}
  Consequence of Lemma~\ref{prop:portho} and the usual orthogonal projection, see Definition~\ref{def:pproject}.
\end{proof}

\begin{lemma}[Parallel Moves Lemma]\label{lem:pml}
  Let $\atrs$ be a \woTRS{}, 
  $\aseq \funin s \to^\alpha t_1$ a rewrite sequence,
  and $\pstep{\astep}{s}{t_2}$ a parallel rewrite step.
  Let $d_\aseq$ and $d_\astep$ be the minimal depth of a step in $\aseq$
  and $\astep$, respectively.
  Then there exist a term $u$, a rewrite sequence $\bseq \funin t_2 \to^{\le \omega} u$
  and a parallel step $\pstep{\bstep}{t_1}{u}$
  such that the minimal depth of the rewrite steps in $\bseq$ and $\bstep$ 
  is $\bfunap{\min}{d_\aseq}{d_\bseq}$;
  see Figure~\ref{fig:pml:collapse} (left).

  If additionally $\atrs$ contains no collapsing rules, 
  then the minimal depth of a step in $\bseq$ and $\bstep$
  is $\bfunap{\min}{d_\aseq}{d_\bseq+1}$ and $\bfunap{\min}{d_\bseq}{d_\aseq+1}$, respectively. 
  See also Figure~\ref{fig:pml} (right).
\end{lemma}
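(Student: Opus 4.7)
The plan is to proceed by transfinite induction on the length $\alpha$ of $\aseq$. As a preparatory step I first invoke the refined compression lemma (Theorem~\ref{thm:compression}) to reduce to the case $\alpha \le \omega$, since compression preserves the minimal depth $d_\aseq$ of steps; any depth bound established for the compressed sequence then transfers back.

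For the successor case $\alpha = \beta + 1$, decompose $\aseq$ as $s \to^\beta s' \to t_1$. The induction hypothesis, applied to the prefix $s \to^\beta s'$ together with $\astep$, yields a rewrite sequence $t_2 \to^{\le \omega} u'$ and a parallel step $\astep' \funin s' \pred u'$ meeting the stated depth bounds. Viewing the final step $s' \to t_1$ of $\aseq$ as a parallel step consisting of a single redex, I apply the diamond property for parallel steps (Lemma~\ref{lem:pdiamond}) to project it against $\astep'$, obtaining a parallel step $t_1 \pred u$ and a parallel step $u' \pred u$. Appending the latter to the already constructed sequence and compressing via Theorem~\ref{thm:compression} yields the desired sequence $\bseq \funin t_2 \to^{\le \omega} u$. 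Lemma~\ref{lem:proj:lift} delivers the required minimal-depth bounds; the stronger $+1$ bounds in the non-collapsing case follow because a non-collapsing contraction at depth $d$ can lift nested redex positions by at most the pattern depth of its rule.

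For the limit case $\alpha = \omega$, write $\aseq$ as $s = s_0 \to s_1 \to s_2 \to \cdots$ converging to $t_1$. Iterating the successor case along the finite prefixes of $\aseq$, I inductively construct parallel steps $\astep_n \funin s_n \pred u_n$ (taking $u_0 = t_2$, $\astep_0 = \astep$) together with rewrite sequences $\bseq_n \funin u_{n-1} \to^{\le \omega} u_n$ so that Lemma~\ref{lem:proj:lift} controls minimal depths at each stage. Concatenating the $\bseq_n$ and compressing the result yields a rewrite sequence $\bseq \funin t_2 \to^{\le \omega} u$ with limit term $u$, while the residual redexes tracked by the $\astep_n$ converge to a parallel redex set at $t_1$, defining $\bstep \funin t_1 \pred u$.

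The principal obstacle is showing that the concatenation of the $\bseq_n$ is strongly convergent, so that $u$ and the limit $\bstep$ are well-defined, and that the depth bounds survive the passage to the limit. For this I would exploit that $\aseq$ is itself strongly convergent: the depth of the step $s_n \to s_{n+1}$ tends to infinity with $n$, so by Lemma~\ref{lem:proj:lift} the minimal step-depth within each $\bseq_n$ grows unboundedly. This forces the concatenation to be strongly convergent and hence compressible to length $\le \omega$. The same depth bookkeeping shows that the redexes tracked by the $\astep_n$ either remain at or move deeper than $d_\astep$ (respectively, by at most the pattern depth in the non-collapsing case), so their limit defines a genuine parallel step in $t_1$ satisfying the advertised minimal-depth bound.
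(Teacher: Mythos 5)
Your overall strategy coincides with the paper's: compress $\aseq$ to length $\le\omega$ (legitimate, since Theorem~\ref{thm:compression} preserves the minimal depth), tile the resulting $\omega$-sequence against $\astep$ with the weakly orthogonal projection of parallel steps, and pass to the limit. The successor-stage bookkeeping via Lemma~\ref{lem:pdiamond} and Lemma~\ref{lem:proj:lift} is fine. The problem is the limit stage, which is where all the work lies, and there your justification has a genuine gap. You claim that ``by Lemma~\ref{lem:proj:lift} the minimal step-depth within each $\bseq_n$ grows unboundedly'' because the depth of $s_n \to s_{n+1}$ tends to infinity. But Lemma~\ref{lem:proj:lift} only bounds the depth of $\project{(s_n\to s_{n+1})}{\astep_n}$ from below by $\min(e_n, d_{\astep_n})$, where $e_n$ is the depth of $s_n\to s_{n+1}$ and $d_{\astep_n}$ is the minimal depth of $\astep_n$ --- and $d_{\astep_n}$ need not grow at all. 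For instance, if $\astep$ contracts a redex at the root that is disjoint from everything $\aseq$ touches, then every $\astep_n$ still contains a residual at the root, $d_{\astep_n}=0$ for all $n$, and the lemma yields only the vacuous bound $\ge 0$ for every $\bseq_n$. So strong convergence of the concatenated bottom sequence does not follow from the lemma you cite.

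What is actually needed (and what the paper's proof supplies) is a finer analysis: a parallel step has only \emph{finitely many} redexes rooted above any fixed depth $d$; letting $p$ bound the pattern depths of those redexes, the projection of a step of depth $\ge d+p$ over $\astep_n$ consists of steps of depth $\ge d+p-p = d$, since a single (necessarily unique, by disjointness) enclosing redex can lift a nested step by at most its pattern depth. The same choice of $p$ and of a stage $m_0$ beyond which all steps of $\aseq$ have depth $\ge d+p$ shows that the part of $\astep_m$ rooted above depth $d$ is \emph{fixed} for $m \ge m_0$, which is what makes the limit parallel step $\bstep$ at $t_1$ well defined, and also what shows that the bottom sequence and the development of $\bstep$ converge to the \emph{same} term $u$ --- a point your proposal does not address. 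None of this is delivered by the stated bound of Lemma~\ref{lem:proj:lift}; it has to be argued separately, as in the paper's proof.
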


\begin{figure}[h!]
  \begin{minipage}[b]{.45\textwidth}
\begin{center}\hspace{1cm}
\begin{tikzpicture}
  [inner sep=1mm,
   node distance=12mm,
   terminal/.style={
   rectangle,rounded corners=2.25mm,minimum size=4mm,
   very thick,draw=black!50,top color=white,bottom color=black!20,
   },
   >=stealth,
   red/.style={thick,->>},
   infred/.style={
   thick,
   shorten >= 1mm,
   decoration={
     markings,
     mark=at position -3mm with {\arrow{stealth}},
     mark=at position -1.5mm with {\arrow{stealth}},
     mark=at position 1 with {\arrow{stealth}}
   },
   postaction={decorate}
   }]

  \node (s) {$s$};
  \node (t1) [node distance=40mm,right of=s] {$t_1$};
  \node (t2) [below of=s] {$t_2$};
  \node (u) [below of=t1] {$u$};

  \draw [infred] (s) -- (t1);
  \draw [red,->] (s) -- (t2);
  \draw [infred,dashed] (t2) -- (u);
  \draw [red,->] (t1) -- (u);

  \draw [thick] ($(s)!.5!(t2) + (-.75ex,-.20ex)$) -- ($(s)!.5!(t2) + (.75ex,-.20ex)$);
  \draw [thick] ($(s)!.5!(t2) + (-.75ex,.20ex)$) -- ($(s)!.5!(t2) + (.75ex,.20ex)$);
  \draw [thick] ($(t1)!.5!(u) + (-.75ex,-.20ex)$) -- ($(t1)!.5!(u) + (.75ex,-.20ex)$);
  \draw [thick] ($(t1)!.5!(u) + (-.75ex,.20ex)$) -- ($(t1)!.5!(u) + (.75ex,.20ex)$);

  \node at ($(s)!.5!(t1)$) [anchor=south] {$\ge d_\aseq$};
  \node at ($(s)!.5!(t2)$) [anchor=east,xshift=-.7ex] {$\ge d_\bseq$};
  \node at ($(t2)!.5!(u)$) [anchor=north] {$\ge \bfunap{\min}{d_\aseq}{d_\bseq}$};
  \node at ($(t1)!.5!(u)$) [anchor=east,xshift=-.7ex] {$\ge \bfunap{\min}{d_\aseq}{d_\bseq}$};
\end{tikzpicture}
\end{center}\vspace{-3ex}
  \end{minipage}
  %
  %
  \begin{minipage}[b]{.45\textwidth}
\begin{center}\hspace{2cm}
\begin{tikzpicture}
  [inner sep=1mm,
   node distance=12mm,
   terminal/.style={
   rectangle,rounded corners=2.25mm,minimum size=4mm,
   very thick,draw=black!50,top color=white,bottom color=black!20,
   },
   >=stealth,
   red/.style={thick,->>},
   infred/.style={
   thick,
   shorten >= 1mm,
   decoration={
     markings,
     mark=at position -3mm with {\arrow{stealth}},
     mark=at position -1.5mm with {\arrow{stealth}},
     mark=at position 1 with {\arrow{stealth}}
   },
   postaction={decorate}
   }]

  \node (s) {$s$};
  \node (t1) [node distance=40mm,right of=s] {$t_1$};
  \node (t2) [below of=s] {$t_2$};
  \node (u) [below of=t1] {$u$};

  \draw [infred] (s) -- (t1);
  \draw [red,->] (s) -- (t2);
  \draw [infred,dashed] (t2) -- (u);
  \draw [red,->] (t1) -- (u);

  \draw [thick] ($(s)!.5!(t2) + (-.75ex,-.20ex)$) -- ($(s)!.5!(t2) + (.75ex,-.20ex)$);
  \draw [thick] ($(s)!.5!(t2) + (-.75ex,.20ex)$) -- ($(s)!.5!(t2) + (.75ex,.20ex)$);
  \draw [thick] ($(t1)!.5!(u) + (-.75ex,-.20ex)$) -- ($(t1)!.5!(u) + (.75ex,-.20ex)$);
  \draw [thick] ($(t1)!.5!(u) + (-.75ex,.20ex)$) -- ($(t1)!.5!(u) + (.75ex,.20ex)$);

  \node at ($(s)!.5!(t1)$) [anchor=south] {$\ge d_\aseq$};
  \node at ($(s)!.5!(t2)$) [anchor=east,xshift=-.7ex] {$\ge d_\bseq$};
  \node at ($(t2)!.5!(u)$) [anchor=north] {$\ge \bfunap{\min}{d_\aseq}{d_\bseq+1}$};
  \node at ($(t1)!.5!(u)$) [anchor=east,xshift=-.7ex] {$\ge \bfunap{\min}{d_\bseq}{d_\aseq+1}$};
\end{tikzpicture}
\end{center}\vspace{-3ex}
  \end{minipage}
  \caption{\textit{Parallel Moves Lemma; with (left) and without (right) collapsing rules.}}
  \label{fig:pml:collapse}
  \label{fig:pml}
\end{figure}

\begin{proof}
  By compression we may assume $\alpha \le \omega$ in $\aseq \funin s \to^{\le \omega} t_1$ 
  (note that, the minimal depth $d$ is preserved by compression).
  Let $\aseq \funin s \equiv s_0 \to s_1 \to s_2 \to \cdots$,
  and define $\astep_0 = \astep$.
  Furthermore, let $\seqpref{\aseq}{n}$ denote the prefix of $\aseq$ of length $n$, 
  that is, $s_0 \to \cdots \to s_n$
  and let $\seqsuf{\aseq}{n}$ denote the suffix $s_n \to s_{n+1} \to \cdots$ of $\aseq$.
  We employ the projection of parallel steps to
  close the elementary diagrams with top $s_n \to s_{n+1}$ and left $\pstep{\astep_n}{s_n}{s_n'}$,
  that is, 
  we construct the projections $\astep_{n+1} = \project{\astep_n}{(s_n \to s_{n+1})}$ (right)
  and $\project{(s_n \to s_{n+1})}{\astep_n}$ (bottom).
  Then by induction on $n$ using Lemma~\ref{lem:proj:lift} there exists
  for every $1 \le n \le \alpha$
  a term $s_n'$, and parallel steps $\pstep{\astep_n}{s_n}{s_n'}$ and $s_{n-1}' \pred s_n'$.
  See Figure~\ref{fig:pml:proof} for an overview.
  \begin{figure}[hpt!]
\begin{center}
\begin{tikzpicture}
  [inner sep=1mm,
  node distance=25mm,
  terminal/.style={
	rectangle,rounded corners=2.25mm,minimum size=4mm,
	very thick,draw=black!50,top color=white,bottom color=black!20,
  },
  >=stealth,
  red/.style={thick,->>},
  infred/.style={
	thick,
	shorten >= 1mm,
	decoration={
	  markings,
	  mark=at position -3mm with {\arrow{stealth}},
	  mark=at position -1.5mm with {\arrow{stealth}},
	  mark=at position 1 with {\arrow{stealth}}
	},
	postaction={decorate}
  }]

  \node (s) {$s \equiv s_0$};
  \node (s1) [node distance=13mm,right of=s] {$s_1$};
  \node (s2) [node distance=10mm,right of=s1] {$\ldots$};
  \node (sn0) [node distance=10mm,right of=s2] {$s_{n_0}$};
  \node (sm0) [node distance=30mm,right of=sn0] {$s_{m_0}$};
  \node (t1) [node distance=30mm,right of=sm0] {$t_1$};
  \node (t2) [below of=s] {$t_2 \equiv s_0'$};
  \node (s1') [below of=s1] {$s_1'$};
  \node (s2') [below of=s2] {$\ldots$};
  \node (sn0') [below of=sn0] {$s_{n_0}'$};
  \node (sm0') [below of=sm0] {$s_{m_0}'$};
  \node (u) [below of=t1] {$u$};

  \node (sn0'') [node distance=12mm,below of=sn0] {$s_{n_0}''$};
  \node (sm0'') [node distance=12mm,below of=sm0] {$s_{m_0}''$};
  \node (t1'') [node distance=12mm,below of=t1] {$t_1''$};

  \draw [red,->] (s) -- (s1);
  \draw [red,->] (s1) -- (s2);
  \draw [red,->] (s2) -- (sn0);
  \draw [red,->>] (sn0) -- (sm0) node [midway,above] {$\ge d$};
  \draw [infred] (sm0) -- (t1) node [midway,above] {$\ge d+p$};

  \draw [red,->] (s) -- (t2) node [midway,left,xshift=-.5ex] {$\astep = \astep_0$};
  \draw [thick] ($(s)!.5!(t2) + (-.75ex,-.20ex)$) -- ($(s)!.5!(t2) + (.75ex,-.20ex)$);
  \draw [thick] ($(s)!.5!(t2) + (-.75ex,.20ex)$) -- ($(s)!.5!(t2) + (.75ex,.20ex)$);

  \draw [red,->] (s1) -- (s1') node [midway,right,xshift=.5ex] {$\astep_1$};
  \draw [thick] ($(s1)!.5!(s1') + (-.75ex,-.20ex)$) -- ($(s1)!.5!(s1') + (.75ex,-.20ex)$);
  \draw [thick] ($(s1)!.5!(s1') + (-.75ex,.20ex)$) -- ($(s1)!.5!(s1') + (.75ex,.20ex)$);

  \draw [red,->] (sn0) -- (sn0'') node [midway,right,xshift=.5ex] {$\astep_{n_0,<d}$};
  \draw [thick] ($(sn0)!.5!(sn0'') + (-.75ex,-.20ex)$) -- ($(sn0)!.5!(sn0'') + (.75ex,-.20ex)$);
  \draw [thick] ($(sn0)!.5!(sn0'') + (-.75ex,.20ex)$) -- ($(sn0)!.5!(sn0'') + (.75ex,.20ex)$);
  \draw [red,->] (sn0'') -- (sn0') node [midway,right,xshift=.5ex] {$\astep_{n_0,\ge d}$};
  \draw [thick] ($(sn0'')!.5!(sn0') + (-.75ex,-.20ex)$) -- ($(sn0'')!.5!(sn0') + (.75ex,-.20ex)$);
  \draw [thick] ($(sn0'')!.5!(sn0') + (-.75ex,.20ex)$) -- ($(sn0'')!.5!(sn0') + (.75ex,.20ex)$);

  \draw [red,->] (sm0) -- (sm0'') node [midway,right,xshift=.5ex] {$\bstep \subseteq \astep_{n_0,<d}$};
  \draw [thick] ($(sm0)!.5!(sm0'') + (-.75ex,-.20ex)$) -- ($(sm0)!.5!(sm0'') + (.75ex,-.20ex)$);
  \draw [thick] ($(sm0)!.5!(sm0'') + (-.75ex,.20ex)$) -- ($(sm0)!.5!(sm0'') + (.75ex,.20ex)$);
  \draw [red,->] (sm0'') -- (sm0') node [midway,right,xshift=.5ex] {$\astep_{m_0,\ge d}$};
  \draw [thick] ($(sm0'')!.5!(sm0') + (-.75ex,-.20ex)$) -- ($(sm0'')!.5!(sm0') + (.75ex,-.20ex)$);
  \draw [thick] ($(sm0'')!.5!(sm0') + (-.75ex,.20ex)$) -- ($(sm0'')!.5!(sm0') + (.75ex,.20ex)$);

  \draw [red,->] (t1) -- (t1'') node [midway,right,xshift=.5ex] {$\bstep$};
  \draw [thick] ($(t1)!.5!(t1'') + (-.75ex,-.20ex)$) -- ($(t1)!.5!(t1'') + (.75ex,-.20ex)$);
  \draw [thick] ($(t1)!.5!(t1'') + (-.75ex,.20ex)$) -- ($(t1)!.5!(t1'') + (.75ex,.20ex)$);
  \draw [red,->,dashed] (t1'') -- (u) node [midway,right,xshift=.5ex] {$\ge d$};
  \draw [thick] ($(t1'')!.5!(u) + (-.75ex,-.20ex)$) -- ($(t1'')!.5!(u) + (.75ex,-.20ex)$);
  \draw [thick] ($(t1'')!.5!(u) + (-.75ex,.20ex)$) -- ($(t1'')!.5!(u) + (.75ex,.20ex)$);

  \draw [red,->] (t2) -- (s1');
  \draw [red,->] (s1') -- (s2');
  \draw [red,->] (s2') -- (sn0');
  \draw [infred,dashed] (sn0') -- (sm0');
  \draw [infred,dashed] (sm0'') -- (t1'')  node [midway,above] {$\ge d$};
  \draw [infred,dashed] (sm0') -- (u)  node [midway,above] {$\ge d$};
\end{tikzpicture}
\end{center}\vspace{-3ex}
\caption{\textit{Parallel Moves Lemma, proof overview.}}
\label{fig:pml:proof}
\end{figure}


  We show that the rewrite sequence constructed at the bottom $s_0' \pred s_1' \pred \cdots$
  of Figure~\ref{fig:pml:proof} is strongly convergent,
  and that the sequence of parallel steps $\astep_0,\astep_1,\astep_2,\ldots$ has a limit
  (which is itself a parallel steps and therefore strongly convergent).

  Let $d \in \nat$ be arbitrary.
  By strong convergence of $\aseq$ there exists $n_0 \in \nat$ such that
  all steps in $\seqsuf{\aseq}{n_0}$ are at depth $\ge d$.
  Since $\astep_{n_0}$ is a parallel step there are only finitely many
  redexes $\astep_{n_0,<d} \subseteq \astep_{n_0}$ in $\astep_{n_0}$ rooted above depth $d$.
  By projection of $\astep_{n_0}$ along $\seqsuf{\aseq}{n_0}$
  no fresh redexes above depth $d$ can be created.
  The steps in $\astep_{n_0,<d}$ may be cancelled out due to overlaps,
  nevertheless, for all $m \ge n_0$ the set of steps above depth $d$ in $\astep_m$
  is a subset of $\astep_{n_0,<d}$.

  Let $p$ be the maximal depth of a left-hand side of a rule applied in $\astep_{n_0,<d}$.
  By strong convergence of $\aseq$ there exists $m_0 \ge n_0 \in \nat$ such that
  all steps in $\seqsuf{\aseq}{m_0}$ are at depth $\ge d+p$.
  As a consequence the steps $\bstep$ in $\astep_{m_0}$ rooted above depth $d$
  will stay fixed throughout the remainder of the projection.
  Then for all $m \ge m_0$ the parallel step $\astep_m$
  can be split into $\astep_m = s_m \pred_\bstep s_m'' \pred_{\astep_{m,\ge d}} s_m'$
  where $\astep_{m,\ge d}$ consists of the steps of $\astep_m$ at depth $\ge d$.
  Since $d$ was arbitrary, it follows that projection of $\astep$ over $\aseq$ has a limit.
  Moreover the steps of the projection of $\seqsuf{\aseq}{m_0}$ over $\astep_{m_0}$
  are at depth $\ge d + p - p = d$ since rules with pattern depth $\le p$ 
  can lift steps by at most by $p$.
  Again, since $d$ was arbitrary, it follows that the projection of $\aseq$ 
  over $\astep$ is strongly convergent.

  Finally, both constructed rewrite sequences (bottom and right) 
  converge towards the same limit $u$
  since all terms $\{s_m', s_m'' \where m \ge m_0\}$ coincide up to depth $d-1$
  (the terms $\{s_m \where m \ge m_0\}$ coincide up to depth $d + p -1$ 
  and the lifting effect of the steps $\astep_m$ is limited by $p$).
\end{proof}

\begin{theorem}\label{thm:cr}
  Every \woTRS{} without collapsing rules is infinitary confluent.
\end{theorem}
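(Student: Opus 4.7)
The plan is to first apply the Refined Compression Lemma (Theorem~\ref{thm:compression}) to the hypotheses $s \ired t_1$ and $s \ired t_2$, so that I may assume $\aseq \funin s \to^{\le\omega} t_1$ and $\bseq \funin s \to^{\le\omega} t_2$. If either reduction has finite length, finitely many applications of the Parallel Moves Lemma (Lemma~\ref{lem:pml}) close the diamond by treating the finite side as an iterated parallel step. The substantive case is therefore when both sequences have length exactly $\omega$.

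In the substantive case, the strategy is to iterate PML along $\bseq = s = b_0 \to b_1 \to b_2 \to \cdots$: set $\aseq_0 \defdby \aseq$ and $c_0 \defdby t_1$, and for each $n \in \nat$ apply Lemma~\ref{lem:pml} to $\aseq_n \funin b_n \to^{\le\omega} c_n$ and the single-step parallel step $b_n \to b_{n+1}$, obtaining a reduction $\aseq_{n+1} \funin b_{n+1} \to^{\le\omega} c_{n+1}$ and a closing parallel step $c_n \pred c_{n+1}$. This builds a staircase diagram. The intended common reduct is $u \defdby \lim_n c_n$: the bottom boundary $t_1 = c_0 \pred c_1 \pred c_2 \pred \cdots$ witnesses $t_1 \ired u$, while a diagonal composition of the $\aseq_n$ against the $\bseq$-prefixes yields $t_2 \ired u$.

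The technical core is verifying that this staircase is well-defined and strongly convergent in both boundary directions. The non-collapsing version of Lemma~\ref{lem:proj:lift} gives that the parallel step $c_n \pred c_{n+1}$ has depth at least $\min(\delta_n, D_n + 1)$, where $\delta_n$ is the depth of $b_n \to b_{n+1}$ and $D_n$ is the minimum depth of a step in $\aseq_n$. Since $\delta_n \to \infty$ by strong convergence of $\bseq$, it remains to control the $D_n$. This is achieved by the same depth-stabilization argument used in the proof of Lemma~\ref{lem:pml}: any parallel step has only finitely many redexes above any fixed depth $d$, and projecting over a step at depth $\ge d$ cannot create new redexes at depth $< d$. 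So for every $d$, the shallow part of $\aseq_n$ eventually stabilizes, and the depths of $c_n \pred c_{n+1}$ diverge.

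The main obstacle will be executing this depth-stabilization consistently across the full two-dimensional staircase, and in particular extracting a strongly convergent reduction $t_2 \ired u$ from the infinite family $\aseq_n$ (which requires combining the staircase cells diagonally so that the source terms $b_n$ converge to $t_2$ and the target terms $c_n$ converge to $u$ simultaneously, all within the strong-convergence metric). The non-collapsing hypothesis is indispensable here, since a non-collapsing rule lifts nested redexes by at most its pattern depth, keeping the depth analysis controlled. The counterexamples of Section~\ref{sec:pscountex} show that collapsing rules permit unbounded upward lifting, destroying both $\UNinf$ and, \emph{a fortiori}, $\CRinf$.
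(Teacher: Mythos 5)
There is a genuine gap, and it sits exactly where you locate the ``main obstacle'': strong convergence of the bottom row $t_1 = c_0 \pred c_1 \pred c_2 \pred \cdots$ (and, symmetrically, the existence of the diagonal reduction $t_2 \ired u$). The only depth guarantee available for $c_n \pred c_{n+1}$ is the one you quote from Lemma~\ref{lem:proj:lift}/Lemma~\ref{lem:pml}, namely $\min(\delta_n, D_n+1)$, and $D_n$ (the minimal depth of $\aseq_n$) does \emph{not} tend to infinity: the finitely many shallow steps of $\aseq$ are untouched by projection over the deep steps $b_n \to b_{n+1}$, so typically $D_n = D_0 = d$ for all $n$, and the bound is stuck at $d+1$. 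Your appeal to ``depth stabilization'' does not repair this, because it proves the wrong thing: it shows the shallow steps of $\aseq_n$ eventually stop changing, not that they stop lifting the projected $\bseq$-redexes. The reason the analogous argument works inside the proof of Lemma~\ref{lem:pml} is an asymmetry your transposition destroys: there, the deep steps of the $\omega$-sequence are projected over a \emph{single parallel step}, whose disjoint redexes can lift a nested redex at most once, by at most the pattern depth $p$, so a step at depth $\ge d+p$ lands at depth $\ge d$ and the bottom row inherits divergence of depths from the top. In your staircase the roles are reversed: a single deep redex of $\bseq$ is projected over an entire $\omega$-length reduction $\aseq_n$ and can be lifted by \emph{every} step of $\aseq_n$ in which it is nested, each time by up to $p-1$, cumulatively all the way up to depth $D_n+1$. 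So the bottom row may fail to be strongly convergent and $u \defdby \lim_n c_n$ need not exist. The ``diagonal composition'' producing $t_2 \ired u$ is likewise unjustified --- it amounts to projecting an $\omega$-reduction over an $\omega$-reduction, which is the original problem.

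The paper's proof avoids the single staircase altogether and stratifies by depth. Let $d$ be the minimal depth occurring in either (compressed) reduction; by strong convergence only finitely many steps occur at depth $d$, so both reductions split as a \emph{finite} prefix $s \to^* s_i$ followed by a tail at depth $>d$. The finite peak $s_1 \mathrel{{}^*\!\!\twoheadleftarrow} s \to^* s_2$ is closed by finitary diagram completion using the diamond property for parallel steps (Lemma~\ref{lem:pdiamond}), giving $s_1 \pred^* s' \mredi\, \pred^* s_2$ at depth $\ge d$; then the two infinite tails are pushed through these \emph{finitely many} parallel steps by repeated application of Lemma~\ref{lem:pml}, yielding $s' \ired t_1'$ and $s' \ired t_2'$ at depth $>d$ and closing parallel steps $t_i \pred^* t_i'$ at depth $\ge d$. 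One then \emph{recurses} on the new coinitial pair at depth $d+1$: the stage-$n$ reducts $t_1^{(n)}, t_2^{(n)}$ agree up to depth $d+n-1$ and the concatenated closing reductions have strictly increasing minimal depth, which is what actually delivers strong convergence to a common limit. If you want to keep a PML-based tiling, this depth recursion (rather than a one-pass staircase along $\bseq$) is the missing idea.
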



\begin{figure}[hpt!]
\begin{center}
\begin{tikzpicture}
  [inner sep=1mm,
   node distance=16mm,
   terminal/.style={
     rectangle,rounded corners=2.25mm,minimum size=4mm,
     very thick,draw=black!50,top color=white,bottom color=black!20,
   },
   >=stealth,
   red/.style={thick,->>},
   infred/.style={
     thick,
     shorten >= 1mm,
     decoration={
       markings,
       mark=at position -3mm with {\arrow{stealth}},
       mark=at position -1.5mm with {\arrow{stealth}},
       mark=at position 1 with {\arrow{stealth}}
     },
     postaction={decorate}
   }]

  \node (s) {$s$};
  \node (s1) [node distance=40mm,right of=s] {$s_1$};
  \node (t1) [node distance=60mm,right of=s1] {$t_1$};
  \node (s2) [below of=s] {$s_2$};
  \node (t2) [below of=s2] {$t_2$};
  \node (s') [below of=s1] {$s'$};
  \node (t1') [below of=t1] {$t_1'$};
  \node (t2') [below of=s'] {$t_2'$};
  \node (u) [below of=t1'] {$u$};

  \draw [red] (s) -- (s1);
  \draw [infred] (s1) -- (t1);
  \draw [red] (s) -- (s2);
  \draw [infred] (s2) -- (t2);

  \draw [red,dashed] (s1) -- (s');
  \draw [thick] ($(s1)!.5!(s') + (-.75ex,-.20ex)$) -- ($(s1)!.5!(s') + (.75ex,-.20ex)$);
  \draw [thick] ($(s1)!.5!(s') + (-.75ex,.20ex)$) -- ($(s1)!.5!(s') + (.75ex,.20ex)$);

  \draw [red,dashed] (s2) -- (s');
  \draw [thick] ($(s2)!.5!(s') + (-.20ex,-.75ex)$) -- ($(s2)!.5!(s') + (-.20ex,.75ex)$);
  \draw [thick] ($(s2)!.5!(s') + (.20ex,-.75ex)$) -- ($(s2)!.5!(s') + (.20ex,.75ex)$);

  \draw [infred,dashed] (s') -- (t1');
  \draw [infred,dashed] (t1) -- (t1');
  \draw [infred,dashed] (s') -- (t2');
  \draw [infred,dashed] (t2) -- (t2');
  \draw [infred,dashed] (t1') -- (u);
  \draw [infred,dashed] (t2') -- (u);

  \node at ($(s)!.5!(s1)$) [anchor=south] {$\ge d$};
  \node at ($(s)!.5!(s2)$) [anchor=east] {$\ge d$};
  \node at ($(s1)!.5!(t1)$) [anchor=south] {$>d$};
  \node at ($(s2)!.5!(t2)$) [anchor=east] {$>d$};
  \node at ($(s2)!.5!(s')$) [anchor=south,yshift=.5ex] {$\ge d$};
  \node at ($(s1)!.5!(s')$) [anchor=west,xshift=.5ex] {$\ge d$};

  \node at ($(s')!.5!(t1')$) [anchor=south] {$> d$};
  \node at ($(s')!.5!(t2')$) [anchor=west] {$> d$};

  \node at ($(t2)!.5!(t2')$) [anchor=south] {$\ge d$};
  \node at ($(t1)!.5!(t1')$) [anchor=west] {$\ge d$};

  \node at ($(s)!.5!(s')$) {finitary diagram};
  \node at ($(s')!.5!(t1)$) {PML (Lemma~\ref{lem:pml})};
  \node at ($(s')!.5!(t2)$) {PML (Lemma~\ref{lem:pml})};
  \node at ($(s')!.5!(u)$) {\parbox{3cm}{\centerline{repeat construction}\centerline{with $d+1$}}};
\end{tikzpicture}
\end{center}\vspace{-3ex}
\caption{\textit{Infinitary confluence.}}
\label{fig:confluence}
\end{figure}
\begin{proof}
  An overview of the proof is given in Figure~\ref{fig:confluence}.
  Let $\aseq \funin s \to^\alpha t_1$ and $\bseq \funin s \to^\beta t_2$ be two rewrite sequences.
  By compression we may assume $\alpha \le \omega$ and $\beta \le \omega$.
  Let $d$ be the minimal depth of any rewrite step in $\aseq$ and $\bseq$.
  Then $\aseq$ and $\bseq$ are of the form
  $\aseq \funin s \to^* s_1 \to^{\le\omega} t_1$
  and
  $\bseq \funin s \to^* s_2 \to^{\le\omega} t_2$
  such that all steps in $s_1 \to^{\le\omega} t_1$ and $s_2 \to^{\le\omega} t_2$ at depth $> d$.
  
  Then $s \to^* s_1$ and $s \to^* s_2$ can be joined by finitary diagram completion employing 
  the diamond property for parallel steps (Lemma~\ref{lem:pdiamond}).
  If follows that there exists a term $s'$ and finite sequences of 
  (possibly infinite) parallel steps $s_1 \pred^* s'$ and $s_2 \pred^* s'$
  all steps of which are at depth $\ge d$ (Lemma~\ref{lem:proj:lift}).
  We project 
  $s_1 \to^{\le\omega} t_1$ over $s_1 \pred^* s'$, 
  $s_2 \to^{\le\omega} t_2$ over $s_2 \pred^* s'$ 
  by repeated application of the Lemma~\ref{lem:pml},
  obtaining rewrite sequences 
  $t_1 \ired t_1'$,
  $s' \ired t_1'$,
  $t_2 \ired t_2'$, and
  $s' \ired t_2'$ with depth $\ge d$, $> d$, $\ge d$, and $> d$, respectively.
  As a consequence we have $t_1'$, $s'$ and $t_2'$ coincide up to (including) depth $d$.
  Recursively applying the construction to the rewrite sequences $s' \ired t_1'$ and $s' \ired t_2'$
  yields strongly convergent rewrite sequences 
  $t_2 \ired t_2' \ired t_2'' \ired \cdots$ and $t_1 \ired t_1' \ired t_1'' \ired \cdots$
  where the terms $t_1^{(n)}$ and $t_2^{(n)}$ coincide up to depth $d + n -1$.
  Thus these rewrite sequences converge towards the same limit $u$.
\end{proof}

We consider an example to illustrate that the absence of collapsing rules is a necessary condition 
for Theorem~\ref{thm:cr}.
\begin{example}\label{ex:collapse}
  Let $\atrs$ be an \iTRS\ over the signature $\{f,a,b\}$ consisting of the collapsing rule:
    $\bfunap{f}{x}{y} \to x$
  Then, using a self-explaining recursive notation, 
  the term $s = \bfunap{f}{\bfunap{f}{s}{b}}{a}$ rewrites in $\omega$ many steps to 
  $t_1 = \bfunap{f}{t_1}{a}$ as well as $t_2 = \bfunap{f}{t_2}{b}$ which have no common reduct.
  The \iTRS{} $\atrs$ is weakly orthogonal (even orthogonal) but not confluent.
  The same phenomenon occurs in the infinitary version of combinatory logic, 
  due to the rule $Kxy\to x$.
\end{example}

\section{The Diamond and Triangle Properties for Multi-Steps}\label{sec:multi}

Orthogonal rewrite systems possess a rich theory of
residuals originating with Church and Rosser's seminal 
paper~\cite{Chur:Ross:36}, establishing confluence
of the (orthogonal) $\lambda$-calculus with $\beta$-reduction by
means of residuals.
Intuitively, the \emph{residual} of a redex after another
is that what remains of the former after contracting the latter
(see~\cite[Section~8.7]{terese:2003} for an abstract development and references). 
Orthogonality of a rewrite system guarantees that distinct redexes
are mutually non-overlapping, giving rise to residuals in a natural
way and entailing the well-definedness of multi-redexes, in the sense
that the result of contracting a multi-redex is independent of the 
order in which the redexes in it are contracted.
As a consequence, the multi-step rewrite relation $\aarsdev$ has, 
unlike the single step rewrite relation $\aars$, good rewrite properties, 
in particular the diamond and triangle properties,
justifying the central role they play in the theory of orthogonality.
\begin{definition} \label{def:diamond:triangle}
  A rewrite relation $\aars$ on $\aset$ is said to have:
  \begin{enumerate}
  \item
    the \emph{diamond} property if 
    $\relle{\relsco{\aarsinv}{\aars}
          }{\relsco{\aars}{\aarsinv}
          }$\,;
  \item
    the \emph{angle} property if there exists a function $\atrmful$
    from $\aset$ to $\aset$, such that for all $\setin{\aobj,\bobj}{\aset}$,
    $\arsa{\aobj}{\bobj}$ implies $\arsa{\bobj}{\trmfula{\aobj}}$\,;   
  \item   
    the \emph{triangle} property if there exists a function $\atrmful$
    from $\aset$ to $\aset$, such that
    $\arsa{\aobj}{\trmfula{\aobj}}$ for all $\setin{\aobj}{\aset}$,
    and for all $\setin{\aobj,\bobj}{\aset}$,
    $\arsa{\aobj}{\bobj}$ implies $\arsa{\bobj}{\trmfula{\aobj}}$\,.
  \end{enumerate}
\end{definition}

\noindent Note that the triangle property entails the angle property which in turn entails the diamond property.
We show that multi-steps in weakly orthogonal iTRSs
without collapsing rules have both the diamond and the triangle 
properties. We proceed by first illustrating the difficulties caused
by the transitions from the finitary to the infinitary case and from
the orthogonal to the weakly orthogonal case, and then
showing how these difficulties can be overcome.
\begin{example} \label{exa:infinitary:collapse}
  Consider the orthogonal TRS with rules $\rulstr{\syma{\atrmvar}}{\atrmvar}$ and
  $\rulstr{\symb{\atrmvar}}{\atrmvar}$ collapsing $\asym$ and $\bsym$, respectively.
  Contracting the multi-redexes consisting of all $\asym$-redexes respectively 
  all $\bsym$-redexes in the infinite term $\atrm \isdefd \syma{\symb{\atrm}}$ yields the 
  infinite terms $\atrm' \isdefd \symb{\atrm'}$ and $\atrm'' \isdefd \syma{\atrm''}$,
  which do not have a common reduct.
\end{example}
In the example the problem is not so much that it is not clear what the residuals are.
They are the multi-redex consisting of all $\bsym$-redexes in $\atrm'$
and the multi-redex consisting of all $\asym$-redexes in $\atrm''$.
The problem is rather that contracting these multi-redexes would lead to
an infinite collapse in both cases and thereby an undefined result and 
common reduct.
In order to bar such examples, it suffices to exclude collapsing rules,
guaranteeing that contracting multi-redexes is productive.
\begin{example} \label{exa:partial:overlap}
  Consider the weakly orthogonal TRS with rule $\rulstr{\syma{\syma{\atrmvar}}}{\atrmvar}$.
  What should the residual be of the outermost redex in the term $\syma{\syma{\syma{\asymzer}}}$
  after the step contracting the innermost redex?
\end{example}
The problem illustrated by the example is that if redexes are partially
overlapping, here on the middle $\asym$-symbol, there is no natural 
notion of residual. Still, because of weak orthogonality we know that partially
overlapping redexes are equivalent in the sense that contracting either 
of them yields the same result. The idea is then to replace the redexes
in a multi-redex by equivalent ones in such a way that orthogonality
is restored. Such an \emph{orthogonalization} was developed for 
the finitary case in~\cite[Section~8.8.4]{terese:2003}, replacing
inside--out each redex by an equivalent one inside it. As infinite 
terms need not have innermost redexes, such an inside--out approach 
does not immediately carry over to the infinitary case. 
To overcome this difficulty we switch from inside--out to outside--in.
We even present two outside--in approaches:
a simple but non-effective one to establish the triangle property,
and a more complex but effective one to establish the diamond property.

\section*{The Triangle Property, Non-effectively}

We show that multi-steps have the triangle property for weakly orthogonal 
iTRSs without collapsing rules, hence as a consequence the diamond property.
This generalizes the same result for weakly orthogonal TRSs~\cite[Theorem~8.8.27]{terese:2003}.
Since the proof of the theorem in the finitary case employs an inside--out
approach, it does not carry over immediately to the infinitary case as observed above.
We show that, nonetheless, the infinitary case can be reduced to the
finitary case by means of a careful analysis of clusters of redexes.
The main observation is that infinite clusters can be dropped alltogether,
as illustrated in the following example.
\begin{figure}[htb]
  \begin{center}
    \begin{tikzpicture}
  [inner sep=1.5mm,
   node distance=5mm,
   terminal/.style={
     rectangle,rounded corners=2.25mm,minimum size=4mm,
     very thick,draw=black!50,top color=white,bottom color=black!20,
   },
   redexA/.style={draw=dblue,fill=fblue,opacity=\oblue,thick},
   redexB/.style={draw=dgreen,fill=fgreen,opacity=\ogreen,thick},
   >=stealth,xscale=.75,yscale=.75]

  \pgfmathsetseed{1}

  \newcommand{\expo}{1.3}
  \coordinate (current point) at (2,3.8);
  \foreach \i in {1,...,10} {
    \fill[redexA,{opacity=min(1/\expo^\i,.5)}] (current point) -- ($(current point) + 1/\expo^\i*(-.5,-1)$) -- ($(current point) + 1/\expo^\i*(.5,-1)$) -- cycle;
    \coordinate (current point) at ($(current point) + (.45*rand/\expo^\i,-.8/\expo^\i)$);
    \fill[redexB,{opacity=min(1/\expo^\i,.5)}] (current point) -- ($(current point) + 1/\expo^\i*(-.5,-1)$) -- ($(current point) + 1/\expo^\i*(.5,-1)$) -- cycle;
    \coordinate (current point) at ($(current point) + (.45*rand/\expo^\i,-.8/\expo^\i)$);
  }
\end{tikzpicture}
  \end{center}
  \vspace{-2ex}
  \caption{\textit{Infinite chain of overlaps.}}
  \label{fig:chain}
\end{figure}


\begin{example}\label{ex:chain} \label{exa:chain}{
  \renewcommand{\a}{\funap{A}}
  \begin{samepage}
  Consider the TRS $R$ consisting of the single rule
  $\rulstr{\a{\a{\a{\atrmvar}}}}{\a{\atrmvar}}$
  and an infinite term $A^{\omega}$ containing an infinite
  chain of overlaps as displayed in Figure~\ref{fig:chain},
  with each blue redex partially overlapping its adjacent
  green redexes and vice versa:
  \begin{center}
    \begin{tikzpicture}
  [inner sep=1.5mm,
  node distance=5mm,
  terminal/.style={
    rectangle,rounded corners=2.25mm,minimum size=4mm,
    very thick,draw=black!50,top color=white,bottom color=black!20,
  },
  redexA/.style={draw=dblue,fill=fblue,opacity=\oblue,thick},
  redexB/.style={draw=dgreen,fill=fgreen,opacity=\ogreen,thick},
  >=stealth,xscale=.75,yscale=.75]

  \coordinate (current point) at (2,3.8);
  \foreach \i in {1,...,15} {
    \node (n\i) [anchor=west] at (current point) {$A($};
    \coordinate (current point) at ($(current point) + (.55,0)$);
  }
  \node [anchor=west] at (current point) {$\ldots)\ldots)$};

  \coordinate (current point) at (2,3.8);
  \foreach \i in {1,...,4} {
    \draw [redexA] ($(current point) + (.2,.35)$) rectangle ($(current point) + (1.6,.55)$);
    \coordinate (current point) at ($(current point) + (1.1,0)$);
    \draw [redexB] ($(current point) + (.2,-.25)$) rectangle ($(current point) + (1.6,-.45)$);
    \coordinate (current point) at ($(current point) + (1.1,0)$);
  }
\end{tikzpicture}
  \end{center}
  The main question in establishing the triangle property
  is how to construct a common reduct $\trmfula{(A^{\omega})}$ 
  of all possible multi-steps from $A^{\omega}$.
  \end{samepage}
  
  To that end, observe that by weak orthogonality contracting some redex
  in the chain has exactly the same effect as contracting any
  other redex in it. By choosing a sequence of redexes ever
  deeper in the term, and noting that contracting a redex 
  leaves its context intact, it follows that in fact the whole
  chain must be left intact by contracting a redex in it. 
  That is, we can simply define
  $\trmfula{(A^{\omega})} \isdefd A^{\omega}$.
}\end{example}
The notion of chain is covered by the 
notion of cluster~\cite[Definition~4.31]{kete:klop:oost:2004a}.
\begin{definition}\label{def:cluster}
  A \emph{cluster} is a non-empty set of redexes which forms 
  a connected component with respect to the overlap relation $\poverlap$.
  The \emph{pattern} of a cluster is the union of the patterns of its redexes.
  A cluster is said to be \emph{infinite} if the set of
  root-positions of the redexes in it, is infinite.
  A cluster is a \emph{\Ycluster} if it contains a pair of redexes at parallel positions (Figure~\ref{fig:cases:orthogonalization}, cases (ii) and (iv));
  otherwise it is an \emph{\Icluster} (Figure~\ref{fig:cases:orthogonalization}, cases (i) and (iii)). 
\end{definition}
A notion for a set of positions (for example, overlap) is extended to a 
cluster via the union of the positions in the patterns of the redexes in 
it. Note that for any cluster, there is a least (topmost) position 
overlapped by it, which we call its \emph{root}.
We use $\acls$, $\bcls$ to range over clusters.
\begin{example} \label{exa:cluster}
  Figure~\ref{fig:chain} displays a single infinite \Icluster.
\end{example}
Clusters are to weakly orthogonal iTRSs 
what redexes are to orthogonal iTRSs.
\begin{lemma}[Cluster Redex] \label{lem:cluster:step}
  If $\acls$ and $\bcls$ are clusters in a term $\atrm$, 
  then:
  \begin{enumerate}
  \item
    there is a unique $\btrm$ such that 
    for some and all $\setin{\ardx}{\acls}$, $\iarsa{\ardx}{\atrm}{\btrm}$,
    and for any position $\apos$ not overlapping $\acls$, 
    there is a unique $\aPos$ such that
    $\poseq{\resa{\apos}{\acls}}{\aPos}$; and
  \item
    either $\acls$ and $\bcls$ are identical or
    they are non-overlapping.
 \end{enumerate}  
\end{lemma}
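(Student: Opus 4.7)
I would prove (ii) first, as it is a direct consequence of the fact that clusters are $\poverlap$-connected components. If the patterns of $\acls$ and $\bcls$ share a position, then that position lies in the pattern of some $\ardx \in \acls$ and of some $\ardx' \in \bcls$, so $\ardx \poverlap \ardx'$. Hence $\ardx$ and $\ardx'$ belong to the same $\poverlap$-component, and by the maximality of connected components $\acls = \bcls$.

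For the uniqueness of $\btrm$ in (i), the core tool is weak orthogonality. I would first establish the base case: for $\ardx, \ardx' \in \acls$ with $\ardx \poverlap \ardx'$, the two contractions $\atrm \to_\ardx \btrm_\ardx$ and $\atrm \to_{\ardx'} \btrm_{\ardx'}$ form a substitution instance of a critical pair between the underlying rules, which by weak orthogonality is trivial; hence $\btrm_\ardx \synteq \btrm_{\ardx'}$. For arbitrary $\ardx, \ardx'' \in \acls$, connectedness of $\acls$ provides a \emph{finite} $\poverlap$-chain $\ardx = \ardx_0 \poverlap \cdots \poverlap \ardx_n = \ardx''$ — finite-path connectedness holds even for infinite components — and induction along this chain yields $\btrm_\ardx \synteq \btrm_{\ardx''}$, so the reduct $\btrm$ does not depend on the chosen $\ardx$.

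For the residual claim I would apply the same strategy. For $\apos$ not overlapping the pattern of $\acls$, I would first show pairwise that $\apos/\ardx = \apos/\ardx'$ whenever $\ardx \poverlap \ardx'$ in $\acls$, by reading off the matching of the subterm $\trmat{\atrm}{\apos}$ in the common reduct $\btrm$ that is forced by triviality of the critical pair, and then propagate equality along the finite $\poverlap$-chain to obtain a cluster-wide residual set $\aPos \defdby \apos/\acls$.

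The main obstacle I expect is precisely this residual consistency step. Weakly orthogonal systems do not supply a canonical residual from a single contraction, and infinite clusters lack an innermost or outermost member that would allow adapting the inside--out orthogonalization of \cite[Sec.~8.8.4]{terese:2003}. My plan is to circumvent both problems by using the common reduct $\btrm$ obtained in the first half of (i) as the single vantage point from which the set $\aPos$ can be read off uniformly: the correspondence between the copies of $\trmat{\atrm}{\apos}$ in $\btrm$ and $\apos$ is then dictated by $\btrm$ together with the position of $\apos$ relative to the cluster's pattern, independently of which $\ardx \in \acls$ is actually contracted.
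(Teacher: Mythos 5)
Your proposal is correct and follows essentially the same route as the paper: part (ii) by noting that overlapping clusters contain overlapping redexes and hence coincide as $\poverlap$-components, and part (i) by first handling a single overlapping pair via triviality of the critical pair (same target, same descendants for positions outside both patterns) and then propagating along a finite $\poverlap$-chain connecting any two redexes of the cluster, concluding by transitivity. The paper's proof is just a terser version of this same argument.
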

\begin{proof} \mbox{}
  \begin{enumerate}
  \item
    Suppose $\iarsa{\ardx}{\atrm}{\btrm}$ and $\iarsa{\brdx}{\atrm}{\ctrm}$
    for $\setin{\ardx,\brdx}{\acls}$, and let $\apos$ be a position not overlapping $\acls$
    By definition of cluster there is a sequence  
    $\rdxeq{\ardx}{\airdx{\natone}},\ldots,\rdxeq{\airdx{\anat}}{\brdx}$ of redexes in $\acls$,
    such that consecutive elements have overlap.
    By weak orthogonality, these pairwise induce steps having 
    the same targets and the same descendant relation for positions
    not overlapping them, from which we conclude by transitivity;
  \item
    If $\acls$ and $\bcls$ are overlapping, then
    they contain redexes which are overlapping, hence
    belong to the same equivalence class. \qedhere
  \end{enumerate}
\end{proof}
The Cluster Redex Lemma justifies speaking of a \emph{cluster}-step 
contracting $\acls$, denoted by $\aiars{\acls}$, unambiguously
inducing a residual relation on other clusters and cluster-steps.
\begin{remark} \label{rem:cluster}
  Descendants of positions \emph{within} a cluster may depend on the redex 
  contracted~\cite[Section~9.3.1]{terese:2003}\cite[Example~4.55]{kete:klop:oost:2004a}. 
  For example, the position
  $\posone$ has either zero or one descendants along the step 
  $\arsa{f(a)}{a}$ in the weakly orthogonal TRS with rules
  $\rulstr{f(x)}{x}$, $\rulstr{f(a)}{a}$. 
\end{remark}
As noted in Example~\ref{exa:chain}, 
certain arrangements of redexes in a cluster trivialise it.
\begin{definition} \label{def:trivial}
  A \emph{trivial} cluster is either a \Ycluster{} or an infinite \Icluster.
\end{definition}
Note that triviality of clusters is established on the basis of left-hand sides 
of rules only, and that the non-trivial clusters are exactly the finite \Icluster{s}.
Given a trivial cluster, the corresponding cluster-step is trivial
in the sense that its source is equal to its target, i.e.\ it
is inert from an outside perspective.
The following lemma formalizes this and generalizes earlier observations that 
\Ycluster{s} and their special case, Takahashi-configurations, are 
trivial~\cite[Proposition~9.3.5]{terese:2003}\cite[Remark~4.38]{kete:klop:oost:2004a}.
\begin{lemma}[Trivial Cluster]\label{lem:trivial}
  If $\acls$ is a trivial cluster, 
  then $\iarsa{\acls}{t}{s}$ implies $\trmeq{\atrm}{\btrm}$,
  and for any position $\apos$ not overlapping $\acls$, 
  $\poseq{\resa{\apos}{\acls}}{\setstr{\apos}}$.
\end{lemma}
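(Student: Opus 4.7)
My plan is to split on the two sorts of triviality, handling Y-clusters and infinite I-clusters separately; in each case I would exploit the freedom granted by the Cluster Redex Lemma (Lemma~\ref{lem:cluster:step}(i)) to realise the cluster-step $\iarsa{\acls}{\atrm}{\btrm}$ via a conveniently chosen representative redex $\ardx \in \acls$ whose pattern lies far from the position under scrutiny. Recall that the same target $\btrm$ arises regardless of which representative is chosen.

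For the Y-cluster case, fix redexes $\ardx_\natone,\ardx_\nattwo \in \acls$ at parallel positions $p_\natone,p_\nattwo$. For an arbitrary position $q$, parallelism precludes both $p_\natone \le q$ and $p_\nattwo \le q$ holding at once, so at least one $p_i$ (with $i \in \{\natone,\nattwo\}$) satisfies $p_i \not\le q$, forcing $q$ to be either a strict prefix of $p_i$ or parallel to it. In either sub-case, realising $\btrm$ via $\ardx_i$ shows that the single-redex step does not touch position $q$, giving $\funap{\btrm}{q} = \funap{\atrm}{q}$; since $q$ was arbitrary this yields $\trmeq{\btrm}{\atrm}$. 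The residual claim follows by the same dichotomy: given $\apos$ not overlapping $\acls$, choose $\ardx_i$ with $p_i \not\le \apos$, so that the step via $\ardx_i$ leaves $\apos$ alone and $\poseq{\resa{\apos}{\ardx_i}}{\setstr{\apos}}$; Lemma~\ref{lem:cluster:step}(i) transfers this conclusion to $\poseq{\resa{\apos}{\acls}}{\setstr{\apos}}$.

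For the infinite I-cluster case, the I-property makes the redex roots in $\acls$ pairwise comparable and hence linearly ordered by the prefix order; combined with there being infinitely many of them, each a finite-length sequence, this forces an infinite strictly ascending chain $p_{n_0} < p_{n_1} < p_{n_2} < \cdots$ of arbitrary depth. For any $q$, choose $k$ with $|p_{n_k}| > |q|$; then $q$ is strictly shallower than $p_{n_k}$, hence a strict prefix of, or parallel to, $p_{n_k}$, and realising $\btrm$ via $\ardx_{n_k}$ fixes the symbol at $q$, so $\trmeq{\btrm}{\atrm}$. The same depth-based choice handles the residual of a non-overlapping $\apos$: once $|p_{n_k}| > |\apos|$, the step via $\ardx_{n_k}$ leaves $\apos$ with itself as sole descendant.

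The main subtlety I anticipate is precisely the situation that non-overlap with $\acls$ does \emph{not} rule out: $\apos$ sitting strictly below some redex of $\acls$ in its variable part, where a single-redex step could in principle duplicate or delete $\apos$. The entire point of switching between representative redexes via Lemma~\ref{lem:cluster:step}(i) is to side-step this situation, by picking a representative whose pattern is `safely far away' from $\apos$. Everything else is routine, including the observation that a linearly-ordered infinite subset of $\nat^\ast$ must contain an infinite ascending chain, which is immediate from length-unboundedness in any infinite set of finite sequences.
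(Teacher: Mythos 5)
Your proof is correct and follows essentially the same route as the paper's: in both cases you exploit Lemma~\ref{lem:cluster:step}(i) to realise the cluster-step by a representative redex chosen per position (one of the two parallel redexes for a \Ycluster{}, a sufficiently deep redex on the unbounded root-path for an infinite \Icluster{}) so that the position in question lies in the untouched context. The paper phrases this region-by-region rather than position-by-position, but the idea, including your observation that this also settles the residual claim for positions not overlapping the cluster, is the same.
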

\begin{proof}
  Suppose  $\iarsa{\acls}{\atrm}{\btrm}$.
  
  If $\acls$ is a \Ycluster, then it contains redexes
  $\ardx$, $\brdx$ whose roots are incomparable.
  By the previous lemma, for some term $\btrm$, both 
  $\iarsa{\ardx}{\atrm}{\btrm}$ and $\iarsa{\brdx}{\atrm}{\btrm}$.
  Since the former leaves the context of $\ardx$ and the subterm
  the subterm at the root of $\brdx$ untouched and, \emph{mutatis mutandis},
  the latter leaves the context of $\brdx$ and the subterm
  at the root of $\ardx$ untouched, we conclude from incomparability
  of their roots and the Cluster Redex Lemma that in fact the whole 
  term must be left unchanged and positions outside the cluster untouched.  
  
  If $\acls$ is an infinite \Icluster, then for any context
  with its hole at the root-path of $\acls$,
  that is, at the path through all roots of redexes in $\acls$,
  there is a redex $\ardx$ below it in $\acls$.
  Since contracting $\ardx$ leaves the context untouched, 
  we conclude from the assumption that the root-path is infinite and
  the Cluster Redex Lemma, that in fact the whole
  term must be left unchanged and positions outside the cluster untouched.
\end{proof}
The above proof displays typical `cluster-reasoning': a property of a cluster-step 
is established as a consequence of a property of \emph{some} step in the cluster.
\begin{remark} \label{rem:trivial}
  The Trivial Cluster Lemma does not imply that redexes in trivial clusters 
  are necessarily due to trivial \emph{rules}, 
  i.e.\ rules of the form $\ell \to r$ with $\ell = r$~\cite[p.508, middle]{terese:2003}.
  To wit, let $\atrs$ consist of the following (non-trivial) rules:
  \begin{align}
    \tfunap{f}{\bfunap{g}{x}{y}}{z}{\bfunap{g}{a}{a}} 
    &\to \tfunap{f}{\bfunap{g}{y}{x}}{z}{\bfunap{g}{a}{a}} \tag{$\rho_1$}\\
    \tfunap{f}{\bfunap{g}{a}{a}}{z}{\bfunap{g}{x}{y}} 
    &\to \tfunap{f}{\bfunap{g}{a}{a}}{z}{\bfunap{g}{y}{x}} \tag{$\rho_2$}\\
    \bfunap{g}{x}{y} 
    &\to \bfunap{g}{y}{x} \tag{$\rho_3$} 
  \end{align}
  We consider the term $\tfunap{f}{\bfunap{g}{a}{a}}{t}{\bfunap{g}{a}{a}}$
  which contains both a $\rho_1$-redex and a $\rho_2$-redex at the root,
  a $\rho_3$-redex at disjoint positions $1$ and $3$. These redexes form a \Ycluster.
  Also, the trivial infinite \Icluster{} of Example~\ref{exa:chain} is due 
  to the non-trivial rule {\renewcommand{\a}{\funap{A}}
  $\rulstr{\a{\a{\a{\atrmvar}}}}{\a{\atrmvar}}$}.
\end{remark}
It is always safe to drop steps in trivial clusters from a
multi-step without changing its outcome.
\begin{lemma}\label{lem:noeffect}
  Let $R$ be a \woTRS{}, $t \in \iter{\asig}$ a term.
  Let $U$ be a multi-redex in $t$, and $V\subseteq U$ such that 
  every redex in $V$ is contained in a trivial cluster of $t$.
  Then the multi-step with respect to $U{\setminus}V$ results
  in the same term as the multi-step with respect to $U$.
\end{lemma}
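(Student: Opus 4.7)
The plan is to establish the lemma through a sub-claim together with a decomposition of the complete development of $U$.

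The sub-claim reads as follows: for any multi-redex $W$ in $t$ all of whose redexes lie in trivial clusters of $t$, the complete development satisfies $t \arsdev_W t$. I would prove this by transfinite induction on a complete development of $W$ in the orthogonalized system $\atrs^{\dev}$ of Definition~\ref{def:dev}, which exists because $\atrs$ is a non-collapsing weakly orthogonal iTRS (as recalled in Section~\ref{sec:multi}). The inductive claim is that the term at every stage of the development equals $t$. At a successor step, a labelled redex $\rho \in W$ is contracted in a trivial cluster $c$ of the current term (which equals $t$ by induction hypothesis); the Trivial Cluster Lemma~\ref{lem:trivial} then entails that the new term is still $t$ and that positions outside $c$ have unique descendants equal to themselves. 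Because the redexes of $W \setminus \{\rho\}$ are pairwise non-overlapping with $\rho$, their labels sit at positions outside the pattern of $\rho$ and therefore continue to mark the same redex occurrences in the unchanged trivial cluster structure of $t$. At a limit stage, strong convergence of the development combined with invariance of the term at each previous successor step forces the limit term to equal $t$ as well.

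Granting the sub-claim, I would organize a complete development of $U$ in two phases: first completely develop $V$, then completely develop the residual of $U \setminus V$. By the sub-claim applied to $W \defdby V$, the intermediate term after the first phase is $t$ itself. Consequently, by Lemma~\ref{lem:cluster:step}(i) together with the invariance of $t$, the residuals of the redexes of $U \setminus V$ are exactly the same redex occurrences $U \setminus V$ in $t$. The second phase is therefore itself a complete development of $U \setminus V$ starting from $t$, and hence by uniqueness of complete developments yields precisely the term reached by $\arsdev_{U \setminus V}$ from $t$. Concatenating the two phases gives a complete development of $U$ from $t$ whose final term coincides with that of $\arsdev_{U \setminus V}$, as required.

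The main obstacle I anticipate lies in the second paragraph: justifying the two-phase decomposition of a complete development of $U$ in the infinitary weakly orthogonal setting. In the finitary orthogonal case this is part of the standard theory of developments, and the guaranteed existence of complete developments for every multi-redex (as recalled in Section~\ref{sec:multi}) provides the basic building block. However, one must verify strong convergence of the composite transfinite reduction and carefully track labelled redexes through a potentially transfinite first-phase development of $V$, ensuring that the residual of $U \setminus V$ after such a development is correctly identified with $U \setminus V$ itself.
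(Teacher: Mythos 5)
Your proof follows essentially the same route as the paper's: develop the redexes of $V$ first, invoke the Trivial Cluster Lemma~\ref{lem:trivial} to conclude that this phase leaves the term and the remaining redexes untouched, and then observe that what remains is a complete development of $U\setminus V$, whose result is unique. The paper's own proof is a two-sentence version of exactly this argument, so your more detailed elaboration (the transfinite-induction sub-claim and the explicit two-phase decomposition) is a faithful, if more careful, rendering of the intended reasoning.
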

\begin{proof}
  We reduce in the complete development first all redexes in trivial clusters: 
  by Lemma~\ref{lem:trivial} this leaves the term as well as all redexes not in 
  trivial clusters untouched. As a consequence, the result of the complete development 
  (multi-step) depends only on the redexes not in trivial clusters.
\end{proof}
The above allows for a cluster-wise definition by cases of the map
required for the triangle property: trivial clusters can simply be mapped
to themselves, and since non-trivial clusters are finite we may proceed
for them analogously (but outside--in instead of inside--out)
to the finite case~\cite[Theorem~8.8.27]{terese:2003}.

The \emph{tail} of a redex in an \Icluster\ $\acls$ is its maximal position on the root-path of $\acls$
(that is, the position of the last pattern symbol of $\acls$ along the root-path of $\acls$).
\begin{definition} \label{def:full:multi:redex}
  The \emph{full} multi-redex $\rdxfula{\acls}$ of a cluster $\acls$ in a term $\atrm$ is
  defined by case-distinction as follows:
  \renewcommand{\descriptionlabel}[1]{\hspace{\labelsep}{{\normalfont{#1}}}} 
  \begin{description}
  \item[\emph{Case~1: $\acls$ is trivial}.] 
    Then $\rdxfula{\acls} = \setemp$.
  \item[\emph{Case~2: $\acls$ is non-trivial}.] 
    Then $\acls$ is a finite \Icluster,
    and $\rdxfula{\acls}$ is defined by repeating the following procedure 
    until no further selections are possible:
    Select a redex that is below the already selected ones,
    such that its tail is minimal (topmost among the remaining).
  \end{description}
  The \emph{full} multi-redex $\rdxfula{\atrm}$ of $\atrm$ is the union
  of the full multi-redexes for all its clusters.
  We write $\trmfula{\atrm}$ to denote a term obtained by developing $\rdxfula{\atrm}$.
\end{definition}
The procedure is the (top--bottom) mirrored version of the procedure 
in~\cite[Proposition~8.8.23]{terese:2003}. Mirroring works
because we are considering finite \Icluster{s}, for these only
their root-paths are relevant, and paths are clearly mirrorable.
\begin{theorem}[Triangle]\label{thm:nptriangle} \label{thm:angle}
  In every weakly orthogonal TRS without collapsing rules 
  the multi-step rewrite relation $\aarsdev$ has the triangle property.
 \end{theorem}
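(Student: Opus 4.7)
The plan is to show that the map $\atrm \mapsto \trmfula{\atrm}$ from Definition~\ref{def:full:multi:redex} witnesses the triangle property, reducing the infinitary case to the finitary one along the cluster decomposition of the source term.

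First, I would verify that $\atrm \arsdev \trmfula{\atrm}$, i.e.\ that $\rdxfula{\atrm}$ is a multi-redex admitting a complete development. Non-overlap follows from Lemma~\ref{lem:cluster:step}(ii) across distinct clusters, combined with non-overlap \emph{within} each non-trivial \Icluster, which is built into the outside--in selection (each freshly chosen redex is picked strictly below those already chosen). That a complete development of $\rdxfula{\atrm}$ exists is where the no-collapsing-rules hypothesis first enters: since no rule can collapse an ancestor pattern, every chosen redex is fired at its original finite depth and the development is strongly convergent; we take $\trmfula{\atrm}$ to be its result.

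For the main direction, suppose $\atrm \arsdev_U \btrm$. I would begin by invoking Lemma~\ref{lem:noeffect} twice, to prune from both $U$ and $\rdxfula{\atrm}$ all redexes sitting in trivial clusters of $\atrm$; by the Trivial Cluster Lemma~\ref{lem:trivial} this changes neither target, and leaves descendants of any position outside trivial clusters unaffected. Thus one is reduced to the situation where both multi-redexes are supported on the \emph{finite} non-trivial \Icluster{s} of $\atrm$, which by Lemma~\ref{lem:cluster:step}(ii) occupy pairwise disjoint subterms. Inside each such cluster $\acls$, the selection $\rdxfula{\acls}$ is precisely the top--bottom mirror of the inside--out procedure of \cite[Proposition~8.8.23]{terese:2003}, which is legitimate because $\acls$ is finite and only its root-path matters; hence I would appeal to the finitary triangle result \cite[Theorem~8.8.27]{terese:2003}, applied cluster-locally and mirrored, to obtain a finite development from the local result of contracting $U \cap \acls$ to the local result of contracting $\rdxfula{\acls}$.

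The final step is to assemble these cluster-local developments into a single global multi-step $\btrm \arsdev \trmfula{\atrm}$, and this is where I expect the main obstacle to lie. Non-overlap of the assembled multi-redex on $\btrm$ follows because the non-trivial clusters of $\atrm$, and hence their cluster-local residuals in $\btrm$, inhabit disjoint subterms. Strong convergence of the assembled complete development is what truly depends on the absence of collapsing rules: a residual contracted at depth $d$ can lift further residuals by at most the finite pattern depth of its rule, so the depths of the assembled cluster-local residuals tend to infinity together with the depths of the clusters in $\atrm$. In the presence of collapses this would fail, as already Example~\ref{ex:collapse} demonstrates for plain confluence.
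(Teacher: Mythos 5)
Your overall architecture (define $\trmfula{\atrm}$ via the full multi-redex, discard trivial clusters by Lemma~\ref{lem:noeffect}, treat the remaining finite \Icluster{s}) matches the paper's, but the crucial middle step is not actually carried out. The paper does \emph{not} invoke the finitary triangle theorem cluster-locally; it explicitly observes that the finitary proof is inside--out and ``does not carry over immediately'', and the top--bottom mirroring remark justifies only that the \emph{selection procedure} for $\rdxfula{\acls}$ is well defined, not that a mirrored version of \cite[Theorem~8.8.27]{terese:2003} holds. What the paper proves instead is a covering argument: an injection $\ainj$ sending each redex $\ardx$ of the given multi-redex $\aRdx$ to a redex $\inja{\ardx}\in\rdxfula{\atrm}$ overlapping its tail, so that by weak orthogonality an outside--in development of $\aRdx$ performs \emph{exactly the same steps} as one of $\inja{\aRdx}$; hence $\atrm\arsdev_{\inja{\aRdx}}\btrm$ with $\inja{\aRdx}\subseteq\rdxfula{\atrm}$, and one concludes by developing the residuals $\rdxfula{\atrm}/\inja{\aRdx}$. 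This replacement of $\aRdx$ by an overlapping sub-multi-redex of $\rdxfula{\atrm}$ is the content of the theorem, and your proposal leaves it to an unproven ``mirrored'' citation.

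A second, concrete error: Lemma~\ref{lem:cluster:step}(ii) gives that distinct clusters are \emph{non-overlapping} (disjoint patterns), not that they ``occupy pairwise disjoint subterms''. A non-trivial cluster may be nested below a redex of another cluster, so contracting the outer one can displace or duplicate the inner one; your assembly step, which derives orthogonality of the joined multi-redex on $\btrm$ from disjointness of subterms, therefore does not go through as stated. The correct route is again via residual theory for the orthogonal multi-redex $\rdxfula{\atrm}$ (duplication of nested clusters is handled automatically by taking residuals of $\rdxfula{\atrm}$ after $\inja{\aRdx}$), rather than by a by-hand reassembly of cluster-local results.
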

\begin{proof}
  We show that for any development $\iarsdeva{\aRdx}{\atrm}{\btrm}$,
  it holds  $\arsdeva{\btrm}{\trmfula{\atrm}}$.
   
  We first show that we may assume without loss of generality that $\aRdx$ only 
  contains redexes that are not part of trivial clusters.
  Write $\aRdx$ as $\setsum{\aiRdx{\natone}}{\aiRdx{\nattwo}}$
  with $\aiRdx{\natone}$ consisting of all redexes in $\aRdx$ 
  contained in trivial clusters. 
  Developing $\aRdx$ by contracting $\aiRdx{\natone}$ gives rise to 
  $\iarsdeva{\aiRdx{\natone}}{\atrm}{\iarsdeva{\resa{\aRdx}{\aiRdx{\natone}}}{\ctrm}{\btrm}}$.
  By the Trivial Cluster Lemma $\trmeq{\atrm}{\ctrm}$ and
  $\rdxeq{\resa{\aRdx}{\aiRdx{\natone}}}{\aiRdx{\nattwo}}$.
  
  Next we show that $\aRdx$ may be covered by the $\rdxfula{\atrm}$.
  Write $\aRdx$ as $\setSum{\acls}{\aiRdx{\acls}}$ such that each
  $\aiRdx{\acls}$ is contained in the non-trivial finite I-cluster $\acls$.
  Since $\acls$ is non-trivial, $\rdxfula{\atrm}$ contains
  the full multi-redex of $\acls$. 
  This allows to \emph{cover} $\aRdx$, in the sense that we can define
  an injective mapping $\ainj$ mapping every redex
  $\setin{\ardx}{\aiRdx{\acls}}$ to a redex 
  $\setin{\inja{\ardx}}{\rdxfula{\atrm}}$ such that 
  $\ardx$ overlaps the tail of $\inja{\ardx}$.

  Finally, consider a development of $\aRdx$ contracting
  redexes in outside--in order. This development is mapped by $\ainj$ 
  to a development of $\inja{\aRdx}$ performing exactly the same
  steps, which is still converging since each
  redex in $\ardx$ is covered by a redex overlapping it,
  and which therefore has the same target, 
  i.e.\ $\iarsdeva{\inja{\aRdx}}{\atrm}{\btrm}$.
  Since $\inja{\aRdx}$ is contained in $\rdxfula{\atrm}$ we
  conclude by completely developing the residuals of the
  latter after the former 
  $\iarsdeva{\resa{\rdxfula{\atrm}}{\inja{\aRdx}}}{\btrm}{\trmfula{\atrm}}$.

  Note that since $\arsdeva{\atrm}{\trmfula{\atrm}}$ holds by 
  definition, multi-steps even have the triangle property.
\end{proof}

As a direct consequence we obtain, the following corollary.
\begin{corollary}[Diamond]\label{cor:diamond}
  In every weakly orthogonal TRS without collapsing rules 
  the multi-step rewrite relation $\aarsdev$ has the diamond property.
\end{corollary}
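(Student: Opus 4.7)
The plan is to derive the diamond property of $\aarsdev$ as a direct corollary of the triangle property established in Theorem~\ref{thm:nptriangle}. Recall from Definition~\ref{def:diamond:triangle} that the triangle property supplies a map $t \mapsto \trmfula{t}$ on $\iter{\asig}$ with two features: first, $\arsdeva{t}{\trmfula{t}}$ for every $t$; and second, for any multi-step $\arsdeva{t}{s}$ one has $\arsdeva{s}{\trmfula{t}}$. As the excerpt itself remarks after Definition~\ref{def:diamond:triangle}, the triangle property strictly entails the diamond property, so this corollary is a pure extraction from what has already been proved.

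Concretely, the proof I would write takes an arbitrary peak $s_1 \,\arsdevinva\, t \arsdeva s_2$ of multi-steps, and applies the triangle property to each leg with the single term $\trmfula{t}$ given by Theorem~\ref{thm:nptriangle}. This yields $\arsdeva{s_1}{\trmfula{t}}$ and $\arsdeva{s_2}{\trmfula{t}}$, which closes the diamond with common reduct $\trmfula{t}$. No further construction is required: the same multi-redex $\rdxfula{t}$ serves to join both legs simultaneously.

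There is no genuine obstacle here; all the difficulty was absorbed into Theorem~\ref{thm:nptriangle}, where the cluster analysis, the handling of trivial (\Ycluster{} and infinite \Icluster) configurations via Lemma~\ref{lem:trivial} and Lemma~\ref{lem:noeffect}, and the outside--in selection of the full multi-redex $\rdxfula{t}$ were set up precisely so that a single term $\trmfula{t}$ dominates every multi-step out of $t$. The corollary is therefore essentially a one-line consequence: \emph{triangle implies diamond, and $\arsdeva$ has the triangle property in weakly orthogonal iTRSs without collapsing rules, hence $\aarsdev$ has the diamond property}.
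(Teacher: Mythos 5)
Your proposal is correct and matches the paper exactly: the paper states the corollary as a direct consequence of Theorem~\ref{thm:nptriangle}, using precisely the observation (made after Definition~\ref{def:diamond:triangle}) that the triangle property entails the diamond property, with $\trmfula{t}$ closing any peak out of $t$.
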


\section*{The Triangle Property, Effectively}

Theorem~\ref{thm:angle} and Corollary~\ref{cor:diamond} show that 
weakly orthogonal iTRSs without collapsing rules have both
the triangle property and the diamond property.
Still, the above is somewhat unsatisfactory in that it does not yield a construction
to obtain a common reduct of two multi-steps from a given term, even if these steps
are given effectively. The reason is that the definition of the full multi-redex 
of a given term employs a case distinction on whether a cluster is finite or not,
an undecidable property in general. Here we remedy that and present an effective
orthogonalization procedure. To that end, we first present the idea of \emph{orthogonalizing} 
a set of redexes and recapitulate the concrete orthogonalization procedure for weakly orthogonal 
TRSs of~\cite[Section~8.8.4]{terese:2003}, and next show, by a careful analysis of clusters, 
that it can be extended to an effective procedure for weakly orthogonal iTRSs without collapsing rules.

In a peak $\relap{\relsco{\aiarsdevinv{U}}{\aiarsdev{V}}}{\atrm}{\btrm}$ in a 
left-linear TRS, the union $\setlub{U}{V}$ of the multi-redexes $U$ and $V$ may be non-orthogonal
(there may be overlaps) despite that the multi-redexes $U$ and $V$ themselves are orthogonal.
As a consequence, in general no common reduct of $\atrm$ and $\btrm$ 
can be found and confluence is lost for such TRSs.
However, in some cases and in particular in the case of weakly orthogonal TRSs,
the non-orthogonality is more apparent than real, in the sense that the peak can be replaced by
another equivalent\footnote{%
Equivalent in the sense of relating the same two terms $t$ and $s$.}
one $\relap{\relsco{\aiarsdevinv{\rdxort{U}}}{\aiarsdev{\rdxort{V}}}}{\atrm}{\btrm}$
such that $\setlub{\rdxort{U}}{\rdxort{V}}$ \emph{is} orthogonal.
As a consequence, a common reduct of $t$ and $s$ can be reached via the usual orthogonal projections
$\relap{\relsco{\aiarsdev{\project{\rdxort{V}}{\rdxort{U}}}}{\aiarsdevinv{\project{\rdxort{U}}{\rdxort{V}}}}}{\atrm}{\btrm}$ 
of \emph{those}, now orthogonal, multi-steps, and confluence is regained.
In such cases, the function $\srdxort$ will be called an \emph{orthogonalization}.

We let the orthogonalization of a set of redexes $W$ in a term $t$,
depend on a subset $U$ of $W$ of redexes that `have already been orthogonalized'.
This reflects that our orthogonalization algorithm will proceed incrementally, 
initially setting $U$ to the empty set imposing no orthogonality constraints, 
incrementing it in each iteration of the main loop, 
expressing that an ever growing initial segment of the 
(potentially) infinite set of redexes has been orthogonalized, until in the 
limit an orthogonalization of the whole set $W$ is obtained.
\begin{definition}
  A \emph{$U$-orthogonalization} for a subset $U$ of a given set of redexes $W$ in a term $t$
  of a TRS is a partial function $\srdxort$ from $W$ to itself such that
  \begin{itemize}
  \item
    the image $\rdxort{U}$ of $U$ is an orthogonal set of redexes; and
  \item
    $\iarsdeva{\rdxort{V}}{\atrm}{\btrm}$
    for every multi-redex $\setle{V}{W}$ such that $\iarsdeva{V}{\atrm}{\btrm}$.
  \end{itemize}
  where we have used superscripting to indicate the application 
  of $\srdxort$ lifted to subsets of $W$.
  In case $U = W$ we simply speak of orthogonalization.
\end{definition}
The first condition expresses that $\srdxort$ orthogonalizes the subset $U$
and the second condition that orthogonalization is consistent also with 
redexes outside $U$, in that any orthogonal subset of $W$, the multi-redex $V$,
is mapped to a set of redexes that is orthogonal again, as implicitly expressed by 
$\iarsdeva{\rdxort{V}}{\atrm}{\btrm}$.
For instance, the identity function is an $\emptyset$-orthogonalization of any
set of redexes, and an orthogonalization of any orthogonal set of redexes.

We say a TRS \emph{admits} orthogonalization if for every term $t$, every set of redexes $W$, 
and every subset $U$ of $W$, there exists a $U$-orthogonalization of $W$ in $t$.
If this holds when the sets $W$ are restricted to unions of pairs of multi-redexes, 
we say the TRS admits \emph{binary} orthogonalization.
\begin{theorem}[Diamond and Triangle by Orthogonalization]\label{thm:diamond:effective}
  The multi-step relation $\aarsdev$ of a left-linear TRS
  has the diamond property, if the TRS admits binary orthogonalization,
  and has the triangle property, if the TRS admits orthogonalization.
\end{theorem}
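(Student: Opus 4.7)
The plan is to reduce both claims to the well-known orthogonal theory of multi-steps, using the hypothesized (binary) orthogonalization as the bridge. The payload of the two conditions in the definition of a $U$-orthogonalization is precisely what is needed: the first condition delivers an orthogonal replacement, and the second condition guarantees that the orthogonal replacement still realises each multi-step we started with. So in both parts the proof will proceed by (a) applying (binary) orthogonalization to a suitably chosen set $W$, (b) rewriting the given multi-steps through $\srdxort$, and then (c) invoking the standard orthogonal projection / complete development to close the figure.

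First I would prove the diamond property. Given two multi-steps $\astep \funin \dstepu{\astep}{\atrm}{\aitrm{1}}{U}$ and $\bstep \funin \dstepu{\bstep}{\atrm}{\aitrm{2}}{V}$, set $W \defdby \setlub{U}{V}$, which is a union of two multi-redexes, so binary orthogonalization applies with the subset taken to be $W$ itself. Let $\srdxort$ be the resulting $W$-orthogonalization; then $\rdxort{W} \seteq \setlub{\rdxort{U}}{\rdxort{V}}$ is an orthogonal set of redexes, so in particular $\rdxort{U}$ and $\rdxort{V}$ are individually orthogonal as well as mutually orthogonal. Moreover, by the consistency clause applied to $U$ and to $V$, we get $\iarsdeva{\rdxort{U}}{\atrm}{\aitrm{1}}$ and $\iarsdeva{\rdxort{V}}{\atrm}{\aitrm{2}}$. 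Now the standard orthogonal projection of $\rdxort{V}$ over $\rdxort{U}$ and of $\rdxort{U}$ over $\rdxort{V}$ (cf.\ Definition~\ref{def:oproj}) produces a common multi-step reduct $\ctrm$ with $\iarsdeva{\project{\rdxort{V}}{\rdxort{U}}}{\aitrm{1}}{\ctrm}$ and $\iarsdeva{\project{\rdxort{U}}{\rdxort{V}}}{\aitrm{2}}{\ctrm}$, yielding the diamond.

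For the triangle property, given a term $\atrm$, I would take $W$ to be the \emph{entire} set of redexes in $\atrm$ and apply (full) orthogonalization to obtain $\srdxort$, so that $\rdxort{W}$ is an orthogonal multi-redex in $\atrm$. Define $\trmfula{\atrm}$ as the unique target of a complete development of the orthogonal multi-redex $\rdxort{W}$; then by construction $\arsdeva{\atrm}{\trmfula{\atrm}}$. For any multi-step $\dstepu{\astep}{\atrm}{\btrm}{U}$ with $U$ a multi-redex contained in $W$, the consistency clause gives $\iarsdeva{\rdxort{U}}{\atrm}{\btrm}$; and since $\rdxort{U} \subseteq \rdxort{W}$ with $\rdxort{W}$ orthogonal, we can form the orthogonal residual $\resa{\rdxort{W}}{\rdxort{U}}$ and complete-develop it from $\btrm$, obtaining $\arsdeva{\btrm}{\trmfula{\atrm}}$. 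This is the triangle.

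The only delicate points are bookkeeping rather than deep obstacles. The first is that the target of a complete development of $\rdxort{W}$ exists and is well-defined even when $\rdxort{W}$ is infinite; this is precisely where a non-collapsing hypothesis (or its analogue) is needed when specialising to the iTRS setting, but at the abstract level of the theorem the existence of such a complete development is already implicit in the consistency clause of $U$-orthogonalization (which speaks of arbitrary multi-redexes $V \setle W$). The second is that the standard orthogonal projection lemmas used in both parts need to apply to $\rdxort{U}$, $\rdxort{V}$, $\rdxort{W}$ qua orthogonal sets of redexes in a left-linear system, which is immediate from the orthogonality guaranteed by $\srdxort$. The main engineering in this theorem is thus not the proof itself but the packaging of the definition of $U$-orthogonalization; once that definition is accepted, the proofs above are one-liners modulo invocation of classical orthogonal residual theory.
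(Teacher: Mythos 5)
Your proposal is correct and follows essentially the same route as the paper's proof: orthogonalize $W=U\cup V$ (resp.\ the set of all redexes of $t$) using the hypothesis, use the second clause of the definition of $U$-orthogonalization to re-express the given multi-steps over the orthogonalized redex sets, and close the diagram with the standard orthogonal projection (the paper cites this as the Infinite Developments Lemma). Your closing remarks on well-definedness of complete developments of $\rdxort{W}$ match the paper's appeal to that lemma, so there is nothing substantive to add.
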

\begin{proof}
  \mbox{}
  \begin{itemize}
  \item
    The proof of the diamond property is as outlined above:
    Suppose $\iarsdevinva{U}{\btrm}{\iarsdeva{V}{\atrm}{\ctrm}}$.
    As the TRS is assumed to admit binary orthogonalization,
    there exists an orthogonalization $\srdxort$ of the union $W = \setlub{U}{V}$ 
    of the multi-redexes $U$,$V$ in $t$.
    so $\iarsdevinva{\rdxort{U}}{\btrm}{\iarsdeva{\rdxort{V}}{\atrm}{\ctrm}}$.
    By the Infinite Developments Lemma we conclude to
    $\iarsdeva{\project{\rdxort{V}}{\rdxort{U}}}{\btrm
}{\iarsdevinva{\project{\rdxort{U}}{\rdxort{V}}}{\atrm'
                                               }{\ctrm
                                               }}$ for some term $\atrm'$.
  \item
    To see the triangle property holds, let $\srdxort$ be an
    an orthogonalization of the set of all redexes of the term $t$;
    it exists since the TRS is assumed to admit orthogonalization. 
    Denoting (for convenience) both the set of all redexes 
    and the result of contracting it by $\rdxort{t}$, 
    we have $\iarsdeva{\rdxort{t}}{t}{\rdxort{t}}$.
    For an arbitrary multi-step $\iarsdeva{U}{t}{s}$,
    $\setle{U}{\rdxort{t}}$, hence
    $\iarsdeva{\rdxort{U}}{t}{s}$ and by the Infinite Developments Lemma
    $\iarsdeva{\project{\rdxort{t}}{\rdxort{U}}}{s}{\rdxort{t}}$ as desired. \qedhere
  \end{itemize}
\end{proof}

\noindent Although this result only depends on left-linearity,
some TRSs do not admit an orthogonalisation even if confluent,
e.g.\ consider $\setstr{\rulstr{a}{b},\rulstr{a}{c},\rulstr{b}{c}}$,
but others such as weakly orthogonal TRSs\footnote{%
The proof that so-called development-closed TRSs are confluent~\cite{terese:2003}
could also be seen as providing a stepwise procedure for orthogonalizing any peak of multi-steps,
each time replacing a non-orthogonal peak by an equivalent `more orthogonal' one,
but there the source of the peak may change.
}
do and these will be of interest here.
\begin{remark}
  Obviously if a TRS admits orthogonalization, then it admits binary orthogonalization.
  In view of Corollary~\ref{cor:effective:diamond:triangle} below, it may be
  interesting to investigate under which conditions (an effective version of) the converse holds.
  
  Not every TRS that admits orthogonalization is weakly othogonal,
  two typical examples being the `left-reducible' TRS\footnote{%
Call a rule \emph{left-reducible} if its left-hand side is reducible with respect to the other rules.
} with rules
  $\setstr{\rulstr{f(a)}{g(b)},\rulstr{f(x)}{g(x)},\rulstr{a}{b}}$, 
  and the `feebly orthogonal' TRS\footnote{%
Call a critical peak $s_1 \leftarrow \arsa{t}{s_2}$ \emph{feeble}
if the cardinality of the set $\setstr{s_1,t,s_2}$ is at most $2$.
} with rules
  $\setstr{\rulstr{g(f(a,x))}{b},\rulstr{f(x,a)}{f(x,x)}}$.
\end{remark}
In the special case of a TRS that is weakly orthogonal,
if a redex overlaps another one, both are equivalent, so either can be replaced by the other.
Then the challenge in developing an orthogonalization for a pair of multi-redexes
is to make these replacements consistently, i.e.\ in such a way that no new overlaps are created. 
\begin{example} \label{exa:orthogonalization}
  Consider Figure~\ref{fig:orthogonalization}.
  The set $\setstr{1,2,3,4,5}$ combining the multi-redexes of the peak 
  $\relsco{\aiarsdevinv{1,2}}{\aiarsdev{3,4,5}}$ of multi-steps, is not orthogonal,
  e.g.\ the redexes $2$ and $3$ partially overlap each other. 
  When attempting to orthogonalize it, these overlaps have to be resolved
  but one has to be careful since e.g.\ either replacing the redex $2$ by $3$ 
  or replacing $3$ by $2$ would create new conflicts
  (between $3$ and $1$ and between $2$ and $5$ respectively).
\end{example}
\begin{figure}[hpt!]
  \begin{center}
  \scalebox{.8}{\begin{tikzpicture}
  [inner sep=1.5mm,
   node distance=5mm,
   terminal/.style={
     rectangle,rounded corners=2.25mm,minimum size=4mm,
     very thick,draw=black!50,top color=white,bottom color=black!20,
   },
   redexA/.style={draw=dblue,fill=fblue,opacity=\oblue,thick},
   redexB/.style={draw=dgreen,fill=fgreen,opacity=\ogreen,thick},
   >=stealth,xscale=.75,yscale=.75]

  \begin{scope}[xshift=0,yshift=0]
    \draw (0,0) -- (2,4) -- (4,0) -- cycle;
    \fill[redexA] (1.4,2.2) -- (.7,.9) -- (2.1,.9) -- cycle;
    \node [terminal] at (1.4,1.3) {1};
    \fill[redexA] (2.4,2.8) -- (2.2,1.3) -- (3.25,1.1) -- cycle;
    \node [terminal] at (2.6,1.8) {2};
  \end{scope}
  \node at (4.5,2) {$\cup$};
  \begin{scope}[xshift=5cm,yshift=0]
    \draw (0,0) -- (2,4) -- (4,0) -- cycle;
    \fill[redexB] (2,3.8) -- (1,1.7) -- (2.65,2.5) -- cycle;
    \node [terminal] at (2,2.8) {3};
    \fill[redexB] (2.6,1.8) -- (2.3,0.3) -- (3.3,0.3) -- cycle;
    \node [terminal] at (2.9,1) {5};
    \fill[redexB] (1,.7) -- (.2,0) -- (1.8,0) -- cycle;
    \node [terminal] at (1,.1) {4};
  \end{scope}
  \node at (10,2) {$=$};
  \begin{scope}[xshift=11cm,yshift=0]
    \draw (0,0) -- (2,4) -- (4,0) -- cycle;
    \fill[redexB] (2,3.8) -- (1,1.7) -- (2.65,2.5) -- cycle;
    \fill[redexA] (1.4,2.2) -- (.7,.9) -- (2.1,.9) -- cycle;
    \fill[redexA] (2.4,2.8) -- (2.2,1.3) -- (3.25,1.1) -- cycle;
    \fill[redexB] (2.6,1.8) -- (2.3,0.3) -- (3.3,0.3) -- cycle;
    \fill[redexB] (1,.7) -- (.2,0) -- (1.8,0) -- cycle;
  \end{scope}
\end{tikzpicture}}
  \end{center}\vspace{-2ex}
  \caption{\textit{Orthogonalization in a weakly orthogonal TRS.}}
  \label{fig:orthogonalization}
\end{figure}

Creation of new conflicts can be avoided by proceeding in inside--out fashion,
resolving partial overlaps by replacing the outer by the inner redex~\cite[Theorem~8.8.23]{terese:2003}.
\begin{example}
  Reconsider Figure~\ref{fig:orthogonalization} and 
  apply the orthogonalization of~\cite[Theorem~8.8.23]{terese:2003}. 
  We start at the bottom of the tree. 
  The first partial overlap we find is between the redexes $2$ and $5$; 
  this is removed by replacing the outer redex $2$ by the inner redex $5$,
  i.e.\ setting $\srdxort(2) = \srdxort(5) = 5$.
  Then the partial overlap between $2$ and $3$ has also disappeared.
  The only remaining partial overlap is between the redexes $3$ and $1$. 
  Hence we replace the outer redex $3$ by the inner redex $1$,
  i.e.\ $\srdxort(1) = \srdxort(3) = 1$.
  Finally, as $4$ does not overlap any other redex we set $\srdxort(4) = 4$.
  As result $\srdxort$ maps the non-orthogonal set $\setstr{1,2,3,4,5}$ 
  to the orthogonal set $\setstr{1,4,5}$,
  and applying it to the peak $\relsco{\aiarsdevinv{1,2}}{\aiarsdev{3,4,5}}$
  yields the equivalent orthogonal peak $\relsco{\aiarsdevinv{1,5}}{\aiarsdev{1,4,5}}$,
  as desired.
\end{example} 
As an inside--out orthogonalization procedure obviously cannot work on infinite terms,
here we will proceed dually, in outside--in fashion.
Roughly speaking, we start at the top of the term and replace overlapping redexes with the outermost one.
However, care has to be taken in situations as depicted in Figure~\ref{fig:orthogonalization},
where it seems that the \emph{two} inner redexes $1$ and $2$ would both 
need to be replaced by the \emph{single} outer redex $3$. 
The key observation to overcome this problem is that in such cases all redexes $1$,$2$,$3$ 
belong to the same \Ycluster{} and, by Lemma~\ref{lem:noeffect}, contracting 
it only yields a trivial step, i.e.\ a step from the term to itself. 
Hence, upon detection, it is safe to simply discard these redexes.
Before formally defining the corresponding orthogonalization procedure,
we illustrate it by an example showing four typical\footnote{%
In fact, these cases `cover' all possibilities arising during the orthogonalization
of the union of two multi-redexes, as needed when establishing the effective 
version of the diamond property.} 
cases arising during orthogonalization.
\begin{example}
  Consider orthogonalizing the set $\setlub{U}{V}$ in the four cases as displayed in 
  Figure~\ref{fig:cases:orthogonalization}, 
  with $U$ the (blue) multi-redex containing $u$ and (possibly) $m$
  and $V$ the (green) multi-redex containing $v$ and (possibly) $w$, 
  and with $u$ a topmost (that is, having minimal depth) redex in $U$
  and $v$ a topmost redex in $V$ overlapping $u$ from the inside,
  see 
  \begin{figure}[hpt!]
  \vspace{-1ex}
  \begin{center}
          \begin{tikzpicture}
  [inner sep=1.5mm,
   node distance=5mm,
   terminal/.style={
         rectangle,rounded corners=2.25mm,minimum size=4mm,
         very thick,draw=black!50,top color=white,bottom color=black!20,
   },
   redexA/.style={draw=dblue,fill=fblue,opacity=\oblue,thick},
   redexB/.style={draw=dgreen,fill=fgreen,opacity=\ogreen,thick},
   >=stealth,xscale=.75,yscale=.75]

  \begin{scope}[xshift=0,yshift=0]
        \draw (0,0) -- (2,4) -- (4,0) -- cycle;
        \fill[redexA] (2,3.8) -- (.7,1.2) -- (2.75,2.3) -- cycle;
        \node [terminal] at (2,2.9) {$u$};
        \fill[redexB] (1.8,2.5) -- (1.5,1.0) -- (2.7,0.8) -- cycle;
        \node [terminal] at (2,1.4) {$v$};
        \node at (2,-.5) {case (i)};
  \end{scope}
  \begin{scope}[xshift=5cm,yshift=0]
        \draw (0,0) -- (2,4) -- (4,0) -- cycle;
        \fill[redexA] (2,3.8) -- (.7,1.2) -- (2.75,2.3) -- cycle;
        \node [terminal] at (2,2.9) {$u$};
        \fill[redexB] (1.3,2.1) -- (1.0,0.6) -- (2.2,0.4) -- cycle;
        \node [terminal] at (1.5,1.0) {$v$};
        \fill[redexB] (2.4,2.4) -- (2.2,0.8) -- (3.2,0.8) -- cycle;
        \node [terminal] at (2.7,1.5) {$w$};
        \node at (2,-.5) {case (ii)};
  \end{scope}
  \begin{scope}[xshift=10cm,yshift=0cm]
        \draw (0,0) -- (2,4) -- (4,0) -- cycle;
        \fill[redexA] (2,3.8) -- (.6,1.0) -- (2.85,2.1) -- cycle;
        \node [terminal] at (2,2.9) {$u$};
        \fill[redexB] (1.5,2.7) -- (1.1,1.8) -- (2.8,1.5) -- cycle;
        \node [terminal] at (1.65,2.1) {$v$};
        \fill[redexB] (1.4,1.65) -- (0.8,0.2) -- (2.5,0.4) -- cycle;
        \node [terminal] at (1.55,0.8) {$w$};
        \node at (2,-.5) {case (iii)};
  \end{scope}
  \begin{scope}[xshift=15cm,yshift=0cm]
        \draw (0,0) -- (2,4) -- (4,0) -- cycle;
        \fill[redexA] (2,3.8) -- (.6,1.0) -- (2.85,2.1) -- cycle;
        \node [terminal] at (2,2.9) {$u$};
        \fill[redexB] (1.5,2.7) -- (1.1,1.8) -- (2.8,1.5) -- cycle;
        \node [terminal] at (1.65,2.1) {$v$};
        \fill[redexA] (2.5,1.7) -- (2.1,1.0) -- (2.85,1.0) -- cycle;
        \node [terminal] at (3,.9) {$m$};
        \fill[redexB] (1.4,1.65) -- (0.8,0.2) -- (2.5,0.4) -- cycle;
        \node [terminal] at (1.55,0.8) {$w$};
        \node at (2,-.5) {case (iv)};
  \end{scope}
\end{tikzpicture}
  \end{center}
  \vspace{-2ex}
  \caption{\textit{Four typical cases for the orthogonalization algorithm.}}
  \label{fig:cases:orthogonalization}
  \vspace{-1ex}
  \end{figure}

  \begin{enumerate}[label=(\roman*)]
   \item If $v$ is the only redex in $V$ that overlaps with $u$,
       then we replace $v$ by $u$.
  \end{enumerate}
  \noindent
  Otherwise 
  we pick a redex $w \in V$, $w \ne v$ and $w$ overlaps $u$.
  \begin{enumerate}[label=(\roman*)]
  \setcounter{enumi}{1}
  \item Assume that $v$ and $w$ are at parallel (disjoint) positions.
      Then $u$, $v$ and $w$ belong to a \Ycluster{} and can be dropped from $U$ and $V$ by Lemma~\ref{lem:noeffect}.
  \end{enumerate}
  \noindent
  Otherwise, $v$ and $w$ are not disjoint, and then $w$ must be nested inside $v$.
  \begin{enumerate}[label=(\roman*)]
  \setcounter{enumi}{2}
  \item If $u$ is the only redex from $U$ overlapping $v$, then we can replace $u$ by $v$.
  \item In the remaining case there must be a redex $m \in U$, such that $m \ne u$ and $m$ overlaps with the redex $v$, 
      see case (iv) of Figure~\ref{fig:cases:orthogonalization}. 
      Then $u$, $m$, $v$ and $w$ are contained in a \Ycluster{} again,\footnote{%
This uses convexity of patterns to establish 
that $v$ cannot `tunnel through' $w$ to touch~$m$, cf.~\cite{kete:klop:oost:2004a}.
}
      hence can be dropped from $U$ and $V$, as in case (ii).
  \end{enumerate}
\end{example}

\noindent Cases (ii) and (iv) of the example are both dealt with in the then--branch
of the orthogonalization algorithm below,
and cases (i) and (iii) in the else--branch.

\subsection*{The Orthogonalization Algorithm}

The orthogonalization algorithm for weakly orthogonal iTRSs
is given in Figure~\ref{fig:orthogonalization:procedure}
It computes for any given set $W$ of (possibly overlapping) redexes in a term $t$ its orthogonalization.
It does so by each time picking a topmost redex and considering the set of
all redexes that are first or second degree overlapping with it, i.e.\ that overlap the
redex itself, or that overlap a redex that overlaps the redex. 
Then the algorithm distinguishes cases on whether this set contains parallel redexes or not,
i.e.\ on whether or not the redexes can be \emph{locally} seen to belong to a \Ycluster.
If the set contains parallel redexes, all redexes in it belong to a \Ycluster{} and can be discarded.
Otherwise, their roots are located on a path called the root-path,
i.e.\ \emph{locally} the redexes seem to belong to a \Icluster{}, 
a redex `highest' on that root-path is selected and 
all redexes overlapping it are mapped onto it.
\begin{figure}
\begin{algorithmic}[1]
  \Require $W$ is a set of redexes in a term $\atrm$;
  \State $C \gets W$;
  \State $\srdxort$ is the partial function on the empty domain;
  \While{$C$ is non-empty}
    \State $x \gets$ a redex at minimal depth in $C$;
    \State $X \gets$ the set of redexes in $C$ that overlap $x$;
    \State $X \gets$ the set of redexes in $C$ that overlap some redex in $X$;
    \If{$X$ contains parallel redexes}
      \State $\srdxort$ is set to be undefined on $X$;
    \Else
      \State $x \gets$ a redex in $X$ that is above every redex in $X$ that is 
      below some redex in $X$;
      \State $X \gets$ the set of redexes in $X$ overlapping $x$;
      \State $\srdxort$ is set to $x$ on $X$;
    \EndIf
    \State $C \gets \setdif{C}{X}$;
  \EndWhile
  \Ensure 
    the partial function defined as $\srdxort$ on $\setdif{W}{C}$ and
    the identity on $C$, is a $(\setdif{W}{C})$-orthogonalization of $W$ in $t$,
    $C$ neither root-touches $\setdif{W}{C}$ nor
    touches $\rdxort{(\setdif{W}{C})}$.
\end{algorithmic}
\caption{The orthogonalization algorithm for weakly orthogonal non-collapsing iTRSs}
  \label{fig:orthogonalization:procedure}
\end{figure}
\begin{remark}
  Visually one can think of reading a symbol that is 
  either an $\msf{I}$ or a $\msf{Y}$, by starting at the 
  base of the stem and going upward;
  our algorithm then pretends to be reading an $\msf{I}$ until
  we are forced into giving this up because we \emph{locally}
  detect that the stem forks, and have to admit that we have
  been reading a $\msf{Y}$ all along.
  The local detection of forks is what makes the algorithm
  effective. The fact that steps in \Ycluster{}s must
  be trivial is what makes this pretense to be harmless.
\end{remark} 
The intuition for the variables used in the algorithm is that 
$\setdif{W}{C}$ represents the prefix of redexes that have already been orthogonalized, 
$x$ is a (candidate) representative of a (candidate) set $X$ of overlapping redexes,
and $C$ is the suffix of $W$ of redexes that still have to be orthogonalized;
cf.\ Figure~\ref{fig:line:twelve} where a typical state during a run is presented.
As the condition (Ensure) holds after each iteration of the while--loop 
for the suffix $C$, we will refer to it as the \emph{invariant} below.
It is a post-condition only in the limit, when the suffix $C$ is empty.
The algorithm and its pre- and post-condition (Require and Ensure)
all make use of the prefix order on positions lifted 
in the following ways to (sets of) redexes.
\begin{definition}
  Two redexes are said to \emph{overlap} if (the sets of positions of) their patterns do.
  A redex $\ardx$ is said to be \emph{root-above} a redex $\brdx$ if
  the root of $\ardx$ is $\sposlt$-related to the root of $\brdx$,
  and \emph{above} it if in addition their patterns do not overlap.
  \emph{Root-below} is the converse of root-above and \emph{below} is the converse of above.
  
  Two sets of redexes are said to \emph{overlap} if some redex
  in the one overlaps a redex in the other.  
  A set of redexes $U$ \emph{touches} a set of redexes $V$,
  if there is some redex in $U$ such that its root is
  $\sposle$-related to a position in the pattern of some redex in $V$.
  If in addition the latter holds when restricting to root positions,
  then $U$ \emph{root-touches} $V$.
\end{definition}
Note that if $U$ does not touch $V$, then $U$ also does not root-touch $V$ but not vice versa.
\begin{remark}
  There are various equivalent ways to define that $U$ touches $V$.
  One would be to say that $U$ overlaps the prefix-closure of $V$.
  Another would be to say that the subterms contracted by $U$
  overlap $V$. These alternative definitions illustrate that the
  notion of `touch' also covers the case that a redex in $U$ is above a redex in $V$.
\end{remark}
Below we will confuse $\srdxort$ in the orthogonalization algorithm with its extension 
to the whole of $W$ (by mapping elements in $C$ to themselves) in its invariant.
\begin{example}
  To illustrate the algorithm we apply it to the earlier examples.
  \begin{enumerate}[label=(\roman*)]
  \item    
    For ease of reference, let $u_i$ ($v_i$) be the $i$th blue (green) redex from the top
    of the infinite chain $W = \setstr{ u_1,v_1,u_2,v_2, \ldots}$ of blue and green redexes,
    in Figure~9.    
    Executing the algorithm successively leads to to $C = W$,
    $x = u_1$, and first $X = \setstr{ u_1,v_1 }$ and next $X = \setstr{ u_1,v_1,u_2 }$.
    Then since $X$ does not contain parallel redexes and only $u_1$ is above
    all redexes that are below some redex, i.e.\ above $u_2$,
    we let $\srdxort$ map the redexes overlapping $u_1$, i.e.\ $u_1$ itself and $v_1$,
    both to $u_1$.
    
    In the next iteration of the loop $C = \setstr{ u_2,v_2, \ldots}$,
    and one proceeds analogously, resulting in that 
    $\srdxort$ maps $u_2$,$v_2$ both to $u_2$.
    
    In the limit, $\srdxort$ maps $u_i$ and $v_i$ to $u_i$, for all $i$,
    i.e.\ $\srdxort$ maps $W$ onto the set $\setstr{u_1,u_2,\ldots}$ (of blue redexes),
    which is seen to constitute an orthogonal subset of $W$ indeed.
  \item
    Consider Figure~\ref{fig:cases:orthogonalization} for $W$ the subset of $\setstr{u,m,v,w}$ appropriate to each case.
   \begin{enumerate}[label=(\roman*)]
    \item
      This leads successively to $C = \setstr{ u,v }$, $x = u$ and $X = \setstr{ u,v }$.
      Then since $X$ does not contain parallel redexes, 
      $\srdxort$ will select one of $u$,$v$ to map both to.
      Thus $\rdxort{\setstr{u,v}}$ is a singleton, hence orthogonal.
    \item
      This leads successively to $C = \setstr{ u,v,w }$, $x = u$ and $X = \setstr{ u,v,w }$.
      Then since $X$ contains the parallel redexes $v$,$w$,
      the map $\srdxort$ is taken to be undefined on $X$.    
      Thus $\rdxort{\setstr{u,v,w}} = \emptyset$;
    \item
      This leads successively to $C = \setstr{ u,v,w }$, $x = u$ and $X = \setstr{ u,v,w }$.
      Then since $X$ does not contain parallel redexes, and only $v$ is above all redexes
      that are below some redex, i.e.\ above $w$, 
      we let $\srdxort$ map both redexes overlapping $v$, i.e.\ $u$ and $v$, to $v$.      
      In the next iteration of the loop we successively have
      $C = \setstr{ w }$, $x = w$ and $X = \setstr{ w }$, leading
      to $w$ being mapped onto itself by $\srdxort$.      
      Thus $\rdxort{\setstr{ u,v,w }} = \setstr{ v,w }$;
    \item
      This leads successively to $C = \setstr{ u,v,w }$, $x = u$ and first $X = \setstr{ u,v,w }$,
      and next $X = \setstr{ u,v,w,m }$.
      Then since $X$ contains the parallel redexes $w$,$m$,
      the map $\srdxort$ is taken to be undefined on $X$, so 
      $\rdxort{\setstr{ u,v,w,m }} = \emptyset$.
    \end{enumerate} 
  \end{enumerate}
\end{example}

\noindent Correctness of this intuition is established by the following theorem, 
the proof of which depends on the Finite Jump Developments 
Theorem~\cite[Proposition~12.5.9]{terese:2003}, expressing (among others)
that all ways of developing a multi-redex result in the same term.
The theorem applies since developments of multi-redexes in a left-linear, not-necessarily
orthogonal, iTRS can be seen as developments of a suitable labelled version of
it that \emph{is} orthogonal (see Remark~6.3), and non-collapsingness guarantees
it has \emph{finite jumps} (see~\cite{terese:2003}).
\begin{theorem}
  The orthogonalization algorithm is correct in the sense that
  it produces an orthogonalization $\srdxort$ of the set $W$ of redexes in $t$.
\end{theorem}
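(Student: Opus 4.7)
The plan is to prove the Ensure clause as a loop invariant by induction on the number of completed iterations, and then pass to the limit. The base case is immediate: initially $W \setminus C = \emptyset$, so the empty partial function is vacuously an $\emptyset$-orthogonalization of $W$, and the touching conditions on $C = W$ hold vacuously. For the inductive step, since the signature has finite arities, only finitely many positions of $t$ lie at any given depth, so a redex $x$ of minimal depth in $C$ exists; the sets computed in lines 5--6 are the one- and two-step overlap neighborhoods of $x$ inside $C$.

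For the case split I would use the cluster-level lemmas already established. In the Y-cluster branch (line 7), two parallel redexes detected in $X$ together with $x$ sit inside a single Y-cluster of $t$; by Definition~\ref{def:trivial} this cluster is trivial, and by Lemma~\ref{lem:noeffect} the redexes of $X$ may be dropped from any multi-redex $V \subseteq W$ without changing the target of its development, which delivers the preservation-of-reductions clause of the invariant under the undefined update. The image $\rdxort{(W \setminus C_{\text{new}})}$ is unchanged and therefore still orthogonal. In the I-cluster branch (line 9), the absence of parallel redexes forces the roots of redexes in $X$ to lie linearly along a single root-path. The choice of a topmost redex $x'$ among those of $X$ that still have something strictly below them in $X$, together with the ensuing restriction of $X$ to overlaps of $x'$, selects a single representative of a sub-cluster; by Lemma~\ref{lem:cluster:step} all redexes in that sub-cluster induce the same step, so setting $\srdxort$ to $x'$ on this sub-cluster preserves reductions while adding only the one new redex $x'$ to the image. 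By the touching invariant $(d)$ the old $C$ did not touch $\rdxort{(W \setminus C)}$, so in particular $x'$ does not overlap any element of the old image and $\rdxort{(W \setminus C_{\text{new}})}$ stays orthogonal.

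The hard part will be re-establishing the touching invariants $(c)$ and $(d)$ for the updated pair $C_{\text{new}}$ and $\rdxort{(W \setminus C_{\text{new}})}$. The root-touching invariant $(c)$ follows from the minimal-depth choice of $x$ together with the fact that in case 2 the added $x'$ is topmost in $X$; for $(d)$, any $r \in C_{\text{new}}$ overlapping the added $x'$ would already have been caught in the line-6 neighborhood and ultimately in the restricted $X$, contradicting $r \notin X$. For the limit argument I would observe that the depth of the selected $x$ is non-decreasing across iterations, only finitely many redexes occur at any depth, and each iteration removes at least one redex from $C$, so every redex of $W$ is processed within finitely many stages and in the limit $C = \emptyset$. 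Assembling the stagewise preservation of reductions into a single development $\iarsdeva{\rdxort{V}}{t}{s}$ for an arbitrary multi-redex $V \subseteq W$ with $\iarsdeva{V}{t}{s}$ is then handled by the Finite Jump Developments Theorem invoked before the statement, which applies because non-collapsingness guarantees that the underlying labelled orthogonal iTRS has finite jumps.
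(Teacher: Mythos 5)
Your proposal is correct and follows essentially the same route as the paper's proof: establishing the Ensure clause as a loop invariant (with the same base case, the same Y-cluster/I-cluster case split resting on Lemma~\ref{lem:noeffect} and on weak orthogonality plus the Finite Jump Developments Theorem, and the same touching/root-touching bookkeeping), followed by the same productivity argument via minimal-depth selection and finite arities. The only places where the paper is more explicit are the verification that the line-10 representative exists and lies in the first-degree neighborhood, and the observation that a multi-redex can meet the mutually overlapping set $X$ in at most one redex before the development is re-assembled---both of which are refinements of, not departures from, your plan.
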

\begin{proof}
  We first show that the algorithm is partially correct,
  i.e.\ the invariant is ensured in each iteration of the while loop,
  and then that the algorithm is productive,
  i.e.\ that $\srdxort$ is an orthogonalization of an ever growing prefix of $W$.
  This shows correctness as the invariant ensures that
  in the limit $\srdxort$ is an orthogonalization of the whole of $W$, as desired.

  For ease of reference we use a variable indexed by a line number
  to refer to the value of that variable just before executing the 
  statement on that line in the orthogonalization algorithm.
  For instance, $X_8$ is value of the variable $X$ just before executing
  the statement on line 8 (then $X_8$ contains parallel redexes).
  We follow the structure of the algorithm.
  \begin{itemize}[label=(3--15)]
  \item[(1--2)]
    When entering the while loop for the first time, the invariant is trivially ensured:
    Since $C_3 = W$ by line 1, and $\srdxort_3$ is the partial function on the empty domain by line 2,
    the orthogonalization is the identity function on the whole of $W$ (`it doesn't orthogonalize anything yet').
    Since both the orthogonalized prefix $\setdif{W}{C_3}$ and 
    its orthogonalization $(\setdif{W}{C_3})^{\srdxort_3}$ are empty, 
    there's nothing to be touched in them.
  \item[(3--15)]  
    Supposing the invariant is ensured when starting an iteration of the while loop,
    i.e.\ $\srdxort_4$ is a $(\setdif{W}{C_4})$-orthogonalization,
    $C_4$ neither root-touches $\setdif{W}{C_4}$ nor
    touches $\rdxort{(\setdif{W}{C_4})}$,
    we have to show it is also ensured at the end of that iteration,
    i.e.\ $\srdxort_{15}$ is a $(\setdif{W}{C_{15}})$-orthogonalization,
    $C_{15}$ neither root-touches $\setdif{W}{C_{15}}$ nor
    touches $\rdxort{(\setdif{W}{C_{15}})}$.
    \begin{itemize}
    \item[(4)]
      Executing line 4, we have $x_5$ is a redex at minimal depth in $C_4$ (it exists by $C_4$ being non-empty).
      Thus, no redex $\ardx$ in $C_4$ is above $x_5$ and $\ardx$ overlaps $x_5$,
      then the root of $x_5$ is $\sposle$-related to the root of $\ardx$.
    \item[(5)]
      Executing line 5, we have that 
      $X_6$ is the set of redexes in $C_4$ that overlap $x_5$.
      We claim that $\setdif{C_4}{X_6}$ does not root-touch $X_6$, hence also 
      does not root-touch $\setdif{W}{(\setdif{C_4}{X_6})}$.
      For a proof by contradiction of the first part of the claim,
      suppose there were a redex $\ardx$ in $\setdif{C_4}{X_6}$ the
      root of which is $\sposle$-related to the root of a redex $\brdx$ in $X_6$.
      Since by line 4, the root of $x_5$ is $\sposle$-related to that of $\brdx$.
      By convexity of patterns, therefore $\ardx$ is either above $x_5$
      or overlaps it. This yields a contradiction, in the former case
      with minimality of $x_5$ and in the latter case with $\ardx$ not being an element of $X_6$.
      For a proof of the second part of the claim, note that $\setle{X_6}{C_4}$,
      so $\setdif{W}{(\setdif{C_4}{X_6})} = \setlub{(\setdif{W}{C_4})}{X_6}$.
      so it suffices to show $\setdif{C_4}{X}$ touches neither $\setdif{W}{C_4}$ nor $X_6$.
      The former follows by the invariant for $C_4$ and the latter by the first part of the claim.
    \item[(6)]
      Executing line 6, we have that 
      $X_7$ is the set of redexes in $C_4$ that overlap some redex in $X_6$.
      We claim that $\setdif{C_4}{X_7}$ does not root-touch $X_7$, hence also 
      does not root-touch $\setdif{W}{(\setdif{C_4}{X_7})}$.
      For a proof by contradiction of the first part of the claim,
      suppose there were a redex $\ardx$ in $\setdif{C_4}{X_7}$ the
      root of which is $\sposle$-related to the root of a redex $\brdx$ in $\setdif{X_7}$.
      Since $\brdx$ overlaps some redex $\brdx'$ in $X_6$, $\ardx$ 
      is either above $\brdx'$ or overlaps it. 
      This yields a contradiction, in the former case with that $\setdif{C_4}{X_6}$
      does not root-touch $X_6$, in the latter case with $\ardx$ not being an element of $X_7$.
      The second part of the claim, follows as for (the second part of the claim of) line 5. 
    \item[(7--14)]
      We verify the invariant holds after both branches of the if--then--else on line 7,
      i.e.\ by distinguishing cases on whether or not $X_7$ contains parallel redexes.
      \begin{itemize}
      \item[(8,14)]
        Suppose $X_7$ contains parallel redexes. We verify the conditions of the invariant.
        First note that since the redexes in $X_7$ belong to (one and) the same cluster,
        a cluster that contains parallel redexes,
        $X_7$ in fact is a \Ycluster{}.
        
        To show that $C_{15}$ does not root-touch $\setdif{W}{C_{15}}$,
        note that by line 14, $C_{15} = \setdif{C_4}{X_7}$ 
        so we conclude by the above claim for $X_7$.
        
        To show that $C_{15}$ does not touch $(\setdif{W}{C_{15}})^{\srdxort_{15}}$,
        note that the former is contained in $C_{4}$ and the latter
        identical to $(\setdif{W}{C_{4}})^{\srdxort_{4}}$ since
        $(\setdif{W}{C_{15}})^{\srdxort_{15}} =
         \setlub{(\setdif{W}{C_4})^{\srdxort_{15}}}{(X_7^{\srdxort_{15}})}$
        and by line 8, $\srdxort_{15}$ is $\srdxort_{4}$ on $\setdif{W}{C_{4}}$
        and undefined on $X_7$.
        Hence we conclude by the invariant for $C_4$.
        
        To see that $\srdxort_{15}$ is a $(\setdif{W}{C_{15}})$-orthogonalization,
        we use that $(\setdif{W}{C_{15}})^{\srdxort_{15}} =
        (\setdif{W}{C_4})^{\srdxort_4}$ as noted above. 
        As the latter set is a multi-redex by the invariant, so is the former.
        Next suppose $V$ is a multi-redex contained in $W$ such that $\iarsdeva{V}{t}{s}$.
        The maps $\srdxort_{15}$ and $\srdxort_{4}$ only differ on the \Ycluster{},
        mapping the elements of $X_7$ to undefined respectively themselves,
        but by Lemma~7.10 the corresponding steps are trivial and omitting them
        from \Ycluster{}s does not make a difference. 
        Hence we conclude by the invariant for $\srdxort_4$.
      \item[(10--12,14)]
        Suppose the cluster $X_7$ does not contain parallel redexes.
        Then it has a root-path, i.e.\ a path through the roots of all redexes in $X_7$,
        so these redexes are linearly ordered by the prefix order $\sposle$ on their roots.
        This case is illustrated in Figure~\ref{fig:line:twelve} for $X = \setstr{x,m,o}$
        with these redexes linearly ordered as $\posle{x}{\posle{m}{o}}$.

\begin{figure}[h!]
  \begin{center}
  \begin{tikzpicture}
    [inner sep=1.5mm,
     node distance=5mm,
     terminal/.style={
           rectangle,rounded corners=2.25mm,minimum size=4mm,
           very thick,draw=black!50,top color=white,bottom color=black!20,
     },
     redexA/.style={draw=dblue,fill=fblue,opacity=\oblue,thick},
     redexB/.style={draw=dgreen,fill=fgreen,opacity=\ogreen,thick},
     >=stealth,xscale=.75,yscale=.75]
  
    \begin{scope}[xshift=15cm,yshift=0cm]
      \draw (0,0) -- (4,8) -- (8,0) -- cycle;
      \draw[dashed] (1,4.5) to[bend right=20] (7,4.5);
      \draw[dashed] (.3,2.5) to[bend right=20] (7.7,2.5);
      \fill[redexA,xshift=4cm,yshift=7.8cm,scale=.7] (0,0) -- (-1.4,-2.8) -- (0.85,-1.7) -- cycle;
      \fill[redexB,xshift=4.5cm,yshift=6cm,scale=.6] (0,0) -- (-0.7,-1.2) -- (1.5,-2.6) -- cycle;
      \fill[redexA,xshift=3.3cm,yshift=5cm,scale=.7] (0,0) -- (-1,-1) -- (1,-1) -- cycle;

      \fill[redexA,xshift=3.3cm,yshift=3.85cm,scale=.9] (0,0) -- (-1.4,-1.6) -- (1.2,-2.2) -- cycle;
      \fill[redexB,xshift=3.0cm,yshift=2.7cm,scale=.7] (0,0) -- (-0.7,-1.8) -- (1.5,-2.6) -- cycle;
      \fill[redexA,xshift=2.9cm,yshift=1.7cm,scale=.8] (0,0) -- (-.7,-1.5) -- (1,-1.3) -- cycle;

      \fill[redexB,xshift=5.7cm,yshift=1.8cm,scale=.6] (0,0) -- (-0.7,-1.8) -- (2.2,-1.9) -- cycle;
      \fill[redexA,xshift=5.6cm,yshift=1.1cm,scale=.6] (0,0) -- (-.7,-1.5) -- (.3,-1.3) -- cycle;
      \fill[redexA,xshift=6.2cm,yshift=1.1cm,scale=.6] (0,0) -- (-.2,-1.5) -- (1.3,-1.3) -- cycle;

      \node [anchor=west] at (7.5,1.5) {$\setdif{C}{X}$, remainder};
      \node [anchor=west] at (6.5,3.6) {$X = \setstr{x,m}$, $\rdxort{X} = x$, orthogonalization};
      \node [anchor=west] at (5.3,6) {$\rdxort{(\setdif{W}{C})}$, orthogonal, overlaps resolved};
      \node [terminal] at (3.2,3) {$x$};
      \node [terminal] at (4.0,1.2) {$m$};
      \node [terminal] at (2.9,0.7) {$o$};
      
      \node [anchor=east] at (1.05,4.3) {no touch};
      \node [anchor=east] at (.35,2.3) {no root-touch};

    \end{scope}
  \end{tikzpicture}
  \caption{\textit{State of orthogonalization algorithm at end of line 12.}}
  \label{fig:line:twelve}
  \end{center}
\end{figure}
        \begin{itemize}
        \item[(10)]
          Executing line 10 entails, we claim, that $x_{11}$ exists and $x_{11}$ is an element of $X_6$.
          To that end, let $L$ be the set of redexes in $X_7$ that are below some redex in $X_7$ 
          (one may think of $L$ as the set of `lower bounds').       
          If $L$ is non-empty then let $\ardx$ be a redex in $L$ the root of which is $\sposle$-least;
          it exists by these roots all being on the root-path, but there may be several such. 
          Now take $x_{11}$ to be any redex in $X_7$ that is above $\ardx$ in case $L$ is non-empty.
          To see that $x_{11}$ is an element of $X_6$ suppose that, 
          to the contrary, $x_{11}$ were an element of $\setdif{X_7}{X_6}$,
          i.e.\ that would overlap some redex in $X_6$ that overlaps $x_5$, but that itself would not overlap $x_5$.
          Then by convexity of patterns it would be below $x_5$, 
          hence itself be an element of $L$, so be below itself; a contradiction.
        \item[(11)]
          Executing line 11 entails, we claim, that all elements of $X_{12}$ overlap each other,
          so in particular $x_5$ (from below, by minimality of $x_5$) hence $\setle{X_{12}}{X_6}$, and that
          $\setdif{C_4}{X_{12}}$ does not root-touch $X_{12}$.
          To see that all elements of $X_{12}$ overlap each other, 
          let $\ardx$,$\brdx$ be arbitrary redexes in $X_{12}$ for which we may assume
          without loss of generality that the root of $\ardx$ is $\sposle$-related to that of $\brdx$,
          by linearity. Thus if $\brdx$ were not overlapping $\ardx$, it would be below it.
          Therefore, it would be in the set $L$ and be below $x_{11}$ . 
          But since $\brdx$ was assumed an element of $X_{12}$ it overlap $x_{11}$; a contradiction.
          To see that $\setdif{C_4}{X_{12}}$ does not root-touch $X_{12}$,
          note that since by the claim for line 5, $\setdif{C_4}{X_6}$ does not root-touch $X_6$,
          it suffices by $\setle{X_{12}}{X_6}$ to show that 
          $\setdif{X_6}{X_{12}}$ does not root-touch $X_{12}$. 
          This holds since if a redex $\ardx$ in $\setdif{X_6}{X_{12}}$ were to root-touch an element
          of $X_{12}$ then it would overlap it, hence by the first part of the claim overlap $x_{11}$,
          hence be an element of $X_{12}$; contradiction.
        \item[(12,14)]
          To show that $C_{15}$ does not root-touch $\setdif{W}{C_{15}}$,
          observe that as by line 14, $C_{15} = \setdif{C_4}{X_{12}}$
          and by the above $\setle{\setle{X_{12}}{X_7}}{C_4}$, 
          we have that 
          $\setdif{W}{C_{15}} = \setdif{W}{(\setdif{C_4}{X_{12}})} = \setlub{(\setdif{W}{C_4})}{X_{12}}$
          with the latter a disjoint union.
          Since by the invariant $C_4$ does not root-touch $\setdif{W}{C_4}$,
          $\setdif{C_4}{X_{12}}$ does not do so either, and it remains
          to show that the latter does not root-touch $X_{12}$, which follows from the claim for line 11.
        
          To show that $C_{15}$ does not touch $(\setdif{W}{C_{15}})^{\srdxort_{15}}$,
          we use again $C_{15} = \setdif{C_4}{X_{12}}$
          and $(\setdif{W}{C_{15}})^{\srdxort_{15}} =
          \setlub{(\setdif{W}{C_4})^{\srdxort_{15}}}{X_{12}^{\srdxort_{15}}} =
          \setlub{(\setdif{W}{C_4})^{\srdxort_{4}}}{\setstr{x_{11}}}$.
          By the invariant $C_4$ does not touch $(\setdif{W}{C_4})^{\srdxort_{4}}$,
          so certainly $\setdif{C_4}{X_{12}}$ doesn't either, and it remains
          to show that the latter does not touch $\setstr{x_{11}}$.
          Suppose to the contrary that the root of some redex $\ardx$ in $\setdif{C_4}{X_{12}}$
          were $\sposle$-related to a position in the pattern of $x_{11}$.
          Then the redex $\ardx$ would either overlap $x_{11}$ or be above it.
          But $\ardx$ cannot overlap $x_{11}$ as then it would be in $X_{12}$ as well,
          and it cannot be above it as this would contradict the choice of $x_{11}$.
        
          To see that $\srdxort_{15}$ is a $(\setdif{W}{C_{15}})$-orthogonalization,
          we use $C_{15} = \setdif{C_4}{X_{12}}$,
          $\setdif{W}{C_{15}} = \setlub{(\setdif{W}{C_4})}{X_{12}}$, and
          $(\setdif{W}{C_{15}})^{\srdxort_{15}} =  \setlub{(\setdif{W}{C_4})^{\srdxort_{4}}}{\setstr{x_{11}}}$.
          Since the left part is orthogonal by the invariant, and the right part by being a singleton,
          it suffices that $x_{11}$ be orthogonal to $(\setdif{W}{C_4})^{\srdxort_{4}}$ which follows
          from the invariant as $x_{11}$ is an element of $C_4$ hence below $(\setdif{W}{C_4})^{\srdxort_{4}}$.
          
          Finally, suppose $V$ is a multi-redex contained in $W$ such that $\iarsdeva{V}{t}{s}$.
          We may partition $V$ into the multi-redexes
          $V_1 = \setglb{V}{(\setdif{C_4}{X_{12}})}$,
          $V_2 = \setglb{V}{X_{12}}$ and
          $V_3 = \setglb{V}{(\setdif{W}{C_4})}$.
          By the claim for line 11, all redexes in $X_{12}$ overlap each other,
          so $V_2$ being a multi-redex, it can only be the empty set or a singleton set.
          By the invariant for $\srdxort_4$,
          $\iarsdeva{V^{\srdxort_4}}{t}{s}$,
          Hence if $V_2$ is the empty set, then $V^{\srdxort_{15}} = V^{\srdxort_4}$
          and we conclude immediately.
          Otherwise $V_2$ is a singleton set, say $\setstr{v}$, and  
          $\srdxort_{15}$ differs from $\srdxort_4$ only in that the
          former maps $v$ to $x_{11}$ whereas the latter maps it to itself.
          Thus,
          $V^{\srdxort_4} = \setlub{V_1^{\srdxort_{15}}}{\setlub{\setstr{v}}{V_3^{\srdxort_{15}}}}$
          which by the Finite Jump Developments Theorem may be developed as 
          \[ \relap{\relsco{\aiarsdev{V_1^{\srdxort_{15}}}}{\relsco{\aiarsdev{\setstr{v}}}{\aiarsdev{V_3^{\srdxort_{15}}}}}}{t}{s} \]
          as the redexes in $V_1$ do not touch redexes in $V_2$, which in turn do not touch redexes in $V_3$.
          Per construction $v$ overlaps $x_{11}$ hence by weak orthogonality\footnote{%
This is the only place where weak orthogonality is used in the proof.
}         both induce the same 
          step and still do so after contracting $V_1^{\srdxort_{15}}$,
          since $\setdif{C_4}{X_{12}}$ does not overlap $X_{12}$ and in particular does not overlap $x_{11}$.
          Therefore 
          \[ \relap{\relsco{\aiarsdev{V_1^{\srdxort_{15}}}}{\relsco{\aiarsdev{\setstr{x_{11}}}}{\aiarsdev{V_3^{\srdxort_{15}}}}}}{t}{s} \]
          which by the Finite Jump Developments Theorem again, just is a development of the multi-redex
          $V^{\srdxort_{15}}$.
        \end{itemize}
      \end{itemize}
    \end{itemize}
  \end{itemize}
  It remains to show the algorithm is productive. This follows from that in each iteration of the
  loop a redex of minimal depth is selected, all redexes at that position (and usually more) 
  are removed from $C$ and we assume our signature to have finite arities, so only finitely
  many iterations take place at any given depth.
\end{proof}
\begin{corollary} \label{cor:effective:diamond:triangle}
  For weakly orthogonal non-collapsing TRSs, the orthogonalization algorithm is effective
  \begin{itemize}
  \item
    for the diamond property, if left-hand sides of rules are finite;
  \item
    for the triangle property, if any given redex may only be overlapped by finitely many other redexes;
  \end{itemize}    
\end{corollary}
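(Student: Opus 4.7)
The plan is to establish that, under the respective hypotheses, each single iteration of the while loop of the orthogonalization algorithm of Figure~\ref{fig:orthogonalization:procedure} is computable by inspecting only finitely many redexes. Correctness has already been shown above; combining the two claims gives an effective $\omega$-step procedure whose partial output coincides with the full orthogonalization on every fixed initial segment of $W$ after finitely many iterations, and hence, via Theorem~\ref{thm:diamond:effective}, effectively produces the common reduct in the diamond case and the joining term $\trmfula{\atrm}$ in the triangle case.

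First I would verify that the minimum-depth selection on line~4 is effective. Because an iTRS in this paper has a finite signature and finitely many rules, every depth $d \in \nat$ hosts only finitely many redex roots, so enumerating $C$ layer by layer locates a minimum-depth redex in finitely many steps, provided overlap of two redexes is decidable. The latter reduces, when left-hand sides are finite, to matching two finite patterns against the (possibly infinite but computable) term $\atrm$, and is therefore decidable.

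For the diamond case, $W = U \cup V$ is the union of two multi-redexes, so within each component no overlaps occur and every overlap in $W$ is between $U$ and $V$. Given a redex $\ardx$ at position $p$ with a finite left-hand side of pattern depth $d$, any other redex overlapping $\ardx$ must have its root within distance $d$ of $p$ (either inside the pattern of $\ardx$, or above it within the pattern depth of the other redex), and at each such candidate position the number of possible redexes is bounded by the number of rules. Consequently $X_6$ on line~5 and its second-degree expansion $X_7$ on line~6 are finite sets, the parallel-redex test on line~7 is decidable, and locating $x_{11}$ on lines~10--11 is a search in a finite set. Each iteration is therefore effective.

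For the triangle case, $W$ consists of every redex of $\atrm$ and left-hand sides may be arbitrarily large; the explicit hypothesis that any redex is overlapped by only finitely many others immediately bounds $X_6$ and $X_7$, so the same reasoning applies. The only potential obstacle is that $C$ may never become empty, so the algorithm genuinely runs for $\omega$ steps; this is handled by the invariant of the preceding theorem, which guarantees that after $k$ iterations the constructed partial function is an orthogonalization of the prefix $W \setminus C_k$, and therefore in the limit orthogonalizes all of $W$, which is exactly what the application in Theorem~\ref{thm:diamond:effective} requires.
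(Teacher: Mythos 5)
Your proposal is correct and follows essentially the same route as the paper: both proofs reduce effectiveness to the observation that the set $X$ of first- and second-degree overlapping redexes computed in each iteration of the while loop is finite (via finiteness of left-hand sides together with the $U\cup V$ structure in the diamond case, and via the explicit finite-overlap hypothesis in the triangle case), with correctness and productivity already supplied by the preceding theorem. Your additional remarks on the decidability of overlap, the effectiveness of the minimal-depth selection, and the $\omega$-limit behaviour are elaborations of points the paper leaves implicit rather than a different argument.
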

\begin{proof}
  The construction in the algorithm is based on computing a set $X$ of all redexes that are first or 
  second degree overlapping with a given redex $x$. 
  
  In the case of the diamond property the orthogonalization is
  performed relative to a set $W$ which is the union of two multi-redexes $U$ and $V$, and finiteness of left-hand sides
  then guarantees that $X$ itself is finite, since if, say $x$ is in $U$ then only finitely many redexes,
  all in $V$ except for the redex itself, can be overlapping with it, and in turn only finitely many redexes,
  all in $U$ except for the redexes themselves, can be overlapping with those.
  
  In the case of the triangle property, the ambient set $W$ with respect which 
  orthogonalization takes place is the set of all redexes of a given term,
  and the condition then again guarantees that the set $X$ can be produced.  
\end{proof}

\section{Conclusions}\label{sec:conclusion}

We have shown the failure of $\UNinf$ for weakly orthogonal iTRSs
in the presence of two collapsing rules.
For \woTRS{s} without collapsing rules we proved that $\CRinf$ (and hence $\UNinf$) holds,
and that this result is optimal in the sense that
allowing only one collapsing rule is able to invalidate $\CRinf$.
For these results we have refined two well-known theorems of infinitary rewriting 
with respect to the minimal depth of the steps involved:
\begin{enumerate}
  \item 
    the compression lemma (see Theorem~\ref{thm:compression}), and
  \item 
    the parallel moves lemma (see Lemma~\ref{lem:pml}).
\end{enumerate}  
The refined version of the compression lemma is employed to establish compression also 
for divergent reductions (see Corollary~\ref{cor:comp:div:seqs}).
The proof of this theorem uses a slightly simpler construction than the proof of the compression lemma in~\cite{terese:2003}
(compare Figure~\ref{fig:compression} with~\cite[Figure 12.8]{terese:2003}).
The refined compression lemma is also used to establish the refined version of the parallel moves lemma,
which in turn is used in the proof of infinitary confluence of weakly orthogonal rewrite systems 
without collapsing rules (Theorem~\ref{thm:cr}).

Furthermore, we have shown that infinitary developments in \woTRS{s} without collapsing rules
have the diamond property. In general this property fails already in the presence
of just one collapsing rule.

Apart from this diamond property in itself, 
for which our paper does not yet give an application, 
we point out that here the employed \emph{technique of orthogonalization} 
is the notable contribution. 
Indeed we envisage future elaborations that establish cofinal reduction stategies 
(for finite rewrite sequences on possibly infinite terms) 
in the current setting of infinitary rewriting with weakly orthogonal systems, 
and we expect that such applications will crucially hinge upon 
the use of the orthogonalization technique as presented.

The following table summarizes the results of this paper 
(coloured green) next to known results (black):
\begin{center}
\scalebox{0.7}{{\large {
\definecolor{mblue}{RGB}{186,178,255}
\definecolor{mgreen}{RGB}{0,128,0}
\newcommand{\yes}{yes}
\newcommand{\no}{no}
\newcommand{\ouryes}{\textcolor{mgreen}{\textbf{\yes}}}
\newcommand{\ourno}{\textcolor{mgreen}{\textbf{\no}}}
\newcommand{\ourq}{\textcolor{mgreen}{\textbf{?}}}
\newcommand{\yesorno}{\yes/\no}
\begin{tikzpicture}[thick,xscale=1.5,yscale=0.7]
  \foreach \x in {0,...,7} {
    \foreach \y in {0,...,7} {
      \draw (\x cm,\y cm) rectangle +(1cm,1cm);
    }
  }
  \foreach \x in {0,...,3} { \foreach \y in {2,...,7} { \node at (0.5+\x,0.5+\y) {\yes}; } }
  \foreach \x in {1,...,3} { \foreach \y in {0,...,1} { \node at (0.5+\x,0.5+\y) {\yes}; } }
  \foreach \x/\y in {6/2,6/3,7/3} { \node at (0.5+\x,0.5+\y) {\yes}; }

  \foreach \x in {4,...,5} { \foreach \y in {0,...,3} { \node at (0.5+\x,0.5+\y) {\no}; } }
  \foreach \x in {6,...,7} { \foreach \y in {0,...,1} { \node at (0.5+\x,0.5+\y) {\ourno}; } }
  
  \foreach \x/\y/\r in {4/4/\ouryes,5/4/\no,6/4/\ourq,7/4/\ourq,
                        4/5/\ouryes,5/5/\ouryes,6/5/\ouryes,7/5/\ouryes,
                        4/6/\ouryes,5/6/\no,6/6/\ourno,7/6/\ourno,
                        4/7/\yes,5/7/\no,6/7/\yes,7/7/\yes,
                        0/0/\yes,0/1/\yes\footnotemark[2],
                        7/2/\yes} { \node at (0.5+\x,0.5+\y) {\r}; } 
  
  \foreach \x in {0,...,7} { \foreach \y in {8} { \draw [very thick,fill=mblue](\x cm,\y cm) rectangle +(1cm,1cm); } }
  \foreach \x/\r in {0/PML,1/CR,2/UN,3/NF,4/PML$^{\infty}$,5/CR$^{\infty}$,6/UN$^{\infty}$,7/NF$^{\infty}$} { \node at (0.5+\x,8.5) {\r}; }

  \foreach \x in {-1} { \foreach \y in {0,...,7} { \draw [very thick,fill=mblue](-0.5cm + \x cm,\y cm) rectangle +(1.5cm,1cm); } }
  \foreach \y/\r in {0/WOCRS,1/$\lambda\beta\eta$,2/fe-OCRS,3/$\lambda\beta$,4/1c-WOTRS,5/nc-WOTRS,6/WOTRS,7/OTRS} { \node at (-0.75,0.5+\y) {\r}; }

  \begin{scope}[line width=0.7mm,opacity=0.5]
  \draw (4cm,0cm) -- (4cm,9.8cm);
  \draw (-2cm,4cm) -- (8cm,4cm);
  \end{scope}
  \node at (2cm,9.5cm) {\textit{finitary}};
  \node at (6cm,9.5cm) {\textit{infinitary}};
  \node [rotate=90] at (-1.8cm,2cm) {\textit{higher-order}};
  \node [rotate=90] at (-1.8cm,6cm) {\textit{first-order}};
\end{tikzpicture}
}}}
\end{center}
\medskip

\noindent
The nc-WOTRSs are weakly orthogonal TRSs without collapsing rules;
likewise 1c-WOTRSs have one collapsing rule. 
The fe-OCRSs are fully extended orthogonal CRSs, see~\cite{kete:simo:2009},
and WOCRSs are weakly orthogonal CRSs~\cite{klop:1980}.

The properties of infinitary $\lambda$-calculus summarised in this table concern 
the infinitary calculus arising from the 
standard depth measure where the depth of a symbol occurrence is the length of its position (often referred to as metric $111$).
There are variants of infinitary $\lambda$-calculus
based on different depth measures and corresponding metrics, see further~\cite{endr:hend:klop:2012}.
For these variants the properties can differ, 
for example infinitary $\lambda\beta\eta$ with depth measure $001$
has the properties UN$^{\infty}$ and NF$^{\infty}$ as a consequence of results in~\cite{seve:vrie:2002}.

The failure of $\UNinf$ for two collapsing rules raises the following question, 
as indicated in the table above:
\begin{question}\label{q:one}
  Does $\UNinf$ hold for \woTRS{s} with \emph{one} collapsing rule?
\end{question}

\footnotetext[2]{%
  Beware: in~\cite{beth:klop:vrij:2000} a counterexample is given to the Parallel Moves Lemma PML
  for $\lambda\beta\eta$, but that pertains to the stronger (classical) version of PML
  where the `parallel move' has to consist of contractions of `residuals' 
  of the originally contracted redex.
}

\bibliographystyle{alpha}
\bibliography{main}

\end{document}